\numberwithin{equation}{section}
\newtheorem{Thm}{Theorem}[section]
\newtheorem{Def}[Thm]{Definition}
\newtheorem{Rk}[Thm]{Remark}
\newtheorem{Con}[Thm]{Conjecture}
\newtheorem{Not}[Thm]{Notation}
\def\bmt{\left[\begin{array}}
\def\emt{\end{array}\right]}
\def\Z{\mathbb Z}
\def\T{\mathbb T}
\def\C{\mathbb C}
\def\R{\mathbb R}
\def\eps{\varepsilon}
\def\al{\alpha}
\def\dt{\delta}
\def\lb{\lambda}
\title[Arnold diffusion and blackholes]{Arnold diffusion and geodesic dynamics of blackholes}
\author{Jinxin Xue}
\date\today
\begin{document}
\address{Jingzhai 310, Department of mathematics and Yau Mathematical Sciences Center, Tsinghua University, Beijing,100084 }
\email{jxue@tsinghua.edu.cn}%
\maketitle
\begin{abstract}
In this paper, we study the chaotic motion of a massive particle moving in a perturbed Schwarzschild or Kerr background.  We discover three novel orbits that do not exist in the unperturbed cases. First, we find zoom-whirl orbits moving around the photon shell which simultaneously exhibits Arnold diffusion: large oscillations of particle's angular momentum and energy.  Next, we show the existence of oscillating orbits between a bounded region and infinity, analogous to Newtonian three-body problem. Thirdly, we find that in perturbed Kerr, there exists chaotic orbits around the event horizon that escapes the event horizon after approaching it.


\end{abstract}

\tableofcontents
\renewcommand\contentsname{Index}

\section{Introduction}

The main theme of this paper is to study the dynamics of a particle moving in a blackhole background. The most important blackhole models are Schwarzschild and Kerr metrics. They are integrable in the sense that we can find four constants of motions for either of them and the equations of motion can be integrated. In reality, the spacetime is always perturbed, so it is reasonable to perform a perturbative analysis for the dynamics of particle moving in a perturbed Schwarzschild and Kerr background. Ignoring the radiation effects, we model such systems as the nearly integrable Hamiltonian systems, which has a very well developed theory primarily for the purpose of studying the Newtonian $N$-body problem. There are two facets of this theory, the stability part and the instability part. On the one hand, the Kolmogorov-Arnold-Moser theory says that most bounded motions remain quasiperiodic when slightly perturbed. On the other hand, Poincar\'e discovered that in general a small perturbation will create ``chaos" hence lead to complicated instability behaviors. This mechanism was first discovered by Poincar\'e when studying Newtonian three-body problem and we will explain it in Section \ref{SSArnold} and Appendix \ref{AppPoincare}. The chaotic dynamics in general relativity has been studied by many authors (c.f. \cite{Bo,GAC,KC,Mo,SS} etc). We study the stability part of geodesic dynamics of blackholes in a companion paper \cite{X}. In the current paper, we focus on the instability part.

The general principle is first to find a hyperbolic fixed point and its associated homoclinic orbit in the unperturbed system, then a generic small time periodic perturbation will create chaos following Poincar\'e.

The first prominent place to look for the hyperbolic fixed point is the bound photon orbits. For massless particles, these are orbits moving around the blackhole with constant radius and unstable under radial perturbations. A supermassive blackhole in the galaxy M87 has been observed recently \cite{M87,N} by the  Event Horizon Telescope that exhibits a bright ring. If we model the blackhole by Kerr spacetime, then the ring is a neighborhood of the set of  bound photon orbits \cite{J}. In this paper we study only the dynamics of massive particles, instead of massless ones, since for the former there are other important bound orbits such as homoclinic orbits other than the bound spherical orbits. We use the terminology {\it photon shell} to call the set of bound spherical orbits unstable under radial perturbations, for both massive and massless cases and both Schwarzschild and Kerr cases. Moreover, the photon shell will be only slightly perturbed under metric perturbations, so for perturbed Schwarzschild or Kerr, we also use the terminology {\it photon shell} to call the set of bound orbits unstable under radial perturbations. There are two other places to look for the hyperbolic fixed point that are the infinity as well as the event horizon. Utilizing these hyperbolicity, we discover some remarkable new phenomena in the perturbed Schwarzchild and Kerr that do not exist in the unperturbed systems. These include
\begin{enumerate}
\item Arnold diffusion type orbit with large oscillation of angular momentum $L_z$ as well as other variables;
\item Oscillatory orbits between a bounded domain and infinity;
\item Chaotic orbits near the event horizon without falling into the event horizon. 
\end{enumerate}

In this paper, we first prove the existence of Arnold diffusion in perturbed Schwarzschild and Kerr metrics.  Arnold diffusion is a prominent manifestation of such instability behaviors. It says that if an integrable system is generically perturbed, then one can find an orbit along which the conserved quantities undergoes an oscillation of order 1, regardless the size of the perturbation. Moreover, the orbit can be almost dense on the part of phase space of bounded motions. Hence Arnold diffusion has global nature, and to find such a diffusing orbit, one has to organize local chaotic behaviors to form large scale instability. In the construction, we utilize the photon shell and homoclinic orbits to it (orbits approaching the photon shell in both future and past), so in physical space the diffusing orbit has a zoom-whirl behavior (see Figure 1 and \cite{GK,HLS} for the term), i.e. the orbit zooms out into quasi-elliptical laves to apastron and then performs multiple whirls along a photon shell orbit before zooming out again. In both cases, we find orbits whose $z$-component of the angular momentum, denoted by $L_z$, undergoes a big oscillation during the process of zoom-whirl.  For Schwarzschild, recalling that all orbits of the unperturbed system necessarily lie on a fixed plane passing through the origin, while our diffusing orbit will slowly change its orbital plane when zoom-whirling, and may even visit any $\dt$-ball of the photon sphere. For Kerr, the diffusing orbit will shadow orbits on the photon shell with different radii and simultaneously the latitudinal range of the orbit will also change significantly. If perturbations depending  periodically on the coordinate time $\tau$ are considered, then Arnold diffusion orbit with big oscillations of the particle's energy $E$ can also be found.  Our study does not take into account of the radiation effect. If the radiation were considered, then the moving particle tends to lose energy, but we expect that the mechanism of Arnold diffusion may be exploited for the particle to gain energy and to avoid falling into the blackhole. 

Similar strategy also allows us to find oscillatory orbits in perturbed Schwarzschild and Kerr metrics. These are orbits that oscillates between infinity and a bounded domain, which are known to exist in Newtonian three-body problem.

Furthermore, the event horizon is a coordinate singularity where the azimuthal dynamics ($\dot\varphi$) becomes singular (infinite spin) while the radial dynamics $(\dot r)$ is regular. When we examine the dynamics of $\frac{dr}{d\varphi}$, we see that the outer event horizon for Kerr becomes a hyperbolic fixed point with an associated homoclinic orbit. When generically perturbed, Poincar\'e's theorem applies to yield chaotic motions around the event horizon, in particular orbits visiting a neighborhood of the event horizon infinitely often without falling into it.

\begin{figure}
\includegraphics[width=3.5in]{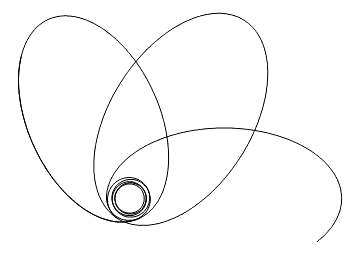}
\label{fig:zoomwhirl}
\caption{Zoom-whirl orbit}
\end{figure}

In the field of dynamical systems, there are various notions of stability, among which there are two major ones: orbital stability and the structural stability. The orbital stability refers to the fact that a small perturbation of the initial condition does not lead to significant divergent of the future orbits, while the structural stability means that the dynamical system after a small perturbation can be conjugate to the unperturbed one. This two notions usually differ drastically. For example, Arnold cat map $\left(\begin{array}{cc}2&1\\
1&1\end{array}\right):\ \T^2\to \T^2$ has exponential divergent behaviors for typical pair of initial conditions but it is structurally stable under $C^1$ small perturbations.  The study of the stability of the classical metrics such as Minkowski/Schwarzschild/Kerr is a fascinating field. We refer readers to \cite{RW,C,CK,KS,GKS} for some classical works, recent advances and more references therein.  The notion of stability in these researches is closer to the structural stability. In this paper, we complement them by examining the ``orbital stability problem" for the perturbed Schwarzschild/Kerr spacetime in general relativity.




 \subsection{Arnold diffusion around the photon shell}
 \subsubsection{Arnold's conjecture}
By Liouville-Arnold Theorem (see Theorem \ref{ThmLA}), an integrable system in the part of phase space consisting of bounded orbits can be reduced to a Hamiltonian system $H(x,y)=h(y), \ (x,y)\in \T^n\times D, \ D\subset \R^n,\ \T=\R/\Z$ whose Hamiltonian equation is $\begin{cases}\dot x=Dh(y)\\
 \dot y=0 \end{cases}$. The dynamics of this system is simple: each $y$ is conserved and on each $y$-torus $\T^n\times\{y\}$, the motion is linear with frequency $Dh(y) $ ($x\mapsto x+Dh(y) t,$  mod $\Z^n$,\ $t\in \R$). However, when we perturb the system to the following form called nearly integrable system
\begin{equation}\label{EqHameq} H(x,y)=h(y)+\eps P(x,y), \end{equation}
the dynamics becomes highly nontrivial. Note that nearly integrable systems model many important physical phenomena such as the Newtonian $N$-body problem.
The Kolmogorov-Arnold-Moser theory (see Theorem \ref{ThmKAM}) implies that  when the frequency $Dh(y)$ is a Diophantine vector, the corresponding $\T^n\times\{y\}$-torus is robust under perturbations. For each $\eps$ small enough, the set of invariant tori form a Cantor set (closed nowhere dense subset) of large measure (the complement has measure of order $\sqrt\eps$). Each invariant torus is an $n$-dimensional Lagrangian torus. When $n=2$, one a fixed energy level (three dimensional) each orbit is either lying on a torus or trapped between two neighboring tori so the action variable $y$ oscillates at most $O(\sqrt\eps)$. However, when $n\geq 3$, the complement to the set of invariant tori is connected, therefore there is a chance for orbits to wander in the phase space and produce almost dense orbits.

Arnold in \cite{A64} constructed an example in which the $y$-variable can wander arbitrarily (see Section \ref{SSArnold}), then made the following conjecture.
\begin{Con}[\cite{A66,A94}] \label{ConArnold} For any two points $y'$ and $y''$ on the connected level hypersurface of $h$ in the action space there exist orbits of \eqref{EqHameq} connecting an arbitrary small neighborhood of the torus $y=y'$ with an arbitrary small neighborhood of the torus $y=y''$, provided that $\eps\neq 0$ is sufficiently small and $P$ is generic.
\end{Con}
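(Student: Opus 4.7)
The plan is to construct a diffusing orbit by shadowing a chain of invariant objects for $H$ that connects small neighborhoods of the tori $y=y'$ and $y=y''$. First, I would fix a smooth curve $\gamma:[0,1]\to \R^n$ on the connected level component of $h$, with $\gamma(0)=y'$ and $\gamma(1)=y''$. Along $\gamma$ one crosses countably many resonance surfaces $\{k\cdot Dh(y)=0\}$; near each such single resonance a resonant normal form reduces $H$, after a symplectic change of coordinates, to a pendulum in the resonant angle parametrised by the non-resonant torus. The hyperbolic equilibrium of the pendulum, fibred over the non-resonant directions, produces a normally hyperbolic invariant manifold (NHIM) $\Lambda$ whose unperturbed inner dynamics is that of $h$ restricted to the resonant surface, and whose stable and unstable manifolds $W^{s,u}(\Lambda)$ coincide when $\eps=0$.

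The next step is to apply an ``inner'' KAM theorem on $\Lambda$ to obtain a Cantor family of whiskered tori, and to compute the \emph{scattering map} associated with a transverse homoclinic intersection of $W^s(\Lambda)$ and $W^u(\Lambda)$, after Delshams--de la Llave--Seara. Genericity of $P$ enters through a Melnikov integral: to leading order, the scattering map is the time-one flow of the Melnikov potential, and for generic $P$ this potential is Morse with gradient transverse to the foliation by whiskered tori. Composing the scattering map with the inner dynamics then yields a pseudo-orbit whose action variable moves monotonically along $\gamma$ from $y'$ to $y''$. I would then upgrade it to a genuine orbit of \eqref{EqHameq} either by a geometric shadowing lemma for transition chains of NHIMs (Gidea--de la Llave) or, more robustly, by a Mather--Bernard variational argument producing action-minimising connecting orbits between neighbouring cohomology classes.

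The main obstacle is the \emph{large gap problem}: the primary whiskered KAM tori on $\Lambda$ leave gaps of size $O(\sqrt{\eps})$ near strong resonances, while the generic splitting of $W^s(\Lambda)$ and $W^u(\Lambda)$ produced by the Melnikov integral is only $O(\eps^{3/2})$, so primary whiskered tori alone cannot form a connected transition chain. Bridging each gap requires producing secondary tori, lower-dimensional hyperbolic tori, or Aubry--Mather sets inside the gap and verifying uniformly that their invariant manifolds intersect transversally those of the neighbouring primary tori. Carrying this out uniformly along $\gamma$ through countably many single resonances, and across the finitely many double resonances where the normal form is genuinely two-dimensional and no pendulum approximation suffices, is the step that keeps the conjecture open in the stated generality. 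For the perturbed Schwarzschild/Kerr problem at hand, however, one can sidestep this universal difficulty by working along a single well-chosen curve $\gamma$ and exploiting the concrete hyperbolicity of the photon shell as an explicit NHIM, reducing the problem to a finite-codimension transversality condition that can be verified by direct Melnikov computation.
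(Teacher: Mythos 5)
This statement is Arnold's diffusion conjecture, which the paper presents explicitly as a \emph{conjecture} (Conjecture \ref{ConArnold}) and does not prove; the paper only points to \cite{CX} for its resolution in the smooth convex case and to \cite{DLS06,GLS,Tr1,Tr2} for checkable geometric mechanisms. So there is no proof in the paper to compare against, and your proposal is not a proof either — as you yourself concede in your final paragraph, the large gap problem and, above all, the double resonances are precisely the points where the argument is not known to close in the stated generality. What you have written is an accurate survey of the standard attack (single-resonance normal form producing a pendulum times a torus, the NHIM and its scattering map, Melnikov nondegeneracy from genericity of $P$, shadowing of transition chains either geometrically or variationally), but a survey of a strategy whose hardest steps are acknowledged to be open cannot be submitted as a proof of the conjecture.

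Two further cautions. First, your last paragraph conflates the general conjecture with the theorems this paper actually proves: the photon-shell construction (Theorems \ref{ThmGLS}, \ref{ThmKerrAD1}) is an \emph{a priori unstable} situation with an explicit, $\eps$-independent NHIM and produces an oscillation of $L_z$ of some size $\rho>0$ depending on the perturbation — it does not ``sidestep'' the conjecture, which demands connecting arbitrary $y'$ and $y''$ on a level hypersurface for generic $P$ in the a priori stable setting where the hyperbolic structure itself is only $O(\sqrt{\eps})$. Second, your quantitative claims should be checked against the setting: in the formulation of \cite{DLS06} the gaps between primary KAM tori are $O(\sqrt{\eps})$ while the heteroclinic jumps of the scattering map are $O(\eps)$ (not a statement about an $O(\eps^{3/2})$ splitting), and it is secondary tori and lower-dimensional tori inside the gaps, or the variational machinery of Mather/Cheng type as in \cite{CX}, that bridge them; neither route is carried out, or claimed to be carried out, anywhere in this paper.
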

The statement can be found in \cite{A66,A94} etc, as well as in the book \cite{A05} in Problem 1963-1, 1966-3, 1994-33 etc. We refer readers to the work \cite{CX} by the author and Cheng for the up-to-date advances on this conjecture (its solution in the smooth category for convex systems), as well as further literature review therein.
It is well-known that a generic statement in mathematics says nothing for a given system, thus it is also desirable to have checkable conditions to find diffusing orbits. In this direction, we mention the geometric method developed by Delshams-Gidea-de la Llave-Seara \cite{DLS06,GLS} and Treschev \cite{Tr1,Tr2}, which will be the main tools that we apply in this paper.

\subsubsection{Arnold diffusion in perturbed Schwarzschild/Kerr spacetime}
In both Schwarzschild and Kerr metrics we use coordinates $(\tau,r,\theta,\varphi)$ where $\tau$ is the coordinate time, $r$ is the polar radius and $(\theta,\varphi)$ is the standard spherical coordinates with $\theta$ the latitude angle and $\varphi$ the azimuth angle. Dual to the variables $\tau$ and $\varphi$, the particles energy $E$ and $z$-component of the angular momentum $L_z$ are conserved respectively, in both Schwarzschild and Kerr spacetimes.
  We have the following definition.
\begin{Def}
A metric is called stationary if it is independent of the coordinate time $\tau$ and axisymmetric if it is independent of the azimuth angle $\varphi$.
\end{Def}
We first give the \emph{informal} statements of our results on Arnold diffusion in perturbed Schwarzschild/Kerr spacetime, in order to avoid technicalities. We will use the phrase ``under certain nondegeneracy conditions", which mainly means the nondegeneracy of a Melnikov function. The full statements can be found in Section \ref{SSADStat} and \ref{SSADNonStat} for Schwarzschild and Section \ref{SSSADK} for Kerr.
\begin{Thm}\label{ThmADSK1}
Let $\eps h_{\mu\nu}dx^\mu dx^\nu$  be a stationary perturbation to the Schwarzschild/Kerr metric   satisfying certain nondegeneracy conditions. Then  there exist $\rho>0$ and $\eps_0>0$, such that for all $|\eps|<\eps_0$, there exist  $T>0$ and an orbit of the perturbed system such that
$$|L_z(T)-L_z(0)|> \rho. $$
\end{Thm}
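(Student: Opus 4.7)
The plan is to realize the claim in the standard geometric framework for Arnold diffusion of Delshams--Gidea--de~la~Llave--Seara and Treschev, using the photon shell as the normally hyperbolic invariant manifold (NHIM) that carries the diffusion, with radial homoclinics to the photon shell playing the role of the Arnold whiskers.

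First I would separate the unperturbed Hamilton--Jacobi equation: for Schwarzschild using the Killing vectors $\partial_\tau,\partial_\varphi$ together with spherical symmetry, and for Kerr using $\partial_\tau,\partial_\varphi$ together with the Carter constant. This produces action--angle variables $(x,y)\in \T^n\times D$ on the region of bounded non-plunging geodesics, by the Liouville--Arnold theorem. In these coordinates the photon shell appears as a codimension-2 submanifold $\Lambda_0$ on which the radial pair $(r,p_r)$ sits at a hyperbolic saddle of the effective radial potential, and the remaining motion is quasi-periodic in the angles conjugate to $(E,L_z)$ for Schwarzschild and $(E,L_z,Q)$ for Kerr. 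In the unperturbed system the stable and unstable manifolds of $\Lambda_0$ coincide along a well-known family of radial homoclinic orbits (the zoom-whirl separatrices to the unstable spherical orbits).

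The next and technically main step is to install the scattering-map machinery. By NHIM persistence (Fenichel, Hirsch--Pugh--Shub), for $|\eps|<\eps_0$ the perturbed system carries a nearby NHIM $\Lambda_\eps$ with $C^1$-close $W^{s,u}(\Lambda_\eps)$; the inner dynamics on $\Lambda_\eps$ is a small nearly integrable system to which KAM applies, yielding Cantor families of invariant tori separated by gaps of order $\sqrt{\eps}$. I would then compute the Poincar\'e--Melnikov function along the unperturbed radial homoclinic manifold; the ``nondegeneracy conditions'' in the statement should be precisely the transversal-zero and coercivity hypotheses on this Melnikov function which guarantee (i) transverse intersection $W^u(\Lambda_\eps)\pitchfork W^s(\Lambda_\eps)$ inside each energy level and (ii) a well-defined scattering map $S_\eps:\Lambda_\eps\to\Lambda_\eps$ whose leading $\eps$-jet is Hamiltonian generated by the reduced Melnikov potential. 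By stationarity of the perturbation, energy $E$ is preserved along every orbit, so the relevant conserved target on $\Lambda_\eps$ is $L_z$ for Schwarzschild and $(L_z,Q)$ for Kerr; the content of the nondegeneracy hypothesis is that $S_\eps$ moves the $L_z$-coordinate by an amount of order $\eps$ at a nontrivial rate.

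With $S_\eps$ in hand, the diffusion construction is the now-standard alternation: concatenate scattering excursions (each crossing a KAM gap transversally) with arcs of the inner KAM/Mather dynamics along resonant whiskered tori, to build a pseudo-orbit on $\Lambda_\eps$ whose projection to $L_z$ accumulates an $O(1)$ change. A shadowing theorem (the topological obstruction method of Gidea--de~la~Llave, or the variational shadowing developed in the companion line of work \cite{CX}) then produces a genuine orbit tracking this pseudo-orbit, whose travel time $T$ is of order $1/\eps$ and along which $|L_z(T)-L_z(0)|>\rho$ for some fixed $\rho$ independent of $\eps$.

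The hard part will be step two, and specifically verifying that a \emph{generic} stationary metric perturbation produces a Melnikov function with the required nondegeneracy globally along the photon shell, not merely locally. For Schwarzschild the photon shell is foliated by circles obtained by rotating the photon sphere orbits, and any stationary perturbation automatically preserves $\partial_\tau$; one must check that generic loss of spherical symmetry actually couples to the $L_z$ direction of the shell (i.e.\ tilts orbital planes). For Kerr one must verify that the Melnikov integral couples the $(L_z,Q)$ directions so that $S_\eps$ is not trapped on a submanifold of constant $L_z$. Once these checks are in place, the NHIM/scattering/shadowing pipeline should deliver the theorem in the standard form.
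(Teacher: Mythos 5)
Your overall architecture (photon shell as NHIM, persistence, Melnikov computation, scattering map, then a shadowing argument) matches the paper's, but your diffusion-construction step contains a genuine gap. You propose to run the classical Delshams--de la Llave--Seara/Treschev pipeline: apply KAM to the \emph{inner} dynamics on $\Lambda_\eps$, obtain Cantor families of tori with $O(\sqrt{\eps})$ gaps, and alternate scattering excursions with arcs of ``inner KAM/Mather dynamics along resonant whiskered tori.'' This fails here because the restricted Hamiltonian on the photon shell is degenerate: for Schwarzschild, $H_N=r_-^{-2}\bigl(p_\theta^2+L_z^2/\sin^2\theta\bigr)-\al^{-1}E^2$ is a function of the total angular momentum $L$ alone, i.e.\ of a \emph{linear} combination of the actions, so $\det D^2h=0$ and the twist hypothesis of the KAM theorem (and of the large-gap machinery) is violated. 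The paper flags exactly this obstruction (Remark \ref{RkADLong}, citing Proposition 3.1 of \cite{X}) and for that reason does \emph{not} analyze the inner dynamics at all: it reparametrizes time (multiplying by $r^2/L_z$ to separate the unperturbed part, then performing energetic reduction) and invokes Theorem 4.1 of \cite{GLS}, whose shadowing mechanism uses only the transversality of $W^u(N_\eps)$ and $W^s(N_\eps)$ together with Poincar\'e recurrence and an inclination lemma, with no KAM tori or resonant whiskered tori on the NHIM. Your fallback of variational shadowing in the style of \cite{CX} is also ruled out by the paper (Remark \ref{RkConS}) because the system is nonconvex.

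A secondary, smaller issue: you assert the travel time is $T=O(1/\eps)$ and that the diffusion ``accumulates an $O(1)$ change'' in $L_z$; the theorem as stated only yields some $\rho>0$ depending on the perturbation $h_{\mu\nu}$ (not an a priori $O(1)$ amount), precisely because without twist one cannot chain scattering steps across the whole action domain --- the paper defers the global version to Conjectures \ref{ConS1} and \ref{ConK1}. Your set-up of the separated Hamilton--Jacobi equation, the identification of the nondegeneracy condition with a transversal zero of the Melnikov function, and the remark that one must check the perturbation actually couples to the $L_z$ direction are all consistent with the paper's Theorems \ref{ThmScatteringS1}, \ref{ThmGLS} and \ref{ThmKerrAD1} and with the explicit verification in Appendix \ref{AppMelnikovAD}.
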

In physical space, the orbit in the previous theorem has zoom-whirl pattern. Moreover, the orbit plane will gradually tilt, which differs drastically from the Schwarzschild case, where all orbit moves on a plane.

It is also natural to consider perturbations depending on $\tau$ periodically (c.f. \cite{RW}), in which case the particles energy $E$ can also diffuse.
\begin{Thm} Let $\eps h_{\mu\nu}dx^\mu dx^\nu$  be a $1$-periodic in $\tau$ perturbation to Schwarzschild or Kerr spacetime  satisfying certain nondegeneracy conditions. Then there exist $\rho>0,\eps_0>0$, such that for all $|\eps|<\eps_0,$ there exists an orbit and a time $T$ with
$$|(E,L^2_z)(0)-(E,L^2_z)(T)|>\rho.$$
\end{Thm}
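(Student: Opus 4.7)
The plan is to extend the strategy behind Theorem~\ref{ThmADSK1} by promoting the coordinate time $\tau$ to a configuration variable. Adjoining $\tau\in\T$ and its conjugate momentum $-E$, the $1$-periodic non-autonomous system becomes an autonomous Hamiltonian system with one extra degree of freedom. Since the perturbation depends genuinely on $\tau$, the Killing field $\pt_\tau$ no longer generates a symmetry and $E$ becomes dynamical; together with $L_z$, this provides an additional action direction in which to diffuse compared with the stationary setting. The unperturbed photon shell, thickened by the $\tau$-circle, forms an extended normally hyperbolic invariant manifold $\Lb_0$ foliated by Liouville-Arnold tori parametrized by $(E,L_z)$ and the remaining integral ($L^2$ in Schwarzschild, Carter's constant in Kerr).

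First I would establish persistence of $\Lb_0$ as a NHIM $\Lb_\eps$ together with its stable and unstable manifolds, and compute the Poincar\'e-Melnikov potential
\[ \cM(\vartheta)=\int_{-\infty}^{\infty} h_{\mu\nu}\!\bigl(x_\vartheta(t)\bigr)\,\dot x_\vartheta^\mu(t)\,\dot x_\vartheta^\nu(t)\,dt \]
along the unperturbed homoclinic orbits $x_\vartheta(t)$ of the photon shell, indexed by the angle variables $\vartheta$ conjugate to the actions on $\Lb_0$. The partial of $\cM$ in the $\tau$-angle drives an $O(\eps)$ jump of $E$ across each homoclinic excursion; the partial in the $\varphi$-angle drives the corresponding jump of $L_z$. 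The nondegeneracy hypothesis amounts to these partials together spanning the $(E,L_z)$-plane, so that the associated scattering map $\mathcal{S}_\eps:\Lb_\eps\to\Lb_\eps$ is transverse to both action directions to leading order in $\eps$.

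Second I would invoke the geometric method of Delshams-Gidea-de la Llave-Seara, combining iterates of $\mathcal{S}_\eps$ (outer dynamics along homoclinic excursions) with the inner dynamics on KAM tori of $\Lb_\eps$ to build a pseudo-orbit whose $(E,L_z^2)$ coordinates trace any prescribed curve of length exceeding $\rho$. The shadowing lemma for NHIMs (in Treschev's or the DLS form) then promotes this pseudo-orbit to a true orbit of the perturbed Hamiltonian, yielding the claimed time $T$ and oscillation.

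The main obstacle is organizing diffusion in a genuinely two-dimensional action plane, whereas the stationary case of Theorem~\ref{ThmADSK1} only needed a one-dimensional chain: the scattering map must realize two independent directions of diffusion uniformly in $\eps$, and the transition chain must be continued across the low-order resonances on $\Lb_\eps$ where the KAM tori degenerate. Patching these resonant gaps via the appropriate normally hyperbolic invariant cylinders or Aubry-Mather type invariant sets, and translating the abstract ``certain nondegeneracy conditions'' into quantitative transversality statements for $\nabla\cM$ in both the $\pt_\tau$ and $\pt_\varphi$ directions, is the technical heart of the argument.
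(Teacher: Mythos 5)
Your setup --- autonomizing the system by putting $\tau$ on $\T$ with conjugate momentum $-E$, taking the extended normally hyperbolic invariant manifold over the photon shell, and requiring the Melnikov potential to be nondegenerate in both the $\tau$- and $\varphi$-directions so that the scattering map moves both $E$ and $L_z$ --- is exactly the paper's setup (the manifold $\hat N$ of Section \ref{SSADNonStat} and the hypothesis that the right-hand side of \eqref{EqScattering} be nonvanishing for $\star=\tau,\varphi$ in Theorem \ref{ThmDiffusionNonStatS}). The gap is in the final shadowing step. You propose to combine the scattering map with the \emph{inner} dynamics via KAM tori on $\Lambda_\eps$, transition chains, and patching of resonant gaps in the style of \cite{DLS06,Tr1}. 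All of these devices require a twist (or isoenergetic nondegeneracy) property for the restricted Hamiltonian on the invariant manifold, and the paper points out (Remark \ref{RkADLong}, citing Proposition 3.1 of \cite{X}) that this twist property \emph{fails} for the dynamics on the photon sphere in Schwarzschild. So the resonance-gap construction you identify as ``the technical heart'' is not merely technical: as proposed it cannot be carried out in the Schwarzschild case. The paper circumvents it entirely by applying Theorem 4.1 of \cite{GLS}, which produces the shadowing orbit from the transversal intersection of $W^{u}(\hat N_\eps)$ and $W^{s}(\hat N_\eps)$ alone, using Poincar\'e recurrence and an inclination lemma, with no hypothesis on the inner dynamics. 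Relatedly, you aim to trace an arbitrary prescribed curve of length exceeding $\rho$ in the $(E,L_z^2)$-plane; that is strictly stronger than the theorem, whose $\rho$ is allowed to depend on $h_{\mu\nu}$, and the stronger statement is only conjectured in the paper (Conjectures \ref{ConS1} and \ref{ConK1}).

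A secondary point: your Melnikov integral $\int_\R h_{\mu\nu}(x_\vartheta)\,\dot x_\vartheta^\mu\dot x_\vartheta^\nu\,dt$ is divergent as written, because along a homoclinic orbit the integrand tends to its (generically nonzero) value on the asymptotic photon-shell orbit rather than to zero; the paper's potential \eqref{EqMelnikov} subtracts the corresponding integral along the limiting orbit $\Gamma_c$ on the NHIM and carries the Jacobian factor $r^2/L_z$ coming from the Mino-time reparametrization, both of which are needed for convergence and for the first-order formula for the scattering map to be correct.
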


The nondegeneracy conditions are explicitly given and in general very easy to satisfy. In Appendix \ref{AppMelnikovAD}, we give an explicit perturbation to Schwarzschild metric that solves the linearized Einstein Field Equation and verify the condition. However, we do not do it for Kerr due to the complexity of the perturbation problem. When a perturbation is given, the nondegeneracy condition can be checked in a similar manner to the Schwarzschild case. 

Here the lower bound $\rho$ depends on the $h_{\mu\nu}$ but does not depend on $\eps$. In fact, there are also checkable conditions in literature to guarantee that $\rho$ is independent of $h_{\mu\nu}$ and can be as large as possible. We will discuss this in Remark \ref{RkADLong} and \ref{RkConS} and Conjecture \ref{ConS1} and \ref{ConK1}.

The above mentioned results on Arnold diffusion utilize the photon shell as well as the homoclinic orbits to it. However, in the spirit of Arnold's conjecture \ref{ConArnold}, we expect that for generic perturbations, there exist orbits that are $\dt$-dense on each energy level of bounded motions. Here the genericity is posed for perturbations satisfying Einstein Field Equation or at least the linearized equation, so it introduces extra difficulties. We propose several conjectures in the spirit of Arnold as Conjecture \ref{ConS1}, \ref{ConS2}, \ref{ConK1} and \ref{ConK2} where we locate the part of phase space to look for Arnold diffusion and give expected phenomena. Despite of lacking a rigorous proof, the phenomena may still be observable in practice.

We finally speculate that the Arnold diffusion along zoom-whirl orbits can be a complement to the Penrose process. Recall that Penrose process is a process of extracting energy and angular momentum from a Kerr blackhole where a moving particle falls into the ergosphere but still outside the event horizon. Exploiting the Arnold diffusion mechanism, each time when the particle's energy and angular momentum are increased after the Penrose process, we may arrange the particle to lower its energy and angular momentum before falling into the ergosphere to perform another Penrose process again, thus hopefully we can continue to extract more energy and angular momentum from the blackhole. We shall elaborate on this in Section \ref{RkADLong}.

\subsection{Oscillatory Motions between a bounded domain and infinity}
In both Schwarzschild and Kerr spacetime, when the particle's energy is 1, there is a timelike geodesic escaping to infinity that is similar to the parabolic Kepler orbit in Newtonian two-body problem. In this case, we may treat the infinity as a degenerate hyperbolic fixed point and the parabolic orbit as a homoclinic orbit to infinity. Then a generic small perturbation will create chaotic motion that is similar to the oscillatory motion in Newtonian three-body problem. This shows that the post Newtonian nature of the two metrics in the region far from the blackhole center.

The next result is the \emph{informal} statement of our results on the existence of oscillatory motions. Precise statements can be found in Theorem \ref{ThmOsc} for Schwarzschild and Theorem \ref{ThmOscK} for Kerr.
\begin{Thm}
Let $\eps h_{\mu\nu}dx^\mu dx^\nu$ be a stationary perturbation of the Schwarzschild/Kerr metric satisfying
\begin{enumerate}
\item The perturbation $H_1= \frac12  g^{\al\mu}h_{\mu\nu}g^{\beta\nu}p_\al p_\beta+O(\eps)$ to the Schwarzschild or Kerr Hamiltonian converges to 0 asymptotically like $O(\frac{1}{r^2})$, and $\lim r^2 H_1$ is constant as $r\to\infty;$
\item The subspace $\{\theta=\pi/2,\ p_\theta=0\}$ is invariant under the perturbed Hamiltonian system;
\item $|L_z|$ is sufficiently large and the Melnikov function has a nondegenerate zero for $\varphi_0=0$,
\end{enumerate}
then  for $\eps>0$ sufficiently small in the perturbed system there exists an orbit of oscillatory type, i.e. there exists a constant $C$ such that along the orbit we have
$$\liminf_{t\to\infty} r(t)<C,\quad \limsup_{t\to\infty} r(t)=\infty.$$
\end{Thm}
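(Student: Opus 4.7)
The plan is to reduce the problem to a two-degree-of-freedom system via condition (2) and to treat infinity as a degenerate hyperbolic invariant object with a homoclinic loop, so that the transverse intersection produced by the Melnikov condition (3) can be converted, via an Alexeev--Moser type symbolic dynamics, into oscillatory orbits in the sense of Sitnikov/Chazy.

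First I would use the invariance of $\{\theta=\pi/2,\ p_\theta=0\}$ from condition (2) to restrict the Hamiltonian to the four-dimensional symplectic leaf parameterized by $(r,\varphi,p_r,L_z)$. Stationarity of $h_{\mu\nu}$ keeps the total energy $E$ conserved, so I fix $E=1$, the level on which the unperturbed Schwarzschild/Kerr metric admits a parabolic-type geodesic escaping to infinity, analogous to the parabolic Kepler orbit in the Newtonian two-body problem. To turn infinity into a genuine invariant set, I would introduce McGehee-type coordinates, for instance $x=1/\sqrt{r}$ together with a suitable time rescaling, so that $r=\infty$ becomes a circle of degenerate fixed points indexed by $\varphi$ and the parabolic geodesic becomes a homoclinic connection on this circle (or a heteroclinic connection between two points whose angular separation is the small scattering angle, negligible for $|L_z|$ large).

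Next I would compute the Melnikov function along this parabolic orbit. The asymptotic condition (1), $H_1=O(1/r^2)$ with $\lim r^2 H_1$ constant, is precisely what makes the Melnikov integral absolutely convergent: along the parabolic orbit $r(t)\sim |t|^{2/3}$, so the integrand decays like $|t|^{-4/3}$. Condition (3), a nondegenerate zero of the Melnikov function at $\varphi_0=0$, then produces a transverse intersection of $W^s(\infty)$ and $W^u(\infty)$ inside the energy level of the perturbed system. From this transverse homoclinic connection I would build a Poincar\'e return map near the homoclinic loop and encode it as a shift, following the Alexeev--Moser scheme for the Newtonian three-body problem: each symbol records how many Kepler-like revolutions the particle performs between two consecutive near-infinity excursions, and oscillatory orbits correspond to sequences with infinitely many excursion symbols that nevertheless return infinitely often to the bounded part of the section. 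The large $|L_z|$ hypothesis is used to keep the orbit outside the photon shell during each excursion, so that the Kepler-like picture near infinity remains uniformly valid.

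The main obstacle is that the invariant set at infinity is parabolic rather than hyperbolic, so the standard Smale--Birkhoff/$\lambda$-lemma machinery does not apply directly. One needs delicate $C^1$ estimates on the transition map through a neighborhood of the degenerate circle of fixed points, controlling the loss of exponential contraction/expansion that arises from the merely polynomial approach to infinity. This is the step that parallels the hard technical core of Moser's and Alexeev's work and its modern revisitations, and it is where the precise $1/r^2$ decay in condition (1) and the large $|L_z|$ hypothesis become essential for obtaining the uniform hyperbolic-like estimates needed to conclude the symbolic coding, and hence the existence of the desired orbit with $\liminf_{t\to\infty} r(t)<C$ and $\limsup_{t\to\infty} r(t)=\infty$.
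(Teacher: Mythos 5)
Your proposal is correct and follows essentially the same route as the paper: reduction to the equatorial, fixed-$E=1$ system, McGehee-type regularization $v=r^{-1/2}$ turning infinity into a degenerate hyperbolic invariant set with the parabolic geodesic as homoclinic loop, Melnikov transversality (with condition (1) guaranteeing convergence of the integral), and the Moser/Llibre--Sim\'o symbolic-dynamics argument adapted to the parabolic fixed point, with large $|L_z|$ keeping the perihelion far from the black hole. The only cosmetic difference is that the paper phrases the reduced system as a one-and-a-half degree of freedom pendulum-like problem with $\varphi$ playing the role of fast time, but this is equivalent to your formulation.
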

Again we give an explicit perturbation of Schwarzschild that verifies the assumptions in Appendix \ref{AppOsc}.

\subsection{Chaotic Carrousel Orbits near the Event Horizon }
We also apply the general strategy to study the dynamics near the event horizon for a perturbed Kerr metric. Near the event horizon $r=r_+$, the radial motion $(r,\frac{dr}{dt})$ is nonsingular while the motion of $\frac{d\tau}{dt}$ and $\frac{d\varphi}{dt}$ becomes singular like $\frac{1}{r-r_+}$. It is known that the singularity is only a coordinate singularity which can be resolved by a coordinate change. We introduce a new coordinate system where we replace the proper time by the azimuthal angle $\varphi$ (or $\tau$). Then we see in the phase space $(r,\frac{dr}{d\varphi})$, the event horizon becomes a hyperbolic fixed point and has an associated homoclinic orbit. If the metric is perturbed generically by a non-axisymmetric perturbation, then we can apply Poincar\'e's theorem to show that chaos occurs. Such a chaotic behavior can be described by Smale's horseshoe, which is a compact invariant set that can be conjugate to a shift of finitely many alphabet (c.f. Appendix \ref{AppPoincare} and \cite{BS,KH}). In particular we have orbits jumping back and forth between neighborhoods of $r_+$ and $r_O(>r_+)$, where the latter is a zero of the velocity field $\frac{dr}{dt}$ as a function of $r$. Note that this differs drastically from the unperturbed case where all orbit falls into the event horizon when getting close to it.

We again give the following informal statement by removing the technical assumptions and refer readers to Theorem \ref{ThmHorizon} for more details of the following theorem.
\begin{Thm} Let $\eps H_1=\eps h^{\mu\nu}p_{x^\mu} p_{x^\nu}$ be a stationary and nonaxisymmetric perturbation of the Kerr spacetime with $L_z=aE$ satisfies the following \begin{enumerate}
\item The perturbed system preserves the equator plane $\{\theta=\pi/2,p_\theta=0\}$.
\item As $r\to r_+$, $H_1$ satisfies certain boundary condition;
\item The Melnikov function has a nondegenerate zero.
\end{enumerate}
 Then for $\eps$ sufficiently small, there exists a Smale horseshoe in neighborhoods of $r_+$ and $r_O$. In particular, there exists an orbit which visits the two neighborhoods  repeatedly as $t\to\pm\infty$.
\end{Thm}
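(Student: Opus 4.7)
The plan is to apply the Smale-Birkhoff homoclinic theorem, in the form described in Appendix \ref{AppPoincare}, to a suitable reduction of the perturbed Kerr Hamiltonian near the event horizon. The preparatory step is a time reparametrization. Since $\{\theta=\pi/2,\,p_\theta=0\}$ is invariant by assumption (1), the equatorial reduction yields a system with coordinates $(\tau,r,\varphi;\,-E,p_r,L_z)$; stationarity and the constraint $L_z=aE$ fix two integrals of motion, leaving one genuine degree of freedom in $(r,p_r)$. In proper time $\dot\varphi$ blows up like $(r-r_+)^{-1}$ while $\dot r$ stays bounded, so promoting $\varphi$ to the new time variable desingularizes the flow: the horizon becomes an ordinary equilibrium of the reduced field $\frac{dr}{d\varphi}$, $\frac{dp_r}{d\varphi}$, and the boundary condition of assumption (2) is precisely what allows $H_1$ to extend smoothly through $r=r_+$.

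Next I would establish the unperturbed phase portrait. Direct inspection of the Kerr Hamiltonian under $L_z=aE$ shows that $r_+$ is a hyperbolic saddle of the reduced field, while the effective radial potential has a first outer zero of $\dot r$ at $r=r_O$. The separatrix issued from $(r_+,0)$ reaches $r_O$ with vanishing radial velocity and returns, producing a homoclinic loop to $r_+$ in the $(r,p_r)$-plane; after lifting to the $\varphi$-extended phase space it gives a homoclinic connection whose projection is this loop. This hyperbolic skeleton is the object to which perturbation theory is applied.

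With $\eps H_1$ switched on, the nonaxisymmetry makes $H_1$ a genuine function of $\varphi$, so the stable and unstable manifolds of the perturbed saddle split at order $\eps$. The splitting is measured by the standard Melnikov integral
\[
  M(\varphi_0)=\int_{-\infty}^{\infty}\{H_0,H_1\}\bigl(\gamma(\varphi),\,\varphi+\varphi_0\bigr)\,d\varphi
\]
taken along the unperturbed homoclinic $\gamma(\varphi)$. Assumption (3) furnishes a nondegenerate zero of $M$, hence a transverse intersection of the perturbed invariant manifolds. Smale-Birkhoff then produces, in the Poincar\'e return map on a section transverse to $\gamma$, a hyperbolic invariant Cantor set conjugate to a Bernoulli shift on finitely many symbols; interpreting the symbols as labels for a small neighborhood of $r_+$ and one of $r_O$, any prescribed bi-infinite itinerary is realized by some orbit, which in particular oscillates between the two regions forever and never enters the horizon.

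The main obstacle is the first step above: one must verify that after the time change the reduced flow really is Hamiltonian with a Hamiltonian that extends smoothly across $r=r_+$. The transformation mixes the singular factor $(r-r_+)^{-1}$ coming from $\frac{d\tau}{d\varphi}$ with a vanishing factor $(r-r_+)$ arising from the null Killing combination $\pt_\tau+\Omega_H\pt_\varphi$ that is characteristic of the choice $L_z=aE$, and the exact cancellation of these two factors is what makes the regularized problem meaningful. Assumption (2) is the condition ensuring that this cancellation persists once the perturbation is turned on, after which the Melnikov computation and the Smale-Birkhoff conclusion proceed along standard lines.
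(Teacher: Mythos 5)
Your proposal follows essentially the same route as the paper: reparametrize time by $\varphi$ so the horizon $r=r_+$ becomes a hyperbolic fixed point with a homoclinic loop out to $r_O$, split the separatrix via a Melnikov computation, and invoke the Smale--Birkhoff horseshoe. The technical core you defer to assumption (2) is exactly what the paper carries out explicitly -- the energetic reduction taking $\Lambda=aE-L_z$ as the Hamiltonian in the time $\varphi$, the replacement of the singular momentum $p_r\sim\Delta^{-1}$ by $\rho=\frac{dr}{d\varphi}$ with the induced symplectic form $\frac{aEr^2}{\Delta^2}dr\wedge d\rho$ (so the homoclinic loop lives in the $(r,\rho)$-plane, not the $(r,p_r)$-plane, and the Melnikov integrand is $\omega(X_\Lambda,X_{\Lambda_0})$ rather than the standard $\{H_0,H_1\}$), and the verification that the stated decay of $\tilde H_1\Delta^3$ makes the perturbation $O_{C^1}(\eps)$ and the Melnikov integral convergent.
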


We will further discuss dynamics inside the event horizon using similar methods in Section \ref{SCarrousel}.

\noindent {\bf Organization of the paper} The paper is organized as follows. The main body of the paper consists mainly of introductory and conceptual arguments and statements, and we put all the technical ingredients to the appendix.
 In Section \ref{SIntroHam}, we give a brief introduction to nearly integrable Hamiltonian dynamics including Liouville-Arnold theorem, KAM theorem and Arnold diffusion. In Section \ref{SS}, we study Arnold diffusion in the perturbed Schwarzschild spacetime. In Section \ref{SOsc}, we study oscillatory orbits in perturbed Schwarzschild spacetime.  In Section \ref{SK}, we study Arnold diffusion in perturbed Kerr spacetime.  In Section \ref{SCarrousel}, we show the existence of chaotic orbits near the event horizon in perturbed Kerr spacetime.
 Finally, we have three appendices. In Appendix \ref{AppPoincare} we give Poincar\'e's theorem on separatrix splitting and the derivation of Melnikov function and scattering map. In Appendix \ref{SNHIM}, we introduce the theorem of normally hyperbolic invariant manifolds. In Appendix \ref{SReggeWheeler}, we find concrete perturbations to Schwarzschild metric and confirming the existence of Arnold diffusion and oscillatory orbits.

 \section{Introduction to nearly integrable Hamiltonian dynamics}\label{SIntroHam}
 In the Hamiltonian formalism of the classical mechanics, a smooth Hamiltonian function $H$ on a symplectic manifold $(M,\omega)$ is given, and defines a vector field $X_H$ through $\omega(\cdot, X_H)=dH$ which defines a dynamical system by solving the ODE $\dot x=X_H(x)$.
 \subsection{Liouville-Arnold theorem}
The classical Liouville-Arnold theorem gives important characterization of the integrable systems.

\begin{Thm}[Liouville-Arnold] \label{ThmLA}Let $H_1=H:\ M^{2n}\to \R$ be a Hamiltonian and suppose there are $H_2,\ldots,H_n:\ M\to \R$ satisfying
\begin{enumerate}
\item[(a)] $\{H_i,H_j\}\equiv0$, for all $i,j=1,\ldots,n$, where $\{H_i,H_j\}:=\omega(X_{H_i},X_{H_j})$ is the Poisson bracket;
\item[(b)] the level set $M_{\mathbf a}:=\{(q,p)\in M\ |\ H_i(q,p)=a_i,\ i=1,\ldots,n\}$ is compact;
\item[(c)] At each point of $M_{\mathbf a}$, the $n$ vectors $DH_i,\ i=1,\ldots,n$ are linearly independent.
\end{enumerate}
Then
\begin{enumerate}
\item $M_{\mathbf a}$ is diffeomorphic to $\T^n=\R^n/\Z^n$ and is invariant under the Hamiltonian flow of each $H_i$.
\item in a neighborhood $U$ of $M_{\mathbf a}$, there is a symplectic transform $\Phi(q,p)=(\theta,I)$ such that $\Phi(U)=\T^n\times (-\dt,\dt)^n$ for some $\dt>0$.
\item In the new coordinates, each $K_i:=H_i\circ \Phi^{-1}$ is a function of $I$ only so the Hamiltonian equation is
$$\dot \theta=\omega_i(I):=\frac{\partial K_i}{\partial I},\quad \dot I=0. $$
\end{enumerate}
\end{Thm}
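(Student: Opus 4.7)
My plan is to first establish the torus structure of $M_{\mathbf a}$ via an $\R^n$-action, then construct action-angle coordinates using period integrals of a primitive of the symplectic form, and finally verify symplecticity and the dependence of each $H_i$ on the actions alone.

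For part (1), the Poisson commutativity assumption (a) and the identity $[X_{H_i},X_{H_j}]=X_{\{H_i,H_j\}}$ show that the Hamiltonian vector fields commute pairwise and that each $H_j$ is conserved along the flow of $X_{H_i}$, so $M_{\mathbf a}$ is invariant under each flow. Compactness in (b) yields completeness of all these flows, so the joint map $\Phi^{\mathbf t}=\Phi^{t_1}_{H_1}\circ\cdots\circ\Phi^{t_n}_{H_n}$ defines a smooth $\R^n$-action on $M_{\mathbf a}$. Assumption (c) makes the action locally free, so the stabilizer of any point is a discrete subgroup of $\R^n$; working on a connected component of $M_{\mathbf a}$, the orbit is both open and closed, hence compact, which forces this subgroup to be a full-rank lattice $\Lb$, yielding $M_{\mathbf a}\cong\R^n/\Lb\cong\T^n$.

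For parts (2) and (3), I would work in a tubular neighborhood $U$ of $M_{\mathbf a}$ foliated by nearby tori $M_{\mathbf a'}$, smooth by (c) and the implicit function theorem. These tori are Lagrangian since $\om(X_{H_i},X_{H_j})=\{H_i,H_j\}=0$ and the dimensions match. Choose smoothly varying basis cycles $\gm_1(\mathbf a'),\ldots,\gm_n(\mathbf a')$ and, in a Darboux chart where $\om=d\lb$, define
$$I_i(\mathbf a')=\frac{1}{2\pi}\oint_{\gm_i(\mathbf a')}\lb,$$
which depends only on the torus thanks to the Lagrangian property. The angles $\tt_i$ are then obtained by flowing along the commuting vector fields from a chosen Lagrangian section $\Sigma\subset U$ and rescaling so that $\Lb$ becomes $2\pi\Z^n$. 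Once symplecticity of $\Phi\colon(q,p)\mapsto(\tt,I)$ is in hand, each $K_i=H_i\circ\Phi^{-1}$ is constant on the level sets of $I$ and hence a function of $I$ alone, so Hamilton's equations reduce to $\dot\tt_i=\partial K_i/\partial I$, $\dot I_i=0$.

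The main obstacle is verifying that $\Phi$ is symplectic. My approach is to introduce the generating function $S(q,I)=\int_{q_0(I)}^q\lb$ along a path in the torus $\{I=\mathrm{const}\}$, which is well defined modulo the periods $2\pi I_i$ precisely because the tori are Lagrangian. A direct computation then gives $p_j=\partial S/\partial q_j$ and $\tt_i=\partial S/\partial I_i$, from which $\sum_j dp_j\wedge dq_j=\sum_i dI_i\wedge d\tt_i$ follows. The delicate technical point, which I expect to be the hardest, is arranging the cycles $\gm_i(\mathbf a')$ and the base-point map $q_0(I)$ to vary smoothly across the foliation, so that $S$ patches into a single symplectomorphism defined on all of $U$ rather than one constructed merely fiber by fiber.
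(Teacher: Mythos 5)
The paper does not prove this theorem: it is stated as the classical Liouville--Arnold theorem and the reader is referred to Section 50 of \cite{A89} for the proof. Your outline is precisely the standard argument given there --- the commuting complete flows generating a locally free $\R^n$-action whose compact orbits force a full-rank stabilizer lattice, followed by the construction of action variables as period integrals of a primitive $\lb$ and of angles via a multivalued generating function $S(q,I)=\int\lb$ --- so it is correct and consistent with the cited source. Two minor points worth tightening: the existence of a global primitive $\lb$ on the tubular neighborhood $U$ is not automatic from a Darboux chart but follows because $U$ retracts onto the Lagrangian torus $M_{\mathbf a}$, on which $\om$ restricts to zero, so $[\om|_U]=0$ in $H^2(U;\R)$; and since the paper normalizes $\T^n=\R^n/\Z^n$, the periods should be normalized to $1$ rather than $2\pi$. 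Also note that, as stated, the conclusion that $M_{\mathbf a}\cong\T^n$ implicitly requires restricting to a connected component, exactly as you do.
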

We refer readers to Section 50 of \cite{A89} for the proof and more details. The coordinates $(\theta, I)$ are called action-angle coordinates and can be constructed by standard recipe in \cite{A89}. 

 \subsection{The Kolmogorov-Arnold-Moser theorem}
 The Kolmogorov-Arnold-Moser theorem is an important stability result on the dynamics of nearly integrable systems.
\begin{Def}
A vector $\al\in \R^n$ is called \emph{Diophantine} if there exist some $C>0,\tau>n-1$ such that for all $k\in \Z^n\setminus\{0\}$, we have
$$|\langle\al,k\rangle|\geq \frac{C}{|k|^\tau}. $$
We denote $\al\in \mathrm{DC}(C,\tau)$.
\end{Def}
For each $\tau>n-1$, the set $\cup_{C>0}\mathrm{DC}(C,\tau)$ has full Lebesgue measure in $\R^n$. A version of the KAM theorem states as follows.
\begin{Thm}\label{ThmKAM}
Suppose $\det D^2 h(I)\neq 0$ for $I\in D$. Then for all $C>0,\tau>n-1$ and all $I\in D$ with $\omega(I)\in \mathrm{DC}(C,\tau)$, there exists $\eps_0>0$ and $\ell$ sufficiently large such that when $\eps\|f\|_{C^\ell}<\eps_0$, we have that there is a torus $\mathcal T_I$ that is invariant under the Hamiltonian flow generated by the Hamiltonian $H(\theta,I)=h(I)+\eps f(\theta,I)$. Moreover, there is a symplectic transform $\Phi$ such that $\Phi(\mathcal T_I)=\T^n$ and the conjugated flow $\Phi\circ\phi^t_H\circ\Phi^{-1}$ is the translation $\theta\mapsto \theta+\omega(I)t$ on $\T^n$.
\end{Thm}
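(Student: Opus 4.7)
The plan is to establish persistence of the torus with frequency $\omega(I_0)\in\mathrm{DC}(C,\tau)$ via the classical Kolmogorov--Arnold Newton iteration scheme. First I would translate actions so that the target torus becomes $\{I=0\}$ and Taylor-expand $h$ to write
$$H(\theta,I) = e_0 + \langle \omega_0, I\rangle + \tfrac12\langle A I, I\rangle + O(|I|^3) + \eps f(\theta,I),$$
where $\omega_0 := Dh(I_0)$ and $A := D^2 h(I_0)$ is invertible by the Kolmogorov nondegeneracy hypothesis. The invertibility of $A$ is needed at each iterate to re-center the action so that the frequency of the surviving torus stays locked at $\omega_0$; this is the feature that distinguishes Kolmogorov's scheme from a naive perturbation series.

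At the $j$-th step I would build a near-identity symplectic map $\Phi_j$ generated by $W_j(\theta,I') = \langle I',\theta\rangle + \eps_j S_j(\theta,I')$, with $S_j$ of zero $\theta$-mean solving the homological equation
$$\langle \omega_0, \partial_\theta S_j \rangle = -\bigl(f_j - \bar f_j\bigr),$$
followed by an action translation $I' \mapsto I' + \eps_j b_j$, $b_j := -A_j^{-1}\nabla \bar f_j(0)$, so that the new Hamiltonian $H_{j+1}$ still has $\omega_0$ as the frequency at $I=0$. Expanding $f_j - \bar f_j = \sum_{k\neq 0} f_{j,k}(I') e^{2\pi i \langle k,\theta\rangle}$, the solution is $S_{j,k} = f_{j,k}/(2\pi i \langle k,\omega_0\rangle)$, and the Diophantine bound $|\langle k,\omega_0\rangle|\ge C|k|^{-\tau}$ controls the small divisors at the cost of losing roughly $\tau+n$ derivatives. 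A standard computation then gives the quadratic estimate $\eps_{j+1} \le C_\star\,\eps_j^2$ in a slightly weaker norm.

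The hard part is reconciling this quadratic gain with the derivative loss at each step. The cleanest setting is the real-analytic one: one shrinks the analyticity strip geometrically as $\dt_j = \dt_0\, 2^{-j}$ and uses Cauchy's inequality to trade domain loss for derivatives, whereupon the super-exponential decay $\eps_j \le \eps_0^{2^j}$ dominates the polynomial blow-up factors $\dt_j^{-\tau-n-1}$, provided $\eps$ is small relative to a suitable power of $C$ and $\|f\|$. To obtain the $C^\ell$ statement as stated, one implements Moser's smoothing trick: at step $j$ truncate the Fourier series of $f_j$ at frequencies $|k|\le N_j$ with $N_j = N_0\, \kappa^j$, using $f\in C^\ell$ with $\ell$ sufficiently large so the truncation tail is $o(\eps_j^2)$. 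Standard bookkeeping then shows convergence of $\Phi := \lim_{j\to\infty}\Phi_1\circ\cdots\circ\Phi_j$ to a symplectic embedding whose image is the invariant torus $\mathcal T_I$, on which the conjugated flow is the advertised translation $\theta \mapsto \theta + \omega_0 t$, completing the proof.
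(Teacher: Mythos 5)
The paper does not prove this theorem at all: it is quoted as a classical result with a pointer to P\"oschel's lecture notes \cite{P}. Your sketch is precisely the standard Kolmogorov--Newton scheme that the cited reference carries out --- frequency locking via the invertibility of $D^2h$, the homological equation with Diophantine small divisors, quadratic convergence against geometric domain loss, and Moser-type truncation to pass from the analytic to the $C^\ell$ setting --- so, at the level of an outline, it is correct and matches the intended (cited) proof.
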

We refer readers to \cite{P} for more details. 
\subsection{Arnold's example}\label{SSArnold}
When $n\geq 3$, each invariant torus has codimension $n-1(>1)$ on each energy level so this leaves the possibility that there exists orbit wandering in the complement of the KAM tori.
Arnold in \cite{A64} constructed the following example
\begin{equation}\label{EqArnold}
H(\theta,I,x,y,t)=\dfrac{I^2}{2}+\dfrac{y^2}{2}+(\cos(2\pi x)-1)(1+\eps (\cos(2\pi\theta)+\sin(2\pi t))),
\end{equation}
where $(\theta,I;x,y;t)\in T^*\T^1\times T^*\T^1\times \T$ and proved that
\begin{Thm}[Arnold]\label{ThmArnold64}
In the system \eqref{EqArnold}, for any given $A<B$, there is an orbit  $\{(\theta(t),I(t),x(t),y(t))\}$ of the system and times $t_1<t_2$ with $I(t_1)\leq A$ and $I(t_2)\geq B$, provided $\eps>0$ is small enough. \end{Thm}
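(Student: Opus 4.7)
The plan is to carry out the whiskered-tori / transition-chain / shadowing scheme that Arnold invented for this very example; it serves as the template for the Schwarzschild and Kerr theorems later in the paper. First, at $\eps=0$ the system splits into a rotator $\tfrac12 I^2$ in $(\theta,I)$, a pendulum $\tfrac12 y^2+\cos(2\pi x)-1$ in $(x,y)$ with hyperbolic saddle at the origin and an explicit $\mathrm{sech}$-separatrix, and a periodic time variable $t\in\T$. After autonomization, the three-dimensional cylinder
\[
\Lambda=\{x=y=0\}
\]
is normally hyperbolic, is foliated by the invariant two-tori $\mathcal T_I=\{I=\mathrm{const}\}\subset\Lambda$ with frequency vector $(I,1)$, and its stable and unstable manifolds $W^{s,u}(\Lambda)$ coincide along a three-dimensional separatrix. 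The decisive structural feature of Arnold's perturbation $H_1=(\cos(2\pi x)-1)(\cos(2\pi\theta)+\sin(2\pi t))$ is that it vanishes identically on $\Lambda$, so $\Lambda$, every $\mathcal T_I$, and the linear inner dynamics on them are \emph{exactly} preserved by the perturbed flow --- no KAM is needed to restore invariant tori on the cylinder, which is the main technical difficulty in the general theory.

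Second, quantify the splitting of $W^{s,u}(\Lambda)$ via the Poincar\'e--Melnikov method of Appendix \ref{AppPoincare}. Along the unperturbed homoclinic parametrization $(x^h(s),y^h(s),\theta_0+I_0 s,t_0+s)$, the separable form of $H_1$ reduces the Melnikov integral to two Fourier integrals of the explicit profile $\cos(2\pi x^h(s))-1=-2\,\mathrm{sech}^2(2\pi s)$, giving
\[
\mathcal M(\theta_0,t_0;I_0)=A(I_0)\cos(2\pi\theta_0)+B\sin(2\pi t_0),
\]
with $A(I_0)$ proportional to $I_0/\sinh(\pi I_0/2)$ and $B$ a nonzero constant. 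Both coefficients are nonvanishing, so $\mathcal M$ has nondegenerate zeros in $(\theta_0,t_0)$, which by the classical Melnikov theorem yields transverse homoclinic intersections of $W^{s,u}(\mathcal T_{I_0})$ and, upon restriction to $\Lambda$, a scattering map $\mathcal S_\eps:\Lambda\to\Lambda$ whose action on $I$ satisfies $\mathcal S_\eps(I)-I=O(\eps)$ with definite sign at the chosen homoclinic branch. In particular, for any $I_1,I_2$ with $|I_2-I_1|$ less than a constant times $\eps$, the manifolds $W^u(\mathcal T_{I_1})$ and $W^s(\mathcal T_{I_2})$ meet transversally along a heteroclinic orbit.

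Third, assemble a transition chain. Choose an increasing finite sequence of Diophantine values $I_1\le A<I_2<\cdots<I_{N-1}<B\le I_N$ with successive gaps below the heteroclinic threshold; density of Diophantine numbers makes this possible with $N=O((B-A)/\eps)$ tori, and consecutive $\mathcal T_{I_k}$ are then linked by transverse heteroclinics. Finally, apply Arnold's shadowing lemma (or the obstruction-property argument of Delshams--Gidea--de la Llave--Seara \cite{DLS06,GLS} cited earlier) to produce a genuine orbit of \eqref{EqArnold} that visits a prescribed small neighborhood of each $\mathcal T_{I_k}$ in order; its first and last visits furnish the times $t_1<t_2$ with $I(t_1)\le A$ and $I(t_2)\ge B$.

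The main obstacle one must confront is the smallness of the Melnikov amplitude $A(I)$ as $|I|$ grows: the exponential decay $A(I)\sim I/\sinh(\pi I/2)$ shrinks the admissible chain step correspondingly, and with it the maximal range of diffusion accessible for a given $\eps$. On any fixed compact window $[A,B]$ the amplitude is nevertheless bounded below by a positive constant, so only finitely many chain tori are needed and the shadowing argument closes. The deeper obstruction suppressed by Arnold's special choice --- the \emph{large-gap problem}, where a generic perturbation would leave holes of size $\sqrt\eps$ in the KAM Cantor set on a perturbed cylinder while the scattering map advances the action by only $O(\eps)$ --- does not appear here because $\Lambda$ remains exactly invariant, but it is precisely the technical core of the Schwarzschild and Kerr arguments to follow.
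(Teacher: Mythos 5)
Your proposal is correct and follows essentially the same route as the paper's sketch in Section \ref{SSArnold}: the invariant cylinder $\{x=y=0\}$ of circles $\mathcal C(I)$ preserved exactly because the perturbation vanishes there, the Melnikov computation giving transverse splitting of $W^{s,u}(\mathcal C(I))$ for all $I$, a transition chain of nearby tori, and shadowing. Your version merely supplies the explicit $\mathrm{sech}^2$ profile, the amplitude $A(I)\propto I/\sinh(\pi I/2)$, and the citation of \cite{DLS06,GLS} for the shadowing step, all of which are consistent with the paper's argument.
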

To understand this result of Arnold, we start with the analysis of the mathematical pendulum.
\subsubsection{The pendulum}\label{SSSPendulum}
The mathematical pendulum is prominent in the study of Arnold diffusion whose Hamiltonian is
$$H_0(x,y)=\frac{1}{2}y^2+(\cos 2\pi x-1),\quad (x,y)\in T^*\T. $$

Near the fixed points $O=(0,0)$, the Hamiltonian can be linearized as $\frac{y^2}{2}-\frac{(2\pi x)^2}{2}$. The linearized Hamiltonian equation is $\begin{cases}
\dot x=y\\
\dot y=4\pi^2 x\end{cases},$ so the fixed point is hyperbolic.  Let $O$ be the hyperbolic fixed point and $\phi^t:\ T^*\T\to T^*\T,\ t\in\R,$ be the flow generated by the Hamiltonian vector field. We define the stable $(W^s)$ and unstable $(W^u)$ manifolds of the fixed point $O$ as
$$W^s(O)=\{z\in T^*\T\ |\ \phi^t(z)\to O,\ \mathrm{as}\ t\to+\infty\},$$$$W^u(O)=\{z\in T^*\T\ |\ \phi^t(z)\to O,\ \mathrm{as}\ t\to-\infty\}.$$
For the pendulum, due to the 1-periodicity in $x$, $(0,0)=(1,0)$, we see that $W^s(O)$ coincides with $W^u(O)$ consisting of an entire homoclinic orbit denoted by $\{(x_0(t),y_0(t)),\ t\in \R\}$ with $(x_0(t),y_0(t))\to (0,0)$ as $t\to\pm\infty$.

It was discovered by Poincar\'e that the stable and unstable manifold will split (i.e. intersect transversely) if a generic time-periodic perturbation is added. Let us consider the perturbed Hamiltonian
$$H_\eps(x,y,t)=\frac{1}{2}y^2+(\cos 2\pi x-1)+\eps H_1(x,y,t),\quad (x,y)\in T^*\T, $$
where $H_1(x,y,t)=H_1(x,y,t+1)$. In this case, the Hamiltonian equation is time-dependent, so its solution is no longer a diffeomorphism on $T^*\T$ for each $t\in \R$. Instead, we consider the time-1 map denoted by $\phi^1_\eps$, which is indeed a diffeomorphism on $T^*\T$, due to the 1-periodic dependence on $t$ of $H_1$. We redefine the stable and unstable manifolds as
\begin{equation}\label{EqWSU}
\begin{aligned}
W_\eps^s(O)&=\{z\in T^*\T\ |\ \phi_\eps^n(z)\to O,\ \mathrm{as}\ n\to+\infty\},\\
W_\eps^u(O)&=\{z\in T^*\T\ |\ \phi_\eps^n(z)\to O,\ \mathrm{as}\ n\to-\infty\}.
\end{aligned}
\end{equation}
The splitting of $W^s_\eps(O)$ and $W^u_\eps(O)$ is the main mechanism responsible for the nonintegrability of the perturbed system.
The general method of measuring the separatrix splitting to the leading order is the nonvanishingness of the Melnikov function
\begin{equation}\label{EqMelnikov2}\mathcal M(\nu)=\int_\R\{H_0,H_1\}(x_0(t),y_0(t),t+\nu)\,dt,\end{equation}
where the Poisson bracket is given explicitly as $ \{H_0,H_1\}=\partial_x H_0\cdot\partial_y H_1-\partial_y H_0\cdot\partial_x H_1.$ The derivation of the Melnikov function can be found in the Appendix \ref{AppPoincare}.

\subsubsection{Arnold diffusion in Arnold's example}
In this section, we explain the proof of Theorem \ref{ThmArnold64}. We first consider the case of $\mu=0$. The resulting system has two degrees of freedom. Away from the set $\{\frac{y^2}{2}+(\cos(2\pi x)-1)=0\}$, the system is integrable in the Liouville-Arnold sense.

In the phase space $T^*\T^2$ there exists the cylinder $$\mathcal C=\{(\theta,I,x,y)=(\theta,I,0,0),\ I\in \R,\ \theta\in \T^1\}.$$
When restricted to $\mathcal C$, the resulting Hamiltonian system is given by the integrable Hamiltonian $H=\frac{I^2}{2}$.
 Each circle $\mathcal C(I):=\{I=\mathrm{const},\ \theta\in \T,\ x=0,\ y=0\}$ in the cylinder is invariant under the Hamiltonian flow of $H_0$. 
Each circle $\mathcal C(I)$ also has stable and unstable manifolds denoted by $W_I^{u,s}:=\{z\in T^*\T^2\ |\ \mathrm{dist}(\phi^t(z), \mathcal C(I))\to 0,\ t\to\pm\infty\}$ with $+$ for $s$ and $-$ for $u$.

When the time-dependent perturbation is turned on ($\eps>0$), using the Melnikov function \eqref{EqMelnikov2} in the previous subsection,  it can be verified that the stable manifold $W^s_I$ and the unstable manifold $W^u_I$ of $\mathcal C(I)$ intersect transversely for all $I$. 
Therefore the transversality implies that $W^u_I$ intersects $W^s_{I'}$ transversely if $I$ and $I'$ is sufficiently close.  Then orbits can be found to shadow a sequence of $W_I^{u/s}$ chain to have large oscillation of $I$.

 \section{Arnold diffusion in the perturbed Schwarzschild spacetime}\label{SS}
 We study the dynamics of a particle moving in the Schwarzschild background and show that Arnold diffusion exists when the metric is properly perturbed.
 \subsection{Schwarzschild dynamics}\label{SSMetricS}
 The Schwarzschild metric is as follows
 \begin{equation}
 ds^2=-\left(1-\frac{2M}{r}\right)d\tau^2+\frac{1}{1-\frac{2M}{r}} dr^2+r^2(d\theta^2+\sin^2\theta d\varphi^2)=g_{\mu\nu}dx^\mu dx^\nu
\end{equation}
 with $x^0=\tau$ the time coordinate, $x^1=r$ is the polar radius, and $x^2=\theta\in [0,\pi], x^3=\varphi\in[0,2\pi)$ are the spherical coordinates, where $\theta$ is the latitudinal angle and $\varphi$ is the azimuthal angle. Here $M$ is the mass of the blackhole, and $r=2M$ is the event horizon. We are only interested in the dynamics of a particle moving outside of the event horizon. In the following, we denote $\al=\left(1-\frac{2M}{r}\right)$.

 We view the metric as twice of a Lagrangian $L(x,\dot x)=\frac12 g_{\mu\nu}\dot x^\mu \dot x^\nu$ (here $\dot x=\frac{dx}{dt}$ means the derivative with respect to the particle's proper time   $t$) and obtain its conjugate Hamiltonian after a formal Legendre transform as follows
 \begin{equation}\label{EqHamS}
\begin{aligned}
& \begin{cases}
 \frac{d\tau}{dt}&=-\al^{-1}p_\tau, \\
 \frac{dr}{dt}&=\al p_r,\\
 \frac{d\theta}{dt}&= \frac{p_\theta}{r^2},\\
  \frac{d\varphi}{dt}&=\frac{p_\varphi}{r^2\sin^2\theta},\\
\end{cases}\\
  2H(x,p)&=2(\dot x^\mu p_{x^\mu}-L(x,\dot x))=g^{\mu\nu}p_\mu p_\nu\\
  &=-\al^{-1}p_\tau^2+\al p_r^2+r^{-2}\left(p_\theta^2+\frac{1}{\sin^2\theta} p_\varphi^2\right).
\end{aligned}
\end{equation}
The Hamiltonian system has four conserved quantities as $H, p_\tau, p_\varphi, L^2=p_\theta^2+\frac{1}{\sin^2\theta} p_\varphi^2$ with the physical meanings: the total Hamiltonian, the particle energy, the third component of the angular momentum and the square of the total angular momentum respectively. In the following, we also use $L_z$ for $p_\varphi$ and $E$ for $p_\tau$.

We consider a particle moving in the Schwarzschild spacetime along geodesics with invariant mass $\mu$, so along a geodesic, we have
$g_{\mu\nu} \frac{dx^\mu}{dt}\frac{dx^\nu}{dt}=-\mu^2. $
We choose $\mu^2=0$ for massless particles and $\mu^2=1$ for massive particles, and the geodesics on the former case is called a lightlike geodesic and the latter timelike. In this paper, we mainly focus on the $\mu^2=1$ case.
\subsubsection{The radial dynamics, critical points and homoclinic orbits}\label{SSSABC}
We next analyze the radial dynamics. Setting $p_\tau=E$, $H=-\frac12\mu^2$ and $\frac{dr}{dt}=\al  p_r$, we obtain
 \begin{equation}\label{ThmRadialS}
 \begin{aligned}
\frac{1}{2}E^2&=\frac12\left(\frac{dr}{dt}\right)^2+V(r),\\
2V(r)&=\left(\mu^2+\frac{L^2}{r^2}\right)\al =\mu^2-\frac{2\mu^2 M}{r}+\frac{L^2}{r^2}-\frac{2ML^2}{r^3}.
\end{aligned}
\end{equation}
Thus we may visualize the radial dynamics in the $(r,\frac{dr}{dt})$-plane as a particle moving in the potential well of $V$. The critical points of $V$ corresponds to orbits with $r=\mathrm{const}$. Setting
$\frac{dV}{dr}=0$ we obtain
$$\mu^2 M r^2-L^2 r+3ML^2=0$$
with solutions (for $\mu^2=1$)
\begin{equation}\label{Eqrpm}r_\pm=\frac{L^2\pm \sqrt{L^4-12M^2L^2}}{2M}. \end{equation}
For $L$ satisfying $L^2>12M^2$, there are two solutions. The one with $-$ is a local maximum of $V$ hence  $(r_-,0)$ is a hyperbolic fixed point and that with $+$ is a local minimum of $V$ hence $(r_+,0)$ is an elliptic fixed point. All orbits with initial value $r<r_-$ plunge into the blackhole.
When $L^2=12M^2$, the two critical points of $V$ coincide. The value $12M^2$ is called $L_{isco}^2$ where \emph{isco} stands for the \emph{innermost stable circular orbit}. For all values $L^2<L^2_{isco}$, all orbits plunge into the central blackhole.  When $L^2=L^2_{isco}=12M^2$ we have $r_+=r_-=6M:=r_{isco}$ and We also denote $\frac12E_{isco}^2=V(r_{isco})=\frac{4}{9}$. The set $\{r=r_-\}$ is called the \emph{photon sphere}, orbits on which remain on a circle with constant radius $r_-$ and are unstable under radial perturbations.

We next study the homoclinic orbits to the photon sphere. This involves comparing the value $V(r_-)$ with $V(\infty)=\frac{1}{2}$. We note that $$\frac{d V}{d L^2}\big|_{r=r_-}=\frac{\partial V}{\partial r_-}\frac{dr_-}{dL^2}+\frac{\partial V}{\partial L^2}=\frac{\partial V}{\partial L^2}=\frac{1}{r_-^2}(r_--2M)>0.$$
This implies that $V(r_-)$ as a function of $L^2$ is strictly increasing. When $L^2=L_{isco}^2$ we have $V(r_{isco})=4/9<V(\infty)=1/2$.

We denote by $L^2_{homo}$ the value of $L^2$ such that $V(r_-)=V(\infty)$.
There are three cases depending on the values of $L^2$:

\begin{figure}
\begin{minipage}[t]{0.5\linewidth}
\centering
\includegraphics[width=2.2in]{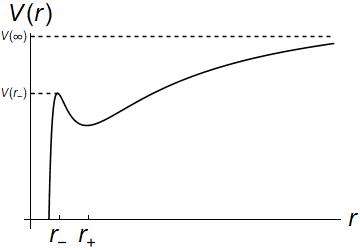}
\subcaption{Case (A)}
\label{fig:side:a}
\end{minipage}%
\begin{minipage}[t]{0.5\linewidth}
\centering
\includegraphics[width=2.2in]{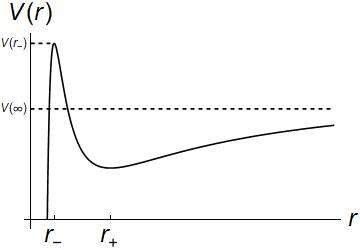}
\subcaption{Case (B)}
\label{fig:side:b}
\end{minipage}
\caption{The Schwarzschild effective potential}
\label{FigV}
\end{figure}

\begin{enumerate}
\item[(A)] $V(r_-)<V(\infty)$, i.e. $L^2_{isco}<L^2<L_{homo}^2,$ see Figure \ref{fig:side:a}. In this case, there is a homoclinic orbit to the hyperbolic fixed points $(r,\frac{dr}{dt})=(r_-,0)$. In the $(r,\frac{dr}{dt})$-plane, the orbit is given explicitly as
$$\mathrm{Graph}\left\{\frac{ dr}{dt}=\pm\sqrt{ E^2-2V(r)},\quad E^2=2V(r_-)\in \left(\frac49,\frac12\right)\right\}.$$
\item[(B)] $V(r_-)>V(\infty)$, i.e. $L_{homo}^2<L^2,$ see Figure \ref{fig:side:b}. In this case, we may view $\infty$ as a degenerate hyperbolic fixed point after a transformation (see below Section \ref{SOsc}).
Then it also has a homoclinic orbit
$$ \mathrm{Graph}\left\{\frac{dr}{dt}=\pm\sqrt{ E^2-2V(r)},\quad E^2=2V(\infty)=1\right\}.$$
\item[(C)] $V(r_-)=V(\infty)$, i.e. $L^2=L_{homo}^2.$ This is a critical case interpolating between the above two cases. There is an orbit converging to $(r_-,0)$ as $t\to\infty$ and to infinity as $t\to-\infty$, and \emph{vice versa}.
\end{enumerate}

\begin{Rk}
We remark that the dynamics of a massless particle differs drastically from a massive particle. For a massless particle, we take $\mu^2=0$ and consider lightlike geodesics. It turns out that there is only one critical point $r_c$ of $V$ that is a global max. If the initial position $r=r_c$, then the particle remains on the photon sphere $\{r=r_c\}$. If the initial position $r>r_c$, then it escapes to infinity and if $r<r_c$, it falls into the blackhole.
\end{Rk}
\subsubsection{The normally hyperbolic invariant manifold}
Let us consider case (A) first. In the Hamiltonian system \eqref{EqHamS}, there is no explicit dependence on $\tau$. If we fix the constant $p_\tau^2=E^2=2V(r_-)$, which is a function of $L^2$ and $M$, then we can treat the Hamiltonian $H$ in \eqref{EqHamS} as a system of three degrees of freedom depending on the variables $r,p_r,\theta,p_\theta,\varphi,p_\varphi$. The four dimensional submanifold $$N:=\left\{r=r_-,\ \frac{dr}{dt}=0\right\}$$ is a normally hyperbolic invariant manifold (NHIM) in the sense of Definition \ref{DefNHIM}. One remarkable property of the NHIM is its persistence under small perturbations, which is summarized in Theorem \ref{ThmNHIM} in Appendix \ref{SNHIM}. The normal Lyapunov exponents are $\pm \sqrt{-V''(r_-)}$ obtained as the eigenvalues of the linearized radial Hamiltonian equation $$\left(\begin{array}{c} \dt r\\
\dt\dot r\end{array}\right)'=\left(\begin{array}{cc}0&1\\
-V''(r_-)&0\end{array}\right)\left(\begin{array}{c} \dt r\\
\dt\dot r\end{array}\right)$$
 at $r=r_-$. The Hamiltonian system restricted to $N$ is a Hamiltonian system of two degrees of freedom of the form
 \begin{equation}\label{HamN}H_N(\theta,p_\theta,\varphi,p_\varphi)=r_-^{-2}\left(p_\theta^2+\frac{1}{\sin^2\theta} L_z^2\right)-\al^{-1}E^2.
\end{equation}
The system has two degrees of freedom and is integrable in the sense of Theorem \ref{ThmLA}. So we can introduce action-angle coordinates $(\al_\theta,J_\theta,\al_\varphi,J_\varphi)\in \T\times \R_+\times \T\times \R$ to write the Hamiltonian as a function of $J_\theta,J_\varphi$ only. We refer readers to \cite{X} for more details. 

The NHIM $N$ has its stable and unstable manifolds $W^s(N)$ (defined as the set of points whose forward limit under the flow is $N$) and $W^u(N)$ (the set of points whose backward limit is $N$) coincide which consists of homoclinic orbits approaching $N$ in both the future and the past.
 \subsubsection{The homoclinic orbit}
 In this section, we solve for the homoclinic orbit in case (A) (see Figure \ref{fig:side:aHomo}).
 We perform a time reparametrization $d\lambda=\frac{L_z}{r^2} dt$ and solve for $r,p_r, \theta, p_\theta$ as functions of the new time $\lambda$. We will call $\lambda$ the Mino time.  Introducing $u=\frac1r$, we get that the equations of motion has the form
 \begin{equation}\label{EqHomo}
 \begin{cases}
\left(\frac{d u}{d\lambda}\right)^2&=\frac{E^2}{L^2}-(1-Mu)\left(u^2+\frac{1}{L^2}\right),\\
\left(\frac{d\theta}{d\lambda}\right)^2&=\frac{L^2}{L_z^2}-\frac{1}{\sin^2\theta},\\
 \frac{d\varphi}{d\lambda}&=\frac{1}{L_z^2\sin^2\theta},\\
  \frac{dL_z}{d\lambda}&=0.\\
 \end{cases}
  \end{equation}

 We first solve for $u$ as a function of $\lambda$. Since we are considering case $(A)$, the polynomial on the RHS of the $\left(\frac{d u}{d\lambda}\right)^2$ equation has a double root $u_1=1/r_-$,
  so we can write $\left(\frac{d u}{d\lambda}\right)^2=(u-u_1)^2(u-u_2)$ where $u_2<u_1$. The equation can be solved explicitly as \begin{equation}\label{EqRadial}u(\lambda)=u_1-a^2\mathrm{sech}^2(a(\lambda-\lambda_0)/2), \ a^2=u_1-u_2.\end{equation}

\begin{figure}
\begin{minipage}[t]{0.5\linewidth}
\centering
\includegraphics[width=2.in]{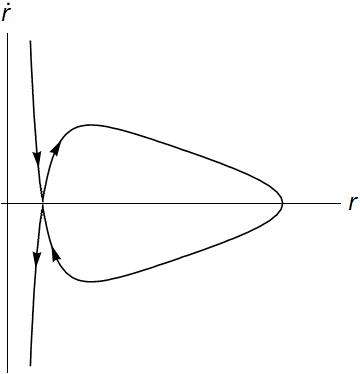}
\subcaption{Homoclinic orbit to the hyperbolic fixed point}
\label{fig:side:aHomo}
\end{minipage}%
\begin{minipage}[t]{0.5\linewidth}
\centering
\includegraphics[width=2.in]{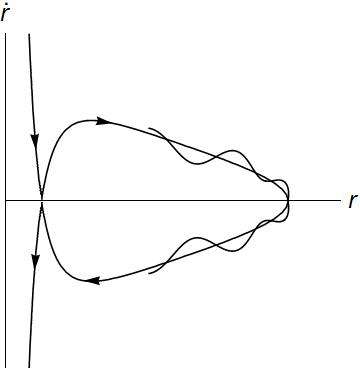}
\subcaption{Separatrix splitting}
\label{fig:side:bHomo}
\end{minipage}
\caption{Integrable vs nearly integrable}
\end{figure}


Similarly, for the variable $\theta$, we get (defining $b=\frac{L}{L_z}\geq 1$) $$\lambda=\lambda_0+\int_{\theta(0)}^\theta\frac{\sin\theta\,d\theta}{\sqrt{b^2\sin^2\theta-1}}=\lambda_0-\frac{1}{b}\arcsin\left(\frac{b\cos x}{\sqrt{b^2-1}}\right)\Big|_{\theta(0)}^\theta.$$
 Choosing $\lambda_0=0$, then by solving the inverse function, we get \begin{equation}\label{EqVertical}\cos\theta=\cos\theta(0)+\frac{\sqrt{b^2-1}}{b}\sin(b\lambda)\end{equation} as a function of $\theta(0)$ and $\lambda$. From \eqref{EqHamS}, we get $p_\theta=L_z\frac{d\theta}{d\lambda}$, thus we obtain $p_\theta$ as a function of $\theta(0),p_\theta(0),\lambda$ from \eqref{EqHomo}. We also solve from \eqref{EqHomo} $\varphi$ as a function of $\theta(0)$ and $\lambda$.


For future reference, we introduce some notations.
\begin{Not}\label{NotS1}
We denote by $\Gamma(\lambda)=(r(\lambda), p_r(\lambda)),$ $\lambda\in \R,$ the homoclinic orbit in the $(r,p_r)$-plane and by $\Gamma_c=(r_c,0),\ r_c=r_-$. We denote by $\Xi=(\theta, p_\theta, \varphi,p_\varphi)$ and by $\Xi_0$ 
the initial condition for the $\Xi$ variables. We further denote by $(\Gamma(\nu+\lambda), \Xi(\Xi_0,\lambda)),\ \lambda\in \R$ an orbit of Schwarzschild Hamiltonian system with initial condition $(\Gamma(\nu),\Xi_0)$. 
 \end{Not}
\subsection{Arnold diffusion via zoom-whirl orbits, stationary perturbations}\label{SSADStat}
In this section, we consider perturbation to the Schwarzschild metric and obtain our first result on Arnold diffusion. The perturbations that we consider in this section are stationary, i.e. independent of $\tau$, hence the particle's energy $E$ remains a conserved quantity. We shall fix a value of $E\in (4/9,1/2)$, so that we are in case (A) in the classification in Section \ref{SSSABC}. Then we get a Hamiltonian system of three degrees of freedom.
Let $g_{\mu\nu}dx^\mu dx^\nu$ be the Schwarzschild metric and $\eps h_{\mu\nu}dx^\mu dx^\nu$ be a perturbation. We treat $1/2$ of the the perturbed metric $\tilde g_{\mu\nu}dx^\mu dx^\nu=g_{\mu\nu}dx^\mu dx^\nu+\eps h_{\mu\nu}dx^\mu dx^\nu$ as the Lagrangian and obtain the corresponding Hamiltonian
$$H_\eps=\frac{1}{2}\tilde g^{\mu\nu}p_\mu p_\nu=\frac{1}{2}g^{\mu\nu}p_\mu p_\nu-\eps \frac12 g^{\al\mu}h_{\mu\nu}g^{\beta\nu}p_\al p_\beta+O(\eps^2). $$
We denote by $$H_1(r,p_r, \theta,p_\theta, \varphi,p_\varphi)=\eps^{-1}\frac12(\tilde g^{\mu\nu}-g^{\mu\nu})dx^\mu dx^\nu=-\frac12  g^{\al\mu}h_{\mu\nu}g^{\beta\nu}p_\al p_\beta+O(\eps).$$
 We will consider perturbations depending on $\varphi$ so that $L_z$ is no longer a constant of motion.

 By the theorem of NHIM (c.f. Theorem \ref{ThmNHIM}), the NHIM $N$ is slightly perturbed. To apply the theorem of NHIM, we shall choose a large constant $C$ and consider a smooth cut-off function $\chi$ defined on the phase space that is one on the ball of radius $C$ and is zero outside the radius $C+1$, then we modify the perturbation to $-\frac12\eps \chi H_1$. We will only focus on the dynamics within the $C$-ball. This is a routine procedure to localize to a compact part of the phase space when applying the theorem of NHIM, and we will do it all the time later without explicitly mentioning it.

 We denote by $N_\eps$ the perturbed NHIM for the perturbed system $H_\eps$ and denote by $H_{N_\eps}$ the restriction of $H_\eps$ on $N_\eps$, which is a small perturbation of the Hamiltonian $H_N$. Now the situation is analogous to Arnold's example in Section \ref{SSArnold}. We can fix the total energy level $H_\eps=-\frac12$. Then the restricted system on $N_\eps$ has one and half degrees of freedom, which is analogous to the cylinder $T^*\T^1\times \T^1(\ni(I,\theta,t))$ in Arnold's example.  The $r$-component is analogous to the pendulum subsystem in Arnold's example, due to the existence of a hyperbolic fixed point and a homoclinic orbit in either case. The union of the homoclinic orbits is the stable and unstable manifolds of the unperturbed $N$. When generically perturbed,  
 we have that  the stable manifold $W^s(N_\eps)$ of $N_\eps$ and the unstable manifold $W^u(N_\eps)$ intersect transversally.  This gives the necessary ingredient to implement Arnold's mechanism.
  A point in the intersection $W^s(N_\eps)\cap W^u(N_\eps)$ gives rise to an orbit converging to $N_\eps$ under both forward and backward iterates but its $L_z$-value in the future may differ from that in the past. This motivates the following definition.
 \begin{Def}[The scattering map]\label{DefScatter}
Consider the flow $\Phi_t:\ M\to M$ and a NHIM $\widetilde N\subset M$. Assume that $\Gamma\subset W^s(\widetilde N)\cap W^u(\widetilde N) $ is a homoclinic manifold and assume that the intersection of $W^s(\widetilde N)$ and $W^u(\widetilde N)$ is transversal. We define the scattering map $\mathcal S:\ \widetilde N\to \widetilde N$ via $\mathcal S(z_-):=z_+$ for $z_\pm\in \widetilde N$ if there exists $z\in \Gamma$ such that
$$\mathrm{dist}(\Phi_t(z),\Phi_t(z_\pm))\to 0,\quad \mathrm{for\ } t\to \pm\infty.$$
\end{Def}
The next theorem gives an approximation of the scattering map in terms of the Melnikov function. We remark that our expression of the Melnikov function is nonstandard for the reason that the time reparametrization $\lambda$ changes the Hamiltonian structure. We postpone the proof to Appendix \ref{SSScattering}.

\begin{Thm}\label{ThmScatteringS1}
The scattering map $\mathcal S:\ N_\eps\to N_\eps$ is a well-defined symplectic map preserving the energy level set $\{H_{N_\eps}=-\frac12\}$ and $O(\eps)$-close to identity in the $C^1$ norm. Suppose that there exists an open set $U\subset N_\eps\cap H_{N_\eps}^{-1}(-\frac12)$ such that the map
$$\nu\mapsto J(\nu,\Xi_0)$$
has a nondegenerate critical point $\nu^*=\nu^*(\Xi_0)$ for all $\Xi_0\in U$, where  \begin{equation}
\label{EqMelnikov}J(\nu,\Xi_0)=\int_\R  \frac{r(\nu+\lambda)^2}{L_z}H_1(\Gamma(\nu+\lambda),\Xi(\Xi_0,\lambda))-\frac{r^2_c}{L_z}H_1(\Gamma_c,\Xi(\Xi_0,\lambda))\,d\lambda.\end{equation}

Then explicitly up to an error $o_{C^1}(1)$ we have for $\star=\varphi$
\begin{equation}\label{EqScattering}
\begin{aligned}
&\frac{1}{\eps}(p_\star(z_+)-p_\star(z_-))\\
&=-\int_\R \frac{r(\nu^*+\lambda)^2}{L_z}\partial_\star H_1(\Gamma(\nu^*+\lambda),\Xi(\Xi_0,\lambda))-\frac{r^2_c}{L_z}\partial_\star H_1(\Gamma_c,\Xi(\Xi_0,\lambda))\,d\lambda.
\end{aligned}
\end{equation}

\end{Thm}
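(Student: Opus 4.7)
\textbf{Proof plan for Theorem \ref{ThmScatteringS1}.}

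The plan is to combine the persistence theory of normally hyperbolic invariant manifolds with a Melnikov-type computation adapted to the Mino-time reparametrization $d\lambda=(L_z/r^2)\,dt$. First I would invoke the NHIM theorem (Theorem \ref{ThmNHIM}) to obtain the perturbed invariant manifold $N_\eps$ together with its stable and unstable foliations $W^{s,u}(N_\eps)$, which are $O(\eps)$-close in a sufficiently high $C^k$ topology to the unperturbed $W^{s,u}(N)$. Parametrize the unperturbed homoclinic manifold by the time shift $\nu$ along a reference orbit and by the asymptotic phase $\Xi_0\in N$; then $W^u(N_\eps)$ and $W^s(N_\eps)$, restricted to the common level $\{H_{N_\eps}=-1/2\}$, may be written as graphs over this parametrization. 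The signed splitting distance between the two graphs, measured in the direction symplectically conjugate to $\nu$, equals $\eps\,\partial_\nu J(\nu,\Xi_0)+O(\eps^2)$ by the Poincar\'e--Melnikov expansion recalled in Appendix \ref{AppPoincare}. A nondegenerate critical point $\nu^*$ of $\nu\mapsto J(\nu,\Xi_0)$ thus yields, via the implicit function theorem, a smooth transverse homoclinic sheet on which the scattering map $\mathcal S$ is well-defined.

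Symplecticity and the near-identity property follow from abstract principles. The wave maps $\Omega^\pm:W^{s,u}(N_\eps)\to N_\eps$ sending a point to its forward/backward asymptotic phase are symplectic along leaves, so $\mathcal S=\Omega^+\circ(\Omega^-)^{-1}$ is symplectic (see \cite{DLS06}); energy preservation is automatic because $\mathcal S$ lives in the single level set $\{H_\eps=-1/2\}$. When $\eps=0$ every homoclinic orbit has coincident forward and backward asymptotic phases, so $\mathcal S=\mathrm{id}$ at $\eps=0$, and smoothness in $\eps$ forces $\mathcal S=\mathrm{id}+O(\eps)$ in $C^1$.

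The heart of the argument is the leading-order formula \eqref{EqScattering}. Write $\gamma_\eps$ for the perturbed homoclinic orbit through the intersection point and $\gamma_\eps^\pm$ for its forward/backward asymptotic orbits on $N_\eps$; then
\begin{equation*}
p_\star(z_+)-p_\star(z_-)=\int_0^{\infty}\partial_t\bigl(p_\star(\gamma_\eps)-p_\star(\gamma_\eps^+)\bigr)\,dt+\int_{-\infty}^{0}\partial_t\bigl(p_\star(\gamma_\eps)-p_\star(\gamma_\eps^-)\bigr)\,dt.
\end{equation*}
Hamilton's equation $\dot p_\star=-\partial_\star H_\eps$ together with $\partial_\star H_0\equiv 0$ (the Schwarzschild Hamiltonian is axisymmetric in $\varphi$) reduces each integrand to $-\eps[\partial_\star H_1(\gamma_0)-\partial_\star H_1(\gamma_0^\pm)]+O(\eps^2)$. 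To leading order $\gamma_0$ and $\gamma_0^\pm$ share the same $\Xi$-coordinates and differ only in $(r,p_r)$, with $\gamma_0^\pm=(\Gamma_c,\Xi)$; hence the integrand becomes $\partial_\star H_1(\Gamma(\nu^*+\lambda),\Xi(\Xi_0,\lambda))-\partial_\star H_1(\Gamma_c,\Xi(\Xi_0,\lambda))$ after passing to Mino time, and the Jacobian $dt=(r^2/L_z)\,d\lambda$ supplies the prefactors $r(\nu^*+\lambda)^2/L_z$ and $r_c^2/L_z$ exactly as in \eqref{EqScattering}.

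The main obstacle will be the convergence and error analysis of the Melnikov-type integrals: in Mino time the homoclinic radius $r(\nu+\lambda)$ tends to $r_c$ only at the rate given by \eqref{EqRadial}, while $\partial_\star H_1$ need not vanish asymptotically. The pointwise subtraction of $(r_c^2/L_z)\,\partial_\star H_1(\Gamma_c,\Xi(\Xi_0,\lambda))$ inside the integrand is precisely what cancels the nondecaying part, leaving a remainder that decays like $\mathrm{sech}^2(a\lambda/2)$ and is therefore integrable. A secondary technical point is to control $\gamma_\eps(t)-\gamma_0(t)$ by Gronwall on long time windows $|t|\le T(\eps)=O(\log(1/\eps))$ and to show that the contribution of the tails $|t|>T(\eps)$ is $o(\eps)$, so that the formal expansion above yields the stated $o_{C^1}(1)$ remainder.
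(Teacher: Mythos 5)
Your proposal is correct and follows essentially the same route as the paper's proof in Appendix \ref{SSScattering}: persistence of $N_\eps$ via the NHIM theorem, a Poincar\'e--Melnikov measurement of the splitting along $Dh_0$, and a Newton--Leibniz/Gronwall computation of $p_\star(z_+)-p_\star(z_-)$ split over $t\ge 0$ and $t\le 0$ with the tails controlled by the exponential convergence to $\Gamma_c$. The only cosmetic difference is that the paper works throughout with the Mino-time auxiliary Hamiltonian $\mathcal H_\eps=\frac{r^2}{L_z}(H_\eps+\frac12)$ while you compute in proper time and insert the Jacobian $dt=\frac{r^2}{L_z}d\lambda$ at the end; both yield the same prefactors in \eqref{EqScattering}.
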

With the scattering map, we are now ready to state our first result on Arnold diffusion.
\begin{Thm}\label{ThmGLS}
Let $\eps h_{\mu\nu}dx^\mu dx^\nu$  be a stationary perturbation to the Schwarzschild metric. Suppose
\begin{enumerate}
\item the assumption of Theorem \ref{ThmScatteringS1} holds;
\item the RHS of \eqref{EqScattering} with $\star=\varphi$ is nonvanishing at some point $\Xi_0\in U$. 
\end{enumerate}
Then  there exist $\rho>0$ and $\eps_0>0$, such that for all $|\eps|<\eps_0$, there exist  $T>0$ and an orbit of the perturbed system such that $L(t)^2\in (L^2_{isco}, L^2_{homo})$ for all time $t\in [0,T]$ and
$$|L_z(T)-L_z(0)|> \rho. $$
\end{Thm}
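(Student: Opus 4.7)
The plan is to apply the geometric method for Arnold diffusion of Delshams--Gidea--de la Llave--Seara and Treschev \cite{DLS06,GLS,Tr1,Tr2} on the perturbed photon-shell NHIM $N_\eps$. First I would fix the total energy level $\{H_\eps=-\tfrac12\}$, premultiply the perturbation by a smooth cutoff so that we work in a large compact region in phase space, and invoke the NHIM persistence result (Theorem \ref{ThmNHIM}) to produce $N_\eps$ as a smooth compact symplectic submanifold on which the restricted Hamiltonian $H_{N_\eps}$ is an $O(\eps)$-perturbation of the integrable \eqref{HamN}. The fixed-energy slice of $N_\eps$ is three-dimensional and plays the role of Arnold's cylinder $T^*\T^1\times \T^1$ from Section \ref{SSArnold}.

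Two dynamical objects are available on this slice: the inner flow of $H_{N_\eps}$ and the outer scattering map $\mathcal{S}$ of Theorem \ref{ThmScatteringS1}. Hypothesis (1) makes $\mathcal{S}$ a well-defined symplectic map, $O(\eps)$-close to identity on the open set $U$, preserving the energy. Hypothesis (2) together with formula \eqref{EqScattering} gives
$$p_\varphi(\mathcal{S}(z))-p_\varphi(z)=\eps\,B(\Xi_0)+o(\eps),\qquad B(\Xi_0)\neq 0,$$
on an open subset of $U$, so each application of $\mathcal{S}$ shifts $L_z$ by a definite amount of the same sign. If these iterates could be realized as pieces of a true orbit, then $O(\eps^{-1})$ applications would produce an $O(1)$ change in $L_z$, and by continuity the orbit would remain in $\{L^2\in(L^2_{isco},L^2_{homo})\}$ for a finite horizon $T$.

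The obstruction to realizing such iterates is the familiar large-gap problem. KAM (Theorem \ref{ThmKAM}) applied to $H_{N_\eps}$ yields a Cantor family of primary invariant $2$-tori filling most of the energy slice, with gaps of width $O(\sqrt{\eps})$ around the resonances of the inner frequency map. Away from resonances, the transition-chain condition of \cite{DLS06,GLS} reduces to checking that $\mathcal{S}$ is not tangent to the KAM foliation, i.e.\ that the $L_z$-increment \eqref{EqScattering} is not a function of the action variables of $H_N$ alone: this is immediate from hypothesis (2) and the genuine dependence of $J$ on $\Xi_0$. Inside each resonant gap, a resonant normal form for $H_{N_\eps}$ produces secondary tori and lower-dimensional whiskered tori with heteroclinic connections between them, and one checks via the Fourier coefficients of the Melnikov function $J(\nu,\Xi_0)$ that $\mathcal{S}$ crosses these secondary objects transversely as well. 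Concatenating primary and secondary objects yields a transition chain in $N_\eps\cap\{H_\eps=-\tfrac12\}$ whose endpoints differ in $L_z$ by more than $\rho$.

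The final step is to invoke the shadowing / obstruction lemma for NHIMs of \cite{GLS} (or equivalently a chain of $\lambda$-lemma arguments) to glue together inner-flow arcs and homoclinic excursions realizing $\mathcal{S}$ into an actual orbit of $H_\eps$ that tracks the transition chain within $O(\eps)$ over a finite time $T$. The main obstacle is the large-gap analysis at resonances: off resonance the transversality is automatic from hypothesis (2), but inside the resonant gaps one must perform an explicit Fourier analysis of $J$ against the closed-form homoclinic \eqref{EqRadial}--\eqref{EqVertical} to verify that the Melnikov perturbation generates the secondary invariant objects and that the scattering map crosses them. The explicit form of $H_N$ and of the unperturbed homoclinic makes this check direct but delicate, and it is the only place where the generic nature of the perturbation (beyond the pointwise nondegeneracy in (2)) is implicitly used.
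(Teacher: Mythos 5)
Your setup (cutoff, NHIM persistence, energy reduction, and the use of hypotheses (1)--(2) to get a scattering map that shifts $L_z$ by $\eps B(\Xi_0)+o(\eps)$ with $B\neq 0$) matches the paper. But the mechanism you propose for gluing the scattering-map jumps into a true orbit --- KAM tori for $H_{N_\eps}$, the large-gap problem, resonant normal forms and secondary tori --- is the Delshams--de la Llave--Seara/Treschev route, and it fails here. The inner Hamiltonian \eqref{HamN} on the photon-sphere NHIM depends on the actions only through $L^2$ (it is essentially $r_-^{-2}L^2+\mathrm{const}$), so the frequency map is degenerate: $\det D^2 H_N=0$, Theorem \ref{ThmKAM} does not apply, and there is no Cantor family of primary $2$-tori nor any resonant-gap structure to analyze. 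The paper states this obstruction explicitly in Remark \ref{RkADLong} (citing Proposition 3.1 of \cite{X}): the twist property required by \cite{DLS06,Tr1,Tr2} is absent in the Schwarzschild case. Your closing remark that the argument ``implicitly uses'' genericity beyond hypothesis (2) is a further symptom: the theorem assumes only (1) and (2), so any proof requiring extra nondegeneracy of the inner dynamics is proving a different statement.

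The paper's actual route avoids the inner dynamics entirely. One first multiplies $H_\eps+\tfrac12$ by $r^2/L_z$ (Mino-time reparametrization) so that the unperturbed part separates, performs an energetic reduction on $N$ (solving $H_N(J_\theta,J_\varphi)=\mathrm{const}$ for $J_\theta$ and using $\al_\theta$ as the new time) to land in the nonautonomous setting of \cite{GLS}, and then applies Theorem 4.1 of \cite{GLS}. That theorem needs only the transversal intersection of $W^{u}(N_\eps)$ and $W^{s}(N_\eps)$ and the nondegeneracy of the scattering map in the $L_z$ direction --- exactly hypotheses (1) and (2) --- and replaces the whole KAM/large-gap machinery by Poincar\'e recurrence of the (measure-preserving) inner dynamics together with an inclination lemma to shadow a finite sequence of scattering-map applications. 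To repair your proof you should replace the third and fourth paragraphs by this recurrence-plus-shadowing argument; as written, the step you flag as ``the main obstacle'' is not merely delicate but impossible in this system.
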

\begin{proof}
 We first rewrite the Hamiltonian system $H=H_0+\eps H_1$ into a form that is separable in the unperturbed part  $H_0$ \eqref{EqHamS}. We shall fix the total energy $H=-\frac12$, multiply the equation by $\frac{r^2}{L_z}$ and introduce $$\mathcal H_\eps=\frac{r^2}{L_z}(H_0+\frac12+\eps H_1)=\frac{1}{L_z}(\al r^2 p_r^2-\al^{-1}r^2E^2+
 r^2+L^2+\eps r^2 H_1).$$ Then from the Hamiltonian equation we see that the dynamics of $H$ on the energy level set $H=-\frac12$ is the same as that of the Hamiltonian $\mathcal H_\eps$ on the energy level set $0$ up to a time reparametrization. Then the statement is a straightforward application of Theorem 4.1 of \cite{GLS} to the system $\mathcal H_\eps$ with the only difference being that Theorem 4.1 of \cite{GLS} is stated for a nonautonomous system while here we have an autonomous one. Indeed, in their proof first converts their nonautonomous system to an autonomous one. In our case, we can first transform to action-angle coordinates $(\al_\theta,J_\theta,\al_\varphi, J_\varphi)\in \T\times \R_+\times \T\times \R$ for the dynamics on $N$. We fix an energy level $H_N(J_\theta,J_\varphi)=\mathrm{const}$ and solve for $J_\theta$ then we can treat $J_\theta$ as a new Hamiltonian and $\al_\theta$ as the new time, so the $\al_\theta$ dependence in the perturbation becomes nonautonomous. This procedure is called energetic reduction (c.f. Section 45 of \cite{A89}. The existence and transversal intersection of the stable and unstable manifolds are independent of the coordinates. Then we are in the setting of \cite{GLS}. The idea of \cite{GLS} is to use the nondegeneracy of the scattering map, i.e. the transversal intersection of the stable and unstable manifolds to create a change in $L_z$
similar to Arnold's example. The nondegeneracy of the scattering map holds in a neighborhood by continuity. On the other hand, for the dynamics on the NHIM $N_\eps$, the authors use Poinicar\'e recurrence and developed a version of inclination lemma to find shadowing orbit.  We refer readers to \cite{GLS} for more details.
\end{proof}
We will verify the assumptions, in particular the Melnikov nondegeneracy, for a perturbation of Schwarzchild in Section \ref{SReggeWheeler}. The assumptions are expected to hold for generic perturbations.

The phase space dynamics of the diffusion orbit is similar to that of Arnold's example, i.e. the orbit shadows orbits on the NHIM with occasionally excursion along homoclinic orbits. The configuration space dynamics of the orbit looks as follows. In the radial direction, it moves along the homoclinic orbit $\Gamma$ and when it gets close to the photon sphere, it moves along the photon sphere for some time before the next excursion to the homoclinic orbit, etc. Moreover, the orbit precesses during each excursion. In literature, this type of orbits are called {\it zoom-whirl orbits} (c.f. \cite{GK,LP1,LP2} etc). During the process of zoom-whirling, the $z$-component of the angular momentum $L_z$ undergoes a noticeable change, so the orbital plane changes from time to time, in sharp contrast to the unperturbed Schwarzschild case, where all orbits necessarily lie on a fixed plane passing through the origin due to the conservation of $L$ and $L_z$. 

\begin{Rk}\label{RkADLong}
The main defect of the statement is that the size $\rho$ of the diffusing orbit depends a priori on the perturbation $H_1$, though independent of $\eps$. We expect that the following result holds for generic perturbations.

\emph{\qquad For all $a,b\in(0,|L|)$ with $a<b$, and all $\eps$ sufficiently small, there exists an orbit and time $T$ such that $|L_z(T)|\geq b$ and $|L_z(0)|\leq a$. }\\
However, most known machineries of proving Arnold diffusion in the presence of normal hyperbolicity $($c.f. \cite{CY1,CY2,DLS06,Tr1,Tr2}$)$ require certain twist property for the dynamics on $N$, which is absent in our case $($c.f. Proposition 3.1 of \cite{X}$)$. We list it as a conjecture in Section \ref{SSConjS}. The work \cite{GLS} that we cite above do not need any understanding of the dynamics on $N$, which may be applied to verify the above statement for concrete perturbations.
If the statement is true, then we get orbit that visits any $\dt$-ball centered on the photon sphere, provided $\eps$ is chosen sufficiently small.
We do not pursue the generality here in order to keep the statement succinct, considering that there is no natural perturbation to the Schwarzschild metric. 
\end{Rk}


\subsection{Arnold diffusion via zoom-whirl orbits, nonstationary perturbations}\label{SSADNonStat}
In this section, we study nonstationary perturbations so that the particle's energy $E$ is the no longer a constant of motion. We consider the case of $\tau$-periodic perturbations with period 1, which is the physically interesting case, see \cite{RW}.
For this purpose, we consider the Schwarzschild Hamiltonian $H$ \eqref{EqHamS} as a system of four degrees of freedom. Since $H$ has no explicit dependence on $\tau$, we think $\tau$ is defined on $\T=\R/\Z$. We thus obtain the six-dimensional NHIM $$\hat N=\{p_r=0,\ r=r_-\}$$ parametrized by the variables $(\tau, p_\tau, \theta, p_\theta, \varphi, p_\varphi)$, restricted to which, the Hamiltonian system has three degrees of freedom
\begin{equation}\label{EqHamHatN}
2H_{\hat N}(\tau,p_\tau, \theta,p_\theta, \varphi,p_\varphi)=-\frac{p_\tau^2}{\al(r_-)}+r_-^{-2}\left(p_\theta^2+\frac{1}{\sin^2\theta} p_\varphi^2\right).
\end{equation}
When the Schwarzschild metric is perturbed by a nonstationary $1$-periodic perturbation in $\tau$, the NHIM $\hat N$ will be slightly deformed into $\hat N_\eps$, which is a NHIM for the perturbed Hamiltonian system and can be written as a graph over $\hat N$. The perturbed Hamiltonian restricted to $\hat N_\eps$ is again Hamiltonian, denoted by $H_{\hat N_\eps}$. The stable manifold $W^s(\hat N_\eps)$ and unstable manifold $W^u(\hat N_\eps)$ of $\hat N_\eps$ in general do not coincide so that we can define a scattering map $\hat{\mathcal S}:\ \hat N_\eps\to \hat N_\eps$ by Definition \ref{DefScatter}.

Similar to Theorem \ref{ThmScatteringS1}, we can obtain a first order approximation of the scattering map.
\begin{Not}
We redefine the initial condition $\hat\Xi_0=(\tau(0),E(0),\Xi_0)\in \hat N$ where $\Xi_0=(\theta(0),p_\theta(0),\varphi(0), L_z(0))\in N$ as defined in Notation \ref{NotS1} and the solution $$\hat \Xi(\hat \Xi_0,\lambda)=\left(\tau(0)-\frac{r_-^2E(0)}{\al(r_-)L_z(0)}\lambda,E(0),\Xi(\Xi_0,\lambda)\right),\ \lambda\in \R,$$ to the Hamiltonian equations of the system $H_{\hat N}$  with initial condition $\hat\Xi_0$. As before, we still denote by $\Gamma(\nu+\lambda)$ the $(r,p_r)$-components of the homoclinic orbit with initial condition $\Gamma(\nu),\ \lambda\in \R$.
\end{Not}
With this modification, we have the Melnikov potential $J(\nu,\hat\Xi_0)$ formally the same as  \eqref{EqMelnikov}, and a similar statement to Theorem \ref{ThmScatteringS1} ($\hat N_\eps$ is now homeomorphic to $\R^3\times\T^3$ and $\star=\tau,\theta,\varphi$). For nonstationary perturbations, we have the following result similar to Theorem \ref{ThmGLS}.
\begin{Thm}\label{ThmDiffusionNonStatS} Let $\eps h_{\mu\nu}dx^\mu dx^\nu$  be a $1$-periodic in $\tau$ perturbation to the Schwarzschild metric.
Suppose
\begin{enumerate}
\item the assumption of the modified version of Theorem \ref{ThmScatteringS1} holds;
\item $L^2\in (L^2_{isco}, L^2_{homo})$, $E\in(E_{isco}, E_{homo})$,\ $L_z^2\in (0,L^2)$;
\item for $\star=\tau,\varphi$, the RHS of \eqref{EqScattering} is nonvanishing at some point $\Xi_0\in U$.
\end{enumerate}
 Then there exist $\rho>0,\eps_0>0$, such that for all $|\eps|<\eps_0,$ there exists an orbit and a time $T$ with $L(t)^2\in (L^2_{isco}, L^2_{homo})$, $E(t)\in(4/9, 1/2)$,\ $L_z(t)^2\in (0,L(t)^2)$ for all $t\in[0,T]$ and
$$|(E,L^2,L^2_z)(0)-(E,L^2,L^2_z)(T)|>\rho.$$
\end{Thm}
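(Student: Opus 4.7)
The plan is to extend the argument of Theorem \ref{ThmGLS} to the larger four–degree–of–freedom phase space obtained by treating $\tau$ as a variable on $\T=\R/\Z$. As before, I would start by fixing the total energy $H_\eps = -\tfrac12$ and replacing $H_\eps+\tfrac12$ by the conformally rescaled Hamiltonian
\[
\mathcal H_\eps = \frac{r^2}{L_z}\bigl(H_0+\tfrac12+\eps H_1\bigr),
\]
so that orbits on $\{H_\eps=-\tfrac12\}$ agree, up to the positive time reparametrization $d\lambda=(L_z/r^2)\,dt$, with orbits of $\mathcal H_\eps$ on $\{\mathcal H_\eps=0\}$. In this form the unperturbed inner dynamics on $\hat N$ is a linear flow in $(\tau,\theta,\varphi)$ parametrized by Mino time $\lambda$, which is the natural setting for the Melnikov integral in the modified form of Theorem \ref{ThmScatteringS1}. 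Theorem \ref{ThmNHIM} produces the perturbed NHIM $\hat N_\eps$ and its stable and unstable manifolds, and the separatrix splitting is measured by the potential $J(\nu,\hat\Xi_0)$ with $\hat\Xi_0$ as in the new notation.

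Next I would define the scattering map $\hat{\mathcal S}\colon\hat N_\eps\to\hat N_\eps$ via Definition \ref{DefScatter}, and use the $\tau$- and $\varphi$-derivative versions of \eqref{EqScattering} to record that $\hat{\mathcal S}$ moves $E$ and $L_z$ by an $O(\eps)$ amount whose directions are prescribed by the Melnikov formula. Hypothesis (3) then guarantees that on an open set of $\hat\Xi_0$ the leading-order displacement $(\Delta E,\Delta L_z)$ is genuinely two-dimensional, transverse to the submanifolds where either coordinate is constant. Because $L^2=p_\theta^2+L_z^2/\sin^2\theta$ is built from $p_\theta$ and $L_z$, and the Melnikov integral automatically controls the $\theta$-direction as well, the changes in $L^2$ along scattering iterates are tracked simultaneously with those in $E$ and $L_z$.

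The core of the argument is then a direct application of Theorem 4.1 of \cite{GLS} to $\mathcal H_\eps$, exactly as in the stationary case but with the three–degree–of–freedom inner system $H_{\hat N_\eps}$ in place of the two–degree–of–freedom $H_{N_\eps}$. Performing an energetic reduction on a fixed level $\{H_{\hat N_\eps}=\mathrm{const}\}$ — for instance solving for $p_\theta$ and treating $\theta$ as the new time — converts the inner flow into a nonautonomous Hamiltonian system to which the \cite{GLS} framework applies verbatim. The shadowing output interleaves a sequence of scattering-map iterates with long segments of the inner flow and produces a true orbit of the perturbed Schwarzschild system along which at least one of $E$, $L^2$, or $L_z^2$ drifts by an amount $\rho>0$ independent of $\eps$, while all three quantities remain inside the windows $(E_{isco},E_{homo})$, $(L^2_{isco},L^2_{homo})$ and $(0,L^2)$ thanks to the openness of these ranges and the $O(\eps)$ size of each individual scattering step.

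The main obstacle is the degeneracy of the inner dynamics on $\hat N$: restricted there, the unperturbed flow is a rigid linear translation on a product of circles, with no twist whatsoever in the action variables $(E,L_z)$. The \cite{GLS} approach circumvents the standard twist requirement by extracting transversality purely from the scattering map, which is precisely why hypothesis (3) — the nondegeneracy of the Melnikov formula simultaneously for $\star=\tau$ and $\star=\varphi$ — is the sharp and nontrivial condition to check for concrete perturbations. Outside this core difficulty, the remaining ingredients (NHIM persistence, the Melnikov formula in the reparametrized time, and the heteroclinic shadowing of \cite{GLS}) transfer from Theorem \ref{ThmGLS} essentially without modification.
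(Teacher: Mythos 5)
Your proposal is correct and follows essentially the same route the paper takes: the paper states this theorem as a direct analogue of Theorem \ref{ThmGLS}, and your argument — conformal rescaling to $\mathcal H_\eps$, persistence of the enlarged NHIM $\hat N$, the Melnikov/scattering-map nondegeneracy for $\star=\tau,\varphi$, energetic reduction, and an application of Theorem 4.1 of \cite{GLS} whose shadowing replaces the missing twist on the inner dynamics — is precisely that analogue. The only cosmetic difference is that the paper performs the energetic reduction in action-angle variables (solving for $J_\theta$ with $\al_\theta$ as the new time) rather than in $(\theta,p_\theta)$ directly, which changes nothing of substance.
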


\subsection{Arnold diffusion for generic perturbations, conjectural picture}\label{SSConjS}
Our previous results on Arnold diffusion utilize the photon sphere and the homoclinic orbits and the diffusion mechanism is similar to Arnold's original one. In the presence of normal hyperbolicity, we expect the following more general statements are true.

 \begin{Con}\label{ConS1}
 \begin{enumerate}
\item Let $\eps h_{\mu\nu}dx^\mu dx^\nu$  be a generic stationary perturbation to the Schwarzschild metric, then for any $\dt>0$, there exists $\eps_0$ such that for all $|\eps|<\eps_0$,
 there exists an orbit on $N_\eps$ and times $T,T'$ such that $$|L_z(T)-L(0)|<\dt, \quad |L_z(T')+L(0)|<\dt.$$
\item Let $\eps h_{\mu\nu}dx^\mu dx^\nu$  be a generic perturbation $($1-periodic in $\tau)$ to the Schwarzschild metric, then  for any $\dt>0$,
there exists an orbit that visits any $\dt$-ball centered on $N\times \T^1\times (4/9,1/2)(\ni (\Xi,\tau,E))$,  provided $\eps$ is sufficiently small.
 \end{enumerate}
 \end{Con}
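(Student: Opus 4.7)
The plan is to realize the conjecture via the Arnold--diffusion scheme of \cite{DLS06, GLS, Tr1, Tr2} adapted to the photon-sphere NHIM $N_\eps$: combine (i) a sufficiently rich family of scattering maps as the outer dynamics, (ii) the inner dynamics on $N_\eps$, and (iii) a shadowing lemma that promotes pseudo-orbits of the combined dynamics to genuine orbits of the perturbed flow.

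First I would show that for a generic perturbation $h_{\mu\nu}$ the Melnikov potential $J(\nu,\Xi_0)$ in \eqref{EqMelnikov} admits, over a large open set in $N$, several smooth branches of nondegenerate critical points $\nu^*_1(\Xi_0),\ldots,\nu^*_k(\Xi_0)$, each producing a homoclinic channel and hence a scattering map $\mathcal S^{(j)}_\eps:\ N_\eps\to N_\eps$ via Theorem \ref{ThmScatteringS1}. Formula \eqref{EqScattering} lets one read off the leading-order displacement produced by each $\mathcal S^{(j)}_\eps$; genericity of $h_{\mu\nu}$ should force the span of these $k$ displacement vectors to cover every tangent direction compatible with preservation of $H_{N_\eps}$, and in particular a direction along which $L_z$ strictly decreases and one along which $L_z$ strictly increases. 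For part (2) the same construction is carried out on the six-dimensional NHIM $\hat N_\eps$ of Section \ref{SSADNonStat}, with the additional requirement that some displacement have a nonzero $E$-component.

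Second I would construct a pseudo-orbit connecting the target points by alternating scattering-map jumps and long segments of inner flow on $N_\eps$. In action-angle coordinates $(\al_\theta, J_\theta, \al_\varphi, J_\varphi)$ on $N$ as in \cite{X} one has $J_\varphi=L_z$ and $L=J_\theta+|J_\varphi|$, so the reachable region on $\{H_{N_\eps}=\mathrm{const}\}$ is a two-dimensional polygonal strip whose boundary contains the target values $L_z\approx \pm L(0)$; the desired pseudo-orbit is a finite polygonal path in this strip whose edges alternate between scattering-map displacements and inner-flow drifts. The shadowing step then combines Poincar\'e recurrence on the compact energy surface in $N_\eps$ with the inclination/$\lambda$-lemma developed in \cite{GLS}, neither of which requires invariant tori on $N_\eps$. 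For part (2), topological transitivity of the combined inner--outer dynamics on an open subset of $\hat N_\eps$ then yields a $\dt$-dense orbit.

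The main obstacle, already flagged in Remark \ref{RkADLong}, is twofold. The unperturbed inner Hamiltonian $H_N$ in \eqref{HamN} is completely degenerate: because $L^2=p_\theta^2+p_\varphi^2/\sin^2\theta$ is a global integral and $H_N$ depends on $(p_\theta,p_\varphi)$ only through $L^2$, the frequency map on $N$ fails the isoenergetic twist condition (Proposition 3.1 of \cite{X}). Consequently the usual KAM backbone of whiskered tori underlying the Cheng--Yan and DLS variational schemes is unavailable, forcing one to rely on the scattering maps alone as in \cite{GLS} or to restore a nondegeneracy via higher-order averaging of the perturbation restricted to $N_\eps$. The second difficulty is that, for the statement to be faithful to the physics, genericity must be taken within the affine class of $h_{\mu\nu}$ solving at least the linearized Einstein equations, as in Appendix \ref{SReggeWheeler}; verifying that this restricted class is still in general position for both the Melnikov branches and the displacement spans is the central remaining task.
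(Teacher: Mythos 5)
This statement is a \emph{conjecture}, not a theorem: the paper explicitly leaves it open, and Remark \ref{RkConS} records exactly why. So there is no proof in the paper to compare yours against, and the honest assessment is that your proposal is a plausible strategy outline rather than a proof. The two obstacles you name in your final paragraph --- the failure of the isoenergetic twist condition for $H_N$ (Proposition 3.1 of \cite{X}), and the need to take genericity inside the affine class of perturbations solving the linearized Einstein equations --- are precisely the open problems that make this a conjecture. Your proposal defers both of them ("the central remaining task") rather than resolving them, so nothing has actually been proved.

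Beyond that, two steps in your outline are asserted without justification and are in fact where the difficulty concentrates. First, you claim that genericity forces the $k$ scattering-map displacement vectors to span every admissible direction \emph{over the whole strip} from $L_z\approx +L(0)$ to $L_z\approx -L(0)$. The proven Theorem \ref{ThmGLS} only yields a drift of size $\rho$ depending on $h_{\mu\nu}$; to reach the full range you need the Melnikov displacement \eqref{EqScattering} to be nonvanishing (with controlled direction) on a connected region covering the entire interval, and its zero set could a priori disconnect that region --- a transversality argument in the restricted (Regge--Wheeler) perturbation class is needed and is not supplied. Second, the endpoints $L_z=\pm L$ are degenerate for the inner dynamics (there $p_\theta\equiv 0$, $b=L/L_z\to 1$, the latitudinal oscillation collapses to the equator, and the action-angle chart breaks down), so the claim that the "polygonal strip" reaches within $\dt$ of the boundary values $\pm L(0)$ requires a separate boundary analysis that neither the paper nor your proposal provides. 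I would recommend presenting this as a conditional reduction --- "the conjecture follows from \cite{GLS} once one establishes (a) nonvanishing of the scattering displacement on the full strip for generic admissible $h_{\mu\nu}$ and (b) control near $|L_z|=L$" --- rather than as a proof.
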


On the other hand, without normal hyperbolicity, we also expect the following statement on Arnold diffusion away from the photon sphere is true in the spirit of Arnold's conjecture.

 We introduce the following  part of the phase space with bounded motions, where Liouville-Arnold theorem applies
 \begin{equation}\label{EqBS}\mathcal{B}(C):=\left\{ V(r_+)<\frac{E^2}{2}<\min\{V(r_-),V(\infty)\},\quad |L_{isco}|<|L|<C\right\}\cap H^{-1}(-\frac12).\end{equation}
 We remark that the first inequality of the definition of $\mathcal{B}(R)$ is an inequality between $E$ and $L$. When we consider stationary perturbations, the particle's energy $E$ is a constant, so we treat the system as one with three degrees of freedom and $\mathcal B(C)$ is a bounded set  on the energy level $H=-\frac12$ of dimension 5. When we consider 1-periodic in $\tau$, the energy $E$ is no longer conserved, then the Hamiltonian system has four degrees of freedom and $\mathcal B(C)$ is a bounded set  on the energy level $H=-\frac12$ of dimension 7.
 \begin{Con}\label{ConS2}
 \begin{enumerate}
\item Let $\eps h_{\mu\nu}dx^\mu dx^\nu$  be a generic stationary perturbation to the Schwarzschild metric, then for any $C>0$
and any $\dt>0$, there exists an orbit that is $\dt$-dense on $\mathcal B(C)$, provided $\eps$ is small.
\item Let $\eps h_{\mu\nu}dx^\mu dx^\nu$  be a generic perturbation $($1-periodic in $\tau)$ to the Schwarzschild metric, then for any $C>0$
and any $\dt>0$, there exists an orbit that is $\dt$-dense on $\mathcal B(C)$, provided $\eps$ is small.
 \end{enumerate}
 \end{Con}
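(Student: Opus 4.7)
My plan is to reduce both parts of Conjecture \ref{ConS2} to the resolution of Arnold's conjecture for convex nearly integrable systems in the smooth category obtained in \cite{CX}, applied after a Liouville-Arnold reduction on $\mathcal B(C)$. On this region the unperturbed Schwarzschild Hamiltonian possesses the four independent Poisson-commuting integrals $H, E=p_\tau, L_z=p_\varphi, L^2$, and the joint level sets are compact (the radial motion oscillates between two turning points of $V$ and the latitudinal motion between two turning points, while $\tau,\varphi$ rotate). Theorem \ref{ThmLA} thus yields action-angle coordinates $(\vartheta,I)\in \T^n\times D$ with $n=3$ in the stationary case ($E$ is fixed as a parameter and $\tau$ is cyclic, so it is quotiented out) and $n=4$ in the $1$-periodic case ($\tau$ is restored as an angle modulo $1$), in which the perturbed Hamiltonian takes the nearly integrable form $H_\eps(\vartheta,I)=h(I)+\eps f(\vartheta,I,\eps)$, matching the framework of Conjecture \ref{ConArnold}.

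The next step is to verify Kolmogorov nondegeneracy $\det D^2 h\neq 0$ on an open dense subset of $D$. The actions are the standard period integrals
\[
I_r=\frac{1}{2\pi}\oint p_r\,dr,\qquad I_\theta=\frac{1}{2\pi}\oint p_\theta\,d\theta,\qquad I_\varphi=L_z,\qquad I_\tau=-E,
\]
with frequencies obtained from $(dr/dt)^2=E^2-2V(r;L^2)$ and $(d\theta/dt)^2=r^{-4}(L^2-L_z^2/\sin^2\theta)$. The post-Newtonian limit $r\to\infty$ reduces to Kepler, where $\omega_r=\omega_\theta=\omega_\varphi$ is a super-integrable degeneracy; the leading general-relativistic correction produces the classical perihelion precession (breaking $\omega_r=\omega_\theta$) together with an analogous $\theta$-precession away from the equatorial plane, which should yield twist on an open dense subset of $\mathcal B(C)$.

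With twist in hand, \cite{CX} produces, for $\eps$ small and $f$ in an open dense subset of the $C^r$ topology for some finite $r$, a shadowing orbit following any prescribed chain of minimal invariant measures along the energy surface. To upgrade diffusion to $\dt$-density one covers the connected component of the energy level inside $\mathcal B(C)$ by finitely many $\dt/2$-balls and chooses a chain of Ma\~n\'e sets visiting each ball in turn; the resulting shadowing orbit is then $\dt$-dense. The two parts of the conjecture are handled simultaneously by this scheme, differing only through the choice $n=3$ versus $n=4$.

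I expect the decisive obstacle to lie at the intersection of two issues. First, the super-integrable degeneracy of Schwarzschild means twist is identically broken in the Keplerian limit, so the nondegeneracy check has to be carried out perturbatively in $1/r$ and uniformly across $\mathcal B(C)$, with degenerate submanifolds where higher-order corrections also cancel excluded by hand; near the boundaries $L^2\to L_{isco}^2$ or $E^2\to 2V(r_-)$ the action coordinates become singular and these estimates deteriorate. Second, the genericity of $f$ required by \cite{CX} is taken inside the full $C^r$ space, whereas physically admissible $h_{\mu\nu}$ must solve the linearized Einstein equations, so one must intersect the $C^r$ open dense set with this infinite-dimensional constraint and verify density persists; this reduces to nonvanishingness of finitely many Melnikov-type functionals on the Regge-Wheeler/Teukolsky multipole space used in Appendix \ref{SReggeWheeler}. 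Handling both simultaneously, without a twist hypothesis uniform on $\mathcal B(C)$ and within the constrained perturbation class, is what makes the statement a conjecture rather than a theorem.
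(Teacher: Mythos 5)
This statement is Conjecture \ref{ConS2} in the paper; the author offers no proof of it, and explicitly records in Remark \ref{RkConS} why the known machinery does not close the argument. So there is no proof to match your proposal against --- but your proposal does contain a genuine gap, and it is precisely the one the paper singles out.

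Your entire scheme rests on applying \cite{CX} after passing to action-angle variables on $\mathcal B(C)$. But the result of \cite{CX} (the resolution of Arnold's conjecture cited in the introduction) is proved \emph{for convex systems}: the variational apparatus you invoke --- minimal invariant measures, Ma\~n\'e sets, shadowing of a chain of them --- is Mather/weak-KAM theory and requires the Hamiltonian to be Tonelli (fiberwise convex and superlinear), not merely iso-energetically nondegenerate. The Schwarzschild Hamiltonian $\frac12 g^{\mu\nu}p_\mu p_\nu=-\frac12\al^{-1}p_\tau^2+\frac12\al p_r^2+\cdots$ is Lorentzian, hence indefinite in the momenta, and the reduced Hamiltonian $h(I)$ on $\mathcal B(C)$ is not convex either. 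Remark \ref{RkConS}(1) states this outright: ``the system is nonconvex, so our method in \cite{CX} does not apply.'' Your step of ``verifying $\det D^2h\neq 0$'' addresses twist, which is a weaker condition than convexity and does not repair this; even granting twist on an open dense set, there is at present no theorem in the literature that delivers $\dt$-dense shadowing orbits for generic perturbations of a nonconvex integrable Hamiltonian in $n\geq 3$ degrees of freedom. You correctly identify the Kepler-type superintegrable degeneracy and the Einstein-constraint genericity issue as obstacles (the latter matches Remark \ref{RkConS}(2)), but you present the reduction to \cite{CX} as the solid part of the argument when in fact it is the step that fails. As written, the proposal is a restatement of why the claim is a conjecture rather than a route to proving it.
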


\begin{Rk}\label{RkConS}
\begin{enumerate}
\item As we have remarked before, one main difficulty in proving Conjecture \ref{ConS1} is the lack of twist for the dynamics on the NHIM.
 One difficulty for proving Conjecture \ref{ConS2} is that the system is nonconvex, so our method in \cite{CX} does not apply. 
\item  In both conjectures, an extra difficulty comes from the requirement of  physical perturbations that solves the Einstein equation. Though it is hard to give a mathematical proof, we expect that the conjectured physical phenomena may be observable in reality.
\end{enumerate}
\end{Rk}

\section{Oscillatory orbits in perturbed Schwarzschild spacetime}\label{SOsc}
In this section, we explore case (B) in Section \ref{SSMetricS}. Far away from the blackhole, we can use the post Newtonian approximation to study the dynamics of a particle.
The orbit escaping to infinity in case (B) can be considered as an analogue of the Kepler parabolic orbit, which can be utilized to create a special orbit called oscillatory orbit.

\subsection{Oscillatory orbits in Newtonian mechanics}
 For the three-body problem with the Hamiltonian
$$H=\sum_{i=1}^3\frac{p_i^2}{2m_i}-\sum_{i<j}\frac{m_im_j}{|x_i-x_j|},\quad (x_i,p_i)\in \R^3\times \R^3,\ i=1,2,3$$
with properly chosen masses, there exists a special kind of solution $\{(x(t),p(t)),\ t\in \R\}$ such that $$\liminf_{t\to \infty} |x(t)|<C,\quad \limsup_{t\to \infty} |x(t)|=\infty. $$
One model is called Sitnikov-Alekseev model. Consider a pair of equal masses $Q_1$ and $Q_2$ moving on the $x$-$y$ plane along Kepler elliptic orbits and a massless particle (Alekseev proved that a small massive particle also works) moving on the $z$-axis attracted by the pair. Another classical model which exhibits oscillatory motions is the restricted planar circular three-body problem, which we explain in more details. This is a configuration of a massless particle moving in the gravitational field of two massive bodies called Sun and Jupiter. We assume the mass of Sun is $1-m$ and that of Jupiter is $m$, $m\in (0,1/2]$, and they move on a circular orbit. We may put the system in rotating coordinates so that Sun is fixed at $(-m,0)$ and Jupiter is fixed at $(1-m,0)$. The Hamiltonian of this system has the form in polar coordinates
$$H_m(r,R,\theta,\Theta) =\frac{R^2}{2}-\frac{1}{r}+\frac{\Theta^2}{2r^2}-\Theta+U(r,\theta),\quad (r,R,\theta,\Theta)\in \R_{\geq0}\times \R\times \T\times \R$$
with $U=\left(\frac{1}{r}-\frac{1-m }{|r(\cos\theta,\sin\theta)-(-m,0)|}-\frac{m}{|r(\cos\theta,\sin\theta)-(1-m,0)|}\right)$. Here $(r,\theta)$ is the standard polar coordinates on the plane, the variable $R$ is the radial momentum and $\Theta$ is the angular momentum. The first three terms on the RHS of $H_m$ is Kepler's two-body problem in polar coordinates and the extra term $-\Theta$ means that the system is written in rotating coordinates. The potential $U\sim \frac{ m(1-m)}{r^3}$ as $r\to\infty. $

The mechanism of having the oscillatory motion is to treat the Kepler parabolic orbit as a homoclinic orbit to the (degenerate) hyperbolic fixed point at infinity. Then in general a small perturbation will cause separatrix splitting by Poincar\'e theorem (c.f. Theorem \ref{ThmPoincare}). To elaborate a bit, let us consider $m=0$, i.e. the Kepler two-body problem written in rotating and polar coordinates
$$H_0(r,R,\theta,\Theta)=\frac12R^2-\frac{1}{r}+\frac{\Theta^2}{2r^2}-\Theta$$
with equations of motion $\begin{cases}
\frac{dr}{dt}=R\\
\frac{dR}{dt}=-\frac{1}{r^2}+\frac{\Theta^2}{r^3}
\end{cases}$ in the radial components.
We next introduce the transform $r=v^{-2}$ so that $r=\infty$ corresponds to $v=0$. In terms of $v$ and $R$ we have the following equations of motion
$\begin{cases}
\frac{dv}{dt}=-\frac{1}{2} v^3R\\
\frac{dR}{dt}=-v^4+\Theta^2 v^6.
\end{cases}$
If we make a time reparametrization $d\tau=v^3 dt$, then we get $\begin{cases}
\frac{dv}{d\tau}=-\frac{1}{2}R\\
\frac{dR}{d\tau}=-v+\Theta^2 v^3
\end{cases}.$ Therefore the point $(v,R)=(0,0)$ is a hyperbolic fixed point. Its stable and unstable manifolds still exist and coincide in the case $m=0$. For sufficiently small positive $m$, the perturbation from the potential $U$ causes the splitting of separatrix hence gives rise to a horseshoe and symbolic dynamics. This gives a broad outline of the existence of oscillatory motions. We refer readers to the literatures \cite{AKN, M,LS, Mc,GMS} etc. 
\subsection{Oscillatory orbits in perturbed Schwarzschild spacetime}\label{SSOsc}
In general relativity, the orbit in case (B) in Section \ref{SSSABC} plays a similar role of Kepler parabolic orbit.  We consider only the massive particles moving on timelike geodesics with $\mu^2=1$ and with energy $E=1$.
\subsubsection{The homoclinic orbit to infinity}\label{SSSOscHomo}
\begin{figure}
\centering
\includegraphics[width=0.4\textwidth]{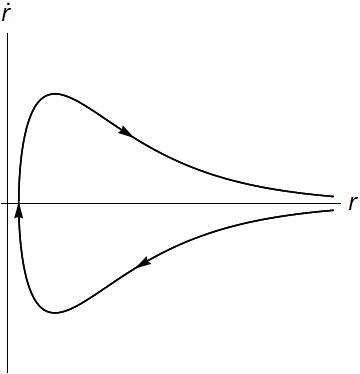}
\caption{Homoclinic orbit to infinity}
\label{FigOsc}
\end{figure}

Assuming we are in case (B) with $V(r_-)>V(\infty)$, we get the phase portrait of the orbit homoclinic to infinity for the radial equation (see Figure \ref{FigOsc})
\begin{equation}\label{EqHomoInfty} \mathrm{Graph}\left\{\frac{ dr}{dt}=\pm\sqrt{ E^2-2V(r)},\quad E^2=2V(\infty)=1\right\}.\end{equation}
Here the domain for $r$ is $[r_*,\infty)$ where $r_*$ is the largest root of $E^2-2V(r)$ and $r_*=\frac{L^2}{2M}-2M+o(1)$ when $L^2$ is large.

Differentiating \eqref{ThmRadialS}, we obtain the equations of motion is $$\begin{cases}
\frac{dr}{dt}:=R\\
\frac{dR}{dt}=-\frac{M}{r^2}+\frac{L^2}{r^3}-\frac{3ML^2}{r^4}
\end{cases}.$$
Making the change of variable $r=v^{-2}$, we obtain
$$\begin{cases}
\frac{dv}{dt}=-\frac12v^3R\\
\frac{dR}{dt}=-Mv^4+L^2v^6-3ML^2v^8
\end{cases}.$$ Now the situation is similar to the Newtonian case and we can treat $(v,R)=(0, 0)$ as a degenerate hyperbolic fixed point. As $t\to\pm\infty$, we have the asymptotic behavior $|x|,|y|\sim \frac{1}{|t|^{1/3}}$ up to a scalar multiple.

The orbit can be solved explicitly as follows (c.f. \cite{C} Section 19(c), Chapter 3). Introducing $u=1/r$ and the Mino time $d\lambda=\frac{L_z}{r^2}dt$ as before, we get from \eqref{EqHomo} by setting $E=1$
\begin{equation}\label{EquOsc}\left(\frac{du}{d\lambda}\right)^2=2Mu^3-u^2+\frac{2M}{L^2} u=2Mu(u-u_2)(u-u_3)\end{equation}
with $ u_1=\frac{2}{\ell},\ u_2=\frac{1}{2M}-\frac{2}{\ell}$ where $\ell$ is a constant such that $u_1u_2=L^{-2}$. We next introduce the Keplerian representation $u=\frac{1}{\ell}(1+\cos\chi)$ assuming the orbit is parabolic, where $\chi$ is a function of $\lambda$. The domain of $\chi$ is $[0,2\pi)$ and $\chi=\pi$ corresponds to $r=\infty$. Substituting the representation of $u$ into \eqref{EquOsc}, we get
$$\left(\frac{d\chi}{d\lambda}\right)^2=(1-4\mu)-4\mu \cos^2\frac{\chi}{2},$$
where $\mu=M/\ell\leq 1/8$ if we assume $u_1\le u_2$, whose solution is
\begin{equation}\label{Eqlambda}\lambda(\chi)=\frac{2}{(1-4\mu)^{1/2}}\left[K(k)-F\left(\frac\pi2-\frac\chi2,k\right)\right],\end{equation}
where  $k^2=\frac{4\mu}{1-4\mu}\leq 1$, and $K$ and $F$ are complete and incomplete elliptic integrals of the first kind respectively  (i.e. $F(\varphi, k)=\int_0^\varphi\frac{d\theta}{\sqrt{1-k^2\sin^2\theta}}$ and $K(k)=F(\pi/2,k)$, with $x = \mathrm{sn}(u,k)$ one has $F(x,k) = u)$. Inverting the function $\lambda(\chi)$, we obtain $\chi$ as a function of $\lambda$, hence we finally obtain $u$ hence $r$ as a function of the Mino time $\lambda$. Denote by $[-\Lambda,\Lambda]$ the domain of $\lambda$, which is a bounded interval.

We notice that an orbit of the Schwarzschild dynamics necessarily lies on a plane passing though the origin due to the conservation of $L$ and $L_z$. We thus have the freedom to choose $\theta=\pi/2$ then $p_\theta=0$ by setting the orbital plane as the equatorial plane. For simplicity, we will assume that the perturbed system also preserves the equatorial plane, i.e. we have $\theta=\pi/2$, $p_\theta=0$ and $L_z=L$ for all time. This reduces the system to one with two degrees of freedom whose phase space is parametrized by $(r,p_r,\varphi,L)$.
\begin{Not}
Denote by $\Gamma(\nu+\lambda)=(r(\nu+\lambda), p_r(\nu+\lambda))$ the homoclinic orbit to infinity on the $(r,p_r)$-plane parametrized by $\lambda$ and with initial condition $\Gamma(\nu)$, and by $\varphi(\varphi_0,\lambda)=\varphi_0+\frac{\lambda}{L^2}$. Therefore $(\Gamma(\nu+\lambda), \varphi(\varphi_0,\lambda)),\ \lambda\in [-\Lambda,\Lambda]$ is the orbit parameter of the unperturbed system with initial condition $(\Gamma(\nu),\varphi_0)$.
\end{Not}
With these notations, we again have the following Melnikov potential $J(\nu,\Xi_0)$ formally the same as \eqref{EqMelnikov}
\begin{equation}
\label{EqMelnikovOsc}J(\nu,\varphi_0)=\frac{1}{L_z}\int^{\Lambda-\nu}_{-\Lambda-\nu}  r(\nu+\lambda)^2H_1(\Gamma(\nu+\lambda),\varphi(\varphi_0,\lambda))-\lim_{r\to\infty } r^2 H_1(r,\varphi(\varphi_0,\lambda))\,d\lambda.\end{equation}

\subsubsection{The degenerate NHIM at infinity}
We denote by $N_\infty=\{(r,R)=(\infty,0)\}$ the invariant manifold at infinity with degenerate normal hyperbolicity and \eqref{EqHomoInfty} gives the homoclinic orbit to $N_\infty$. We consider only stationary perturbations so the Hamiltonian is independent of $\tau$. Moreover, the equatorial plane is preserved, so fixing $E=1$, then the phase space has four dimensions and $N_\infty$ has two dimensions parametrized by $(\varphi,L_z)$. Moreover, from \eqref{EqHamS}, we see that each point in $N_\infty$ is fixed and the homoclinic orbit \eqref{EqHomoInfty} is attached to each individual point in $N_\infty$.  We next consider a perturbation $\eps h_{\mu\nu}dx^\mu dx^\nu$ to the Schwarzschild metric, whose perturbation to the Schwarzschild Hamiltonian has the form $\eps H_1= -\frac\eps2 g^{\mu\nu}h_{\mu\nu}g^{\mu\nu}p_\mu p_\nu+O(\eps^2)$. We assume that $H_1$ converges to 0 asymptotically like $O(\frac{1}{r^2})$ as $r\to\infty.$ This guarantees the convergence of the Melnikov functional $J(\nu,\varphi_0)$, since we have that $H_1=O(1/r^2)=O(v^4)=O(1/|t|^{4/3})$ along the homoclinic orbit as $t\to\infty$, which is integrable with respect to $t\in \R$ (or with respect to the Mino time $\lambda$, the function $\lim_{r\to\infty }r^2H_1(r,\varphi(\varphi_0,\lambda))$ is bounded and the domain of integration with respect to $\lambda$ is bounded seen from \eqref{Eqlambda}). 

\subsubsection{Existence of oscillatory orbits}
We next state our result on the existence of oscillatory orbits. We will construct explicit perturbation in Appendix \ref{SReggeWheeler} satisfying the assumptions.

\begin{Thm}\label{ThmOsc}
Let $\eps h_{\mu\nu}dx^\mu dx^\nu$ be a stationary perturbation of the Schwarzschild metric satisfying
\begin{enumerate}
\item the perturbation $H_1= \frac12  g^{\al\mu}h_{\mu\nu}g^{\beta\nu}p_\al p_\beta+O(\eps)$ to the Hamiltonian converges to 0 asymptotically like $O(\frac{1}{r^2})$, and $\lim_{r\to\infty} r^2 H_1$ is a constant;
\item The subspace $\{\theta=\pi/2,\ p_\theta=0\}$ is invariant under the perturbed Hamiltonian system;
\item $L^2$ is sufficiently large and $\partial_\nu J(\nu,\varphi_0)$, has a nondegenerate zero for $\varphi_0=0$,
\end{enumerate}
 then there exists an orbit of oscillatory type for $\eps>0$ sufficiently small, i.e. there exists a constant $C$ such that along the orbit we have
$$\liminf_{t\to\infty} r(t)<C,\quad \limsup_{t\to\infty} r(t)=\infty.$$
\end{Thm}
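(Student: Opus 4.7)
The proof will follow the Moser--Alekseev paradigm for constructing oscillatory motions in celestial mechanics (cf.\ \cite{M,LS,Mc,GMS}), adapted to the general-relativistic setting. My plan is threefold: (a) compactify the radial dynamics at $r=\infty$ so that infinity becomes a (degenerate) invariant manifold carrying well-defined local stable and unstable manifolds; (b) use the nondegeneracy of $\partial_\nu J$ to produce a transverse homoclinic intersection between the perturbed stable and unstable manifolds; (c) build symbolic dynamics along the homoclinic channel whose alphabet includes arbitrarily large ``waiting symbols'' corresponding to deep excursions toward infinity, and extract an oscillatory orbit from a suitably chosen symbol sequence.

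First, fixing $E=1$, restricting to $H=-1/2$ and to the invariant equatorial plane $\{\theta=\pi/2,\ p_\theta=0\}$ granted by assumption (2), I obtain a three-dimensional energy surface with coordinates $(r,p_r,\varphi,L_z)$. I use the McGehee-type change $v=r^{-1/2}$ together with the Mino time $d\lambda=r^{-2}\,dt$ already introduced in Section \ref{SSSOscHomo}; note that the homoclinic excursion has finite duration in $\lambda$ by \eqref{Eqlambda}. In these variables $N_\infty=\{v=0,\ p_r=0\}$ is a two-dimensional invariant manifold parametrized by $(\varphi,L_z)$, and each of its points is a degenerate (parabolic) fixed point of the radial subsystem. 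The local stable and unstable manifolds $W^{s/u}_{\mathrm{loc}}(N_\infty)$ are constructed via the standard parabolic invariant-manifold theorem and, for $\eps=0$, coincide along \eqref{EqHomoInfty}. Assumption (1), $H_1=O(r^{-2})=O(v^4)$, guarantees that the perturbation is integrable along the homoclinic orbit in both $t$ and $\lambda$, so that $N_\infty$ persists as $N_{\infty,\eps}$ and the Melnikov integral \eqref{EqMelnikovOsc} converges.

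Next, I establish transverse intersection of $W^s(N_{\infty,\eps})$ and $W^u(N_{\infty,\eps})$ on the energy surface, using a Melnikov argument analogous to that of Theorem \ref{ThmScatteringS1}. The first-order displacement in the direction conjugate to the time-shift parameter $\nu$ is $\eps\,\partial_\nu J(\nu,\varphi_0)+O(\eps^2)$; a nondegenerate zero of $\partial_\nu J$ at $\varphi_0=0$ therefore produces a transverse homoclinic orbit $\gamma_\eps$ to $N_{\infty,\eps}$ for all sufficiently small $\eps>0$. The largeness hypothesis on $L^2$ will be used to ensure that $r_*\sim L^2/(2M)$ is large enough, that the Keplerian parameter $\mu=M/\ell$ in \eqref{Eqlambda} stays below $1/8$ with nondegeneracy uniform as $L\to\infty$, and that the Melnikov approximation uniformly dominates the true splitting along the long homoclinic excursion.

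The final step converts this transverse homoclinic connection into an oscillatory orbit, and contains the main obstacle: since $N_\infty$ is only parabolically hyperbolic (the normal decay along the homoclinic is polynomial, $r\sim t^{2/3}$, $p_r\sim t^{-1/3}$), the standard Smale--Birkhoff horseshoe does not apply directly. I will instead invoke the parabolic symbolic dynamics of Moser \cite{M}, refined in \cite{LS,GMS}: a transverse homoclinic orbit to a parabolic invariant manifold yields a topological horseshoe whose alphabet is (essentially) $\N$, with symbol $n$ corresponding to an excursion reaching a radius of order $n^{2/3}$ before returning to a fixed section transverse to $\gamma_\eps$, and every bi-infinite symbol sequence is shadowed by a genuine orbit. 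Choosing $(n_k)_{k\in\Z}$ with $n_k\to\infty$ along a subsequence while remaining bounded along another then produces an orbit with $\limsup_{t\to\infty}r(t)=\infty$ and $\liminf_{t\to\infty}r(t)<C$. The crux is a parabolic $\lambda$-lemma showing that the transition maps through neighborhoods of $N_{\infty,\eps}$ remain uniformly transverse as the waiting time tends to infinity; the decay rate $H_1=O(r^{-2})$ in assumption (1) is tuned precisely so that the $O(\eps)$ splitting is not degraded by arbitrarily long sojourns near infinity.
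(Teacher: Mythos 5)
Your proposal follows essentially the same route as the paper's proof: reduction to the equatorial plane with $E=1$, the McGehee compactification at infinity with stable/unstable manifolds of the degenerate (parabolic) fixed point given by \cite{Mc}, Melnikov nondegeneracy yielding transversal splitting as in Appendix \ref{AppPoincare}, and the Moser--Llibre--Sim\'o symbolic dynamics to extract the oscillatory orbit. Your additional detail on the infinite-alphabet shift and the parabolic $\lambda$-lemma is consistent with what the paper delegates to the cited references.
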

\begin{proof}
Since we consider only stationary perturbations, we can fix $E=1$ thus reduce the Hamiltonian system to a system of three degrees of freedom. Next by assumption (2), we further ignore the $\theta,p_\theta$. The unperturbed Schwarzschild system has in fact only one degree of freedom that is the radial motion and the perturbation introduces the time periodic perturbation through $\varphi$, so the situation is similar to the pendulum case in Section \ref{SSSPendulum}. 

The existence of stable and unstable manifolds to the degenerate hyperbolic fixed point $(v,R)=(0,0)$ is given by \cite{Mc}. The transversal intersection of the stable and unstable manifolds is proved in the same way as Appendix \ref{AppPoincare} under the assumption (3). The fact that the transversal intersection gives rise to oscillatory orbit follows from a classical symbolic dynamics argument in \cite{M,LS}. We require $L^2$ is sufficiently large, which would become clear in Appendix \ref{AppOsc} where we verify the assumptions for a concrete perturbation. This assumption guarantees that the $\varphi$-motion is fast rotating and in physical space, this means that the particle's parabolic orbit is sufficiently far from the blackhole. We require $\eps$ small so that the Melnikov nondegeneracy implies the transversal intersection of the stable and unstable manifolds (c.f. Appendix \ref{AppPoincare}).
\end{proof}

\begin{Rk}
\begin{enumerate}
\item When we remove the assumption (2) in the last theorem, the Hamiltonian system has six degrees of freedom and $N_\infty$ is four dimensional parametrized by $(\theta,p_\theta, \varphi,p_\varphi)$. The mechanism of Arnold diffusion also works by exploiting the homoclinic orbits to infinity to yield an orbit of oscillatory type and it simultaneously has a big oscillation in the $L_z$ component. We skip the statement here, but it is not hard to reproduce corresponding results in the Newtonian case to our setting here (c.f. \cite{DKRS}). We can also consider 1-periodic perturbations in $\tau$  similar to Theorem \ref{ThmDiffusionNonStatS}.

\item We remark that the reasoning fails for massless particles.
\end{enumerate}
\end{Rk}

\subsection{Case (C) in Section \ref{SSSABC}}
We finally remark briefly on Case (C) in Section \ref{SSSABC}. This case occurs when $E^2=1$ and $L^2=L_{homo}^2$ so that $V(r_c)=V(\infty)$. In this case, the NHIM $N=\{r=r_c, p_r=0\}$ appear simultaneously as the NHIM $N_\infty=\{r=\infty, p_r=0\}$ at infinity. There are heteroclinic orbits connecting them in the Schwarzschild Hamiltonian system. When a stationary perturbation that decays at least inverse quadratically in $r$ is considered, the NHIM $N$ is deformed into $N_\eps$ and the heteroclinic orbit is broken so that $W^s(N_\eps)\pitchfork W^u(N_\infty)$ and $W^u(N_\eps)\pitchfork W^s(N_\infty).$ This induces scattering maps $\mathcal S_1:\ N_\eps\to N_\infty$ as well as $\mathcal S_2:\ N_\infty\to N_\eps$, whose first order approximations can be obtained by the Melnikov method as in Theorem \ref{ThmScatteringS1}. In this case, Arnold diffusion behavior (by zoom-whirl orbit around the photon sphere) and oscillatory behavior coexist. Statement can be formulated by combining Theorem \ref{ThmGLS} and Theorem \ref{ThmOsc}. We skip the details.

\section{Arnold diffusion in perturbed Kerr spacetime}\label{SK}
In this section, we study Arnold difffusion in perturbed Kerr spacetime.
\subsection{The Kerr spacetime and the Penrose process}\label{SSK}
The Kerr spacetime in the standard Boyer-Lindquist coordinates has the form
$$ds^2=-\left( 1-\frac{2Mr}{\Sigma}\right)d\tau^2-2\left( \frac{2Mr}{\Sigma}\right) a\sin^2\theta d\tau d\varphi+\Sigma \left( \frac{dr^2}{\Delta}+d\theta^2\right) +\frac{\mathcal A}{\Sigma}\sin^2\theta d\varphi^2,$$
where
\begin{equation}
\begin{aligned}
\Sigma&=r^2+a^2\cos^2\theta,\\
\Delta&=r^2+a^2-2Mr,\\
\mathcal A&=(r^2+a^2)^2-\Delta a^2\sin^2\theta.
\end{aligned}
\end{equation}
Here $M$ is the mass and $a$ is the angular momentum of the blackhole with $0\leq|a|\leq M$. When $a=0$, the Kerr metric reduces to the Schwarzschild metric.
The event horizon is where the metric coefficient $g_{rr}=\frac{\Sigma}{\Delta}$ becomes singular. Let $r_h=M+ \sqrt{M^2-a^2}$ be the larger root of $\Delta$, hence the outer event horizon is the sphere $r=r_h$.
The metric coefficient $g_{tt}=\left( 1-\frac{2Mr}{\Sigma}\right)$ changes sign when $r=r_e,$ $r_e=M+\sqrt{M^2-a^2\cos^2\theta}$ be the larger root of $\Sigma-2Mr$. The region $\{r_h<r<r_e\}$ is called the ergosphere.
Both singularities are coordinate singularities.

Note that in the interior of the ergosphere, the $\tau$-component becomes spacelike.  Therefore, a moving massive particle within the ergosphere must co-rotate with the blackholewith an angular velocity $$\Omega=-{\frac {g_{t\phi }}{g_{\phi \phi }}}={\frac {2Mra}{\Sigma \left(r^{2}+a^{2}\right)+2Mra^{2}\sin ^{2}\theta }}$$
 in order to retain its timelike character (the negativity of $ds^2=-\mu^2$ comes from the $dtd\varphi$ term of the metric).
Therefore the particle gains energy and angular momentum. Because it is still outside the event horizon, it may escape the black hole. The net process is that the rotating black hole emits energetic particles and loses its own energy and angular momentum. This is called the Penrose process (c.f. \cite{C}).

 The geodesic equations are of the following form (assuming $ds^2=-\mu^2$).
\begin{equation}\label{EqofMotion}
\begin{cases}
\Sigma^2(\frac{dr}{dt})^2&= R,\\
\Sigma^2(\frac{d\theta}{dt})^2&=\Theta,\\
\Sigma\frac{d\tau}{dt}&=-a(aE\sin^2\theta-L_z)+\frac{r^2+a^2}{\Delta}P(r),\\
\Sigma\frac{d\varphi}{dt}&=\frac{a}{\Delta}P(r)-\left(aE-\frac{L_z}{\sin^2\theta}\right),
\end{cases}
\end{equation}
where we have
\begin{equation}\label{EqRTheta}
\begin{aligned}
&R(r)= P(r)^2-\Delta(\mu^2 r^2+(L_z-aE)^2+Q),\\
&\Theta(\theta)=Q-\left[(\mu^2-E^2)a^2+\dfrac{L_z^2}{\sin^2\theta}\right]\cos^2\theta,\\
&P(r)=E(r^2+a^2)-aL_z,
\end{aligned}
\end{equation}
and the constant $Q=p_\theta^2+\cos^2\theta\left(a^2(\mu^2-E^2)+\left(\frac{L_z}{\sin\theta}\right)^2\right)$ is a constant of motion called \emph{Carter constant}.

Introducing a time reparametrization $d\lambda=\Sigma(r,\theta)^{-1}dt$ called Mino time, we get the equations of motion
\begin{equation}\label{EqMino}
\begin{cases}
\frac{dr}{d\lambda}&=\pm\sqrt{R(r)}=p_r\Delta,\\
\frac{d\theta}{d\lambda}&=\pm\sqrt{\Theta(\theta)}=p_\theta,\\
\frac{d\tau}{d\lambda}&=-a(aE\sin^2\theta-L_z)+\frac{r^2+a^2}{\Delta}P(r),\\
\frac{d\varphi}{d\lambda}&=\frac{a}{\Delta}P(r)-\left(aE-\frac{L_z}{\sin^2\theta}\right),\quad \dfrac{dL_z}{d\lambda}=0.
\end{cases}
\end{equation}

We next introduce the Hamiltonian formalism. The metric can be thought of as twice a Lagrangian and we get its corresponding Hamiltonian via Legendre transform $H(p_{x^\mu},x^\mu)=p_{x^\mu}\frac{d x^\mu}{dt}-L(x^\mu,\frac{dx^\mu}{dt})$ where $x^\mu=(\tau, r,\theta,\varphi)$ and $p_{x^\mu}=(p_\tau,p_r, p_\theta,p_\varphi)$ with
\begin{equation}\label{EqLegendreK}
\begin{cases}
p_\tau&=E=-\left( 1-\frac{2Mr}{\Sigma}\right)\dot\tau-\left( \frac{2Mr}{\Sigma}\right) a\sin^2\theta  \dot\varphi,\\
p_r&=\Sigma \frac{\dot r}{\Delta},\\
 p_\theta&=\Sigma\dot \theta,\\
 p_\varphi&=L_z=-\left( \frac{2Mr}{\Sigma}\right) a\sin^2\theta \dot\tau +\frac{\mathcal A}{\Sigma}\sin^2\theta \dot \varphi.
 \end{cases}
 \end{equation}
 The Hamiltonian $H$  can be written as
\begin{equation}\label{EqHamK}
\begin{aligned}
& H=\frac12\Sigma^{-1}(H_r+H_\theta),\\
& H_r=\Delta p_r^2-\dfrac{1}{\Delta}((r^2+a^2)E-aL_z)^2,\\
& H_\theta=p_\theta^2+\dfrac{1}{\sin^2\theta}(a\sin^2\theta E-L_z)^2.\\
\end{aligned}
\end{equation}
The system has four independent constants of motion $H,E,p_\varphi,Q$. The conservation of $E,\ p_\varphi$ is obvious since  the Hamiltonian does not depend on $\tau$ and $\theta$ explicitly. They have the physical meanings of \emph{energy} and \emph{the  $z$-component of the angular momentum} of the moving particle in Kerr background respectively. 

Let $\mathcal H=H_r+H_\theta+\Sigma$. We treat $\mathcal H$ as a Hamiltonian with time variable $\lambda$. Then the Hamiltonian flow of $\mathcal H$ on energy level $\mathcal H=0$ coincide with that of $H$ on the energy level $-\frac12$. Note that $\mathcal H= \Delta p_r^2-\frac{R}{\Delta}+Q$, hence the radial and latitudinal motions separate.
\subsection{The photon shell}

 Similar to the photon sphere in Schwarzschild case, the Kerr spacetime also admits unstable orbits with constant radial components. However, the dynamics of constant radius orbits in the Kerr spacetime differs drastrically from the Schwarzschild case. Such an orbit in the Schwarzschild case lies on a fixed plane through the origin thus is periodic and all these orbits form a sphere. However, in the Kerr case, such an orbit is no longer  confined to a plane but may oscillate in a spherical strip and may be quasiperiodic. Moreover, depending on the parameters $E,L_z, Q$, the radius varies, so the union of all such orbits forms a ring called \emph{photon shell}.
 The photon shell for Kerr was first studied by \cite{Wi} in the extreme case $a=M$ for timelike geodesics. The lightlike case was studied by \cite{T}. The equatorial case of $Q=0$ was studied by \cite{GLP}. In this section, we perform a study in the $a\leq M$ case for timelike geodesics extending the analysis in Section 64 of \cite{C}. For the purpose of Arnold diffusion, we do not consider the null geodesics since in that case there is no homoclinic orbits associated to the photon shell (c.f. Lemma 4.3 of \cite{X}).




\subsubsection{The radial dynamics}
Orbits in the photon shell have constant radial component. So from the radial equation of \eqref{EqofMotion}, the following equations should be satisfied.
\begin{equation}\label{EqCrit}R(r)=0,\quad \frac{d}{dr}R(r)=0. \end{equation}
Let $r_c$ be a solution. Moreover, orbits in the photon shell are unstable, which implies the second order derivative $\frac{d^2}{dr^2}R(r_c)>0$.

Introducing $\ell=\frac{L_z}{E}$ and $q=\frac{Q}{E^2}$, we write explicitly equations \eqref{EqCrit} as
\begin{equation*}
\begin{aligned}
& 3r^4+r^2a^2-q(r^2-a^2)-\frac{r^2}{E^2}(3r^2-4Mr+a^2)=r^2\ell^2,\\
& r^4-a^2Mr+q(a^2-Mr)-\frac{r^3}{E^2} (r-M)=rM(\ell^2-2a\ell).
\end{aligned}
\end{equation*}
The two equations involve four variables $r,E,\ell,q$. We solve for $\ell$ and $q$\begin{equation}\label{Eqellq}
\begin{aligned}
\ell&=\frac{1}{a(r-M)}\left( M(r^2-a^2)\pm r\Delta \left( 1-\frac{1}{E^2}(1-\frac{M}{r})\right)^{1/2}\right),\\
q &=\frac{1}{a^2(r-M)}\left[\frac{r^3}{r-M}(4a^2M-r(r-3M)^2)+\frac{r^2}{E^2}\left[r(r-2M)^2-a^2M\right]\right.\\
&\left.-\frac{2r^3M}{r-M}\Delta\left(1\pm \left(1-\frac{1}{E^2}(1-\frac{M}{r})\right)^{1/2}\right)\right].
\end{aligned}
\end{equation}
In both equations we pick the solution with the minus sign since we require $Q$ to be nonnegative for the photon spherical orbits. The functions $\ell$ and $q$ are plotted in Figure \ref{fig:side:aChandra} with data $M=1, a=0.8, E=1$ and Figure \ref{fig:side:bChandra} with $M=1, a=0.8, E=0.92$.  For nearby choices of parameters the picture is qualitatively similar since the functions are continuous in $E^2$ and $a$ when they are real functions. In order for $q$ and $\ell$ to be real functions for $r>r_h$, we have to assume $E^2>1-\frac{M}{r_h}$.

\begin{figure}
\begin{minipage}[t]{0.5\linewidth}
\centering
\includegraphics[width=2.in]{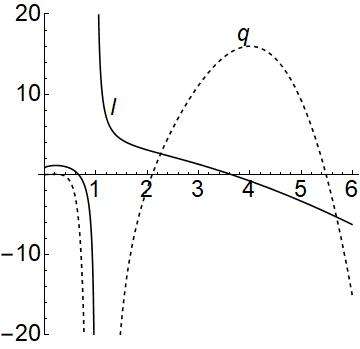}
\subcaption{$M=1,a=0.8,E=1$}
\label{fig:side:aChandra}
\end{minipage}%
\begin{minipage}[t]{0.5\linewidth}
\centering
\includegraphics[width=2.in]{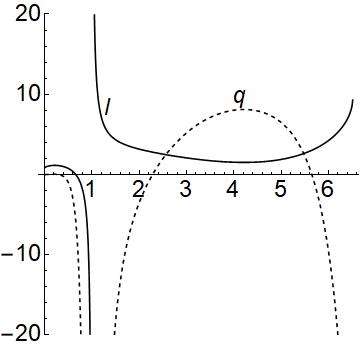}
\subcaption{$M=1,a=0.8,E=0.92$}
\label{fig:side:bChandra}
\end{minipage}
\caption{Graphs of $\ell$ and $q$}
\end{figure}


Given $r=r_c$, it determines the values of $\ell=\ell_c,q=q_c$ such that $r_c$ solves equation \eqref{EqCrit} with parameters $\ell_c$ and $q_c$. So for this choice of parameters $\ell_c, q_c$, the radius $r_c$ is a double root of $R(r)=0$, thus $R$ can be put in the form
\begin{equation}\label{EqRFactor}
R(r)/E^2=(r-r_c)^2\left(Ar^2 +2rB+C\right),\end{equation}
where we define $A=1-1/E^2<0, B=r_c\left( 1-\frac{1}{E^2}\left(1-\frac{M}{r_c}\right)\right), C=-\frac{a^2}{r_c^2} q_c.$

We next introduce the following admissible set of radii of bound spherical orbits.

\begin{Def}\label{DefAdmE}
Let $M>0,\ |a|\in (0,M]$ and $E\in (1-Mr_h^{-1},1)$ be given parameters. We define the admissible set $\mathcal R:=\mathcal R(M,a,E)$ as the radius $r_c$ satisfying the following:
\begin{enumerate}
\item $r_h<r_c\leq \frac{M}{1-E^2}$ so that $r_c$ lies outside the event horizon and the square root in \eqref{Eqellq} makes sense;
\item $q(r_c)\geq 0;$
\item inequality $Ar_c^2+2Br_c+C>0$ holds, so that $r_c$ is a local max for $-R$.
\end{enumerate}
\end{Def}



There is a way to study the equations \eqref{EqCrit} in a way analogous to the Schwarzschild case. Let us elaborate it as follows.
The function $R$ is a function of $r,E,L_z,Q$. We treat $Q$ and $L_z$ as fixed constants and think $R=R(r,E)$ as a function of $r,E$. We solve for $E=E(r)$ in the equation $R(r,E)=0$ and introduce the function $\frac{E^2}{2}=V_{\mathrm{eff}}(r)$ called the effective potential. From the implicit function theorem, we have $$\partial_r R+E'\partial_E R=0,\quad \partial^2_r R+(E')^2\partial^2_E R+E'\partial^2_{Er} R+E''\partial_E R=0.$$
If $r_c$ is such that $\partial_rR(r_c,E)=0$, and $E\neq 0$, $\partial_E R(r_c,E)\neq 0$, then we get that $V_{\mathrm{eff}}'(r_c)=0=E'$ so $r_c$ is a critical point of $V_{\mathrm{eff}}$.  Furthermore, if $E\cdot \partial_E R(r_c,E)<0$, then $V''_{\mathrm{eff}}(r_c)$ has the same sign as $\partial^2_r R$ and if $E\cdot \partial_E R(r_c,E)>0$, then $V''_{\mathrm{eff}}(r_c)$ has the opposite sign as $\partial^2_r R$. So in the case $E\cdot \partial_E R(r_c,E)<0$ we can think $V_{\mathrm{eff}}$ effectively as the $V$ in the Schwarzschild case (Figure \ref{FigV}) for the purpose of locating critical points and determining the stability. 
We refer readers to \cite{Wi,GLP} for such a treatment.  


\subsubsection{The vertical dynamics}\label{SSSVertical}

The vertical dynamics is determined by the equation \begin{equation}\label{EqVerticalK}(\frac{d\theta}{d\lambda})^2=\Theta=Q-U(\theta),\ \mathrm{where}\ U(\theta)=\left[(\mu^2-E^2)a^2+\dfrac{L_z^2}{\sin^2\theta}\right]\cos^2\theta.\end{equation} So the dynamics can be visualized as a particle moving in the potential well of $U(\theta)$ .

 The case of $L_z=0$ differs drastically from $L_z\neq 0$ case. Indeed, if $L_z=0$ and $Q>(\mu^2-E^2)a^2>0,\ \mu^2=1$, the orbit goes through the south and north poles. When $L_z\neq0$, as we shall see next, each orbit is bounded away from the the south and north poles. In the following, we focus mainly on the $L_z\neq 0$ case.
\begin{figure}
\centering
\includegraphics[width=0.3\textwidth]{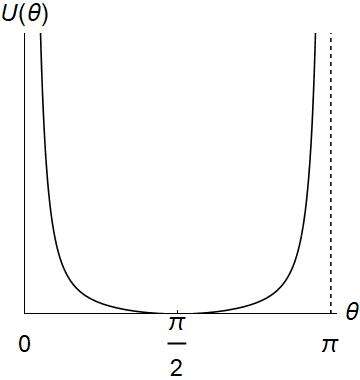}
\caption{Graph of $U$}
\label{FigU}
\end{figure}

In the case $(\mu^2-E^2)>0$ and $\mu^2=1$, we have $U(\theta)\geq 0$ is a single well potential (see Figure \ref{FigU}) and $Q\geq 0$. The minimum $\min U=0$ is attained at $\theta=\pi/2$. We also have $U(\pi/2+\theta)=U(\pi/2-\theta)$ for $\theta\in [0,\pi/2)$ and $U(\theta)\to\infty$ for $\theta\to 0,\pi.$

Let $\theta^-<\theta^+$ be two root of the equation $\Theta(\theta)=0$ with $Q>0$. Then the orbit on the photon sphere with conserved quantities $Q,E,L_z$ oscillates within the spherical strip $\theta\in [\theta^-,\theta^+]$.

The explicit solution of $\theta(\lambda)$ is obtained as follows. Denoting by $v=\cos\theta,\ \al^2=a^2(E^{-2}-1)>0$ and $$\Theta_v=q-(\ell^2+q+\al^2)v^2+\al^2 v^4,$$
and by $v^2_\pm$ the two roots of the equation $\Theta_v=0$ with $v_-^2<v_+^2$.  Then we get
$$\lambda-\nu=\int_0^\theta \frac{d\theta}{\sqrt\Theta}=E^{-1}\int_0^v \frac{dv}{\sqrt\Theta_v}=\frac{1}{E\al}\int_0^v\frac{dv}{\sqrt{(v^2-v_+^2)(v^2-v_-^2)}}=\frac{1}{E\al v_+} F(\psi,k),$$
where $\nu$ is the integral constant, $k=\frac{v_-}{v_+}$ and $\sin \psi=\frac{v}{v_-}$, and $F$ is the incomplete elliptic integral of the first kind. This gives us the function of $\lambda$ in terms of $v$ hence of $\theta$, then inverting it we obtain the function $\theta(\lambda)$.

\subsubsection{The Hamiltonian restricted to the photon shell}
When restricted to the photon shell \begin{equation}
\label{EqNHIMK}N:=\{r=r_c,p_r=0,\ r_c\in \mathcal R\}
\end{equation}
and fixing an energy $E\in (1-\frac{M}{r_h},1)$, the Hamiltonian system has two degrees of freedom. The dynamics of this Hamiltonian system is analyzed in \cite{X}. 
\subsection{Arnold diffusion via zoom-whirl orbits and Penrose process}\label{SSADK}
In this section, we study Arnold diffusion utilizing the homoclinic orbits to the photon shell. We always assume $E^2<1$.
\subsubsection{The photon shell}
We look at the radial motion in \eqref{EqofMotion}. After a time rescaling $d\lambda:=\frac{1}{\Sigma} d\tau$ by the Mino time, the radial equation of motion becomes $\left(\frac{dr}{d\lambda}\right)^2- R(r)=0$. Thus  the dynamics which can be visualized as a particle moving in the potential well of $-R(r)$.

\begin{figure}
\centering
\includegraphics[width=0.35\textwidth]{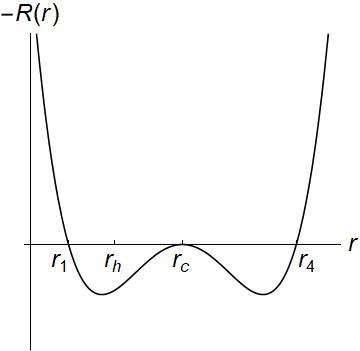}
\caption{Graph of $-R$}
\label{FigR}
\end{figure}

Suppose the quartic polynomial $R(r)$ has four roots denoted by $r_1\leq r_2\leq r_3\leq r_4$ and $R$ has the factorization,
\[R(r)=(E^2-1)(r-r_1)(r-r_2)(r-r_3)(r-r_4).\]
There are three parameters $E,L_z,Q$ to vary. When \eqref{EqCrit} holds with $\frac{d^2}{dr^2}R(r_c)>0$, the point $r_c$ a root of multiplicity 2. Since we have $-R\sim (1-E^2)r^4$ as $r\to \infty$.  This implies that $r_c$ cannot be the largest root $r_4$ or the smallest root $r_1$. Then we have $r_c=r_2=r_3$ and (see Figure \ref{FigR})
\[R(r)=(E^2-1)(r-r_1)(r-r_c)^2(r-r_4).\]

First, we know that $R(r)\geq 0$ for $r\in [r_c,r_4]$ hence we get that the graph of the function
\begin{equation}\label{EqHomoK}\frac{dr}{d\lambda}=\pm\sqrt{R(r)}=\pm|r-r_c|\sqrt{(E^2-1)(r-r_4)(r-r_1)},\ r\in [r_c,r_4]\end{equation}
is the phase portrait of a homoclinic orbit to the hyperbolic fixed point $(r,\frac{dr}{d\lambda})=(r_c,0)$.

We next show that we can indeed choose parameters to guarantee $r_1<r_c<r_4$ so that $R''(r_c)>0$. Indeed, from \eqref{EqRFactor} we see that this is equivalent to
\begin{equation}\label{EqAdm}Ar_c^2+2Br_c+C>0\Longleftrightarrow\ 3r_c^2(1-E^{-2})+E^{-2}M-\frac{a^2}{r_c^2}q_c>0.\end{equation}
We examine the inequality using the parameters $M=1,E=1,a=0.8$. Then $A=0,B=1,C=\frac{-a^2}{r_c^2}q_c$ and the inequality reduces to $2r_c-\frac{a^2}{r_c^2}q_c>0$. We  can check from  \eqref{Eqellq} that  this inequality indeed holds for all $r_c$ with $q(r_c)\geq 0$ (see  also Figure \ref{fig:side:aChandra}). Therefore by continuity we have the same conclusion for nearby $E<1$ and $a$. 


Finally, we also have $R(r)\geq 0$ for $r\in [r_1,r_c]$ so there is a corresponding homoclinic orbit.  However, such a homoclinic orbit is not physical since we have that $r_1<r_h$ for the smallest root of $R(r)=0$, so that part of the homoclinic orbit lies inside the event horizon (see Figure \ref{FigR}). This can be seen from the estimate $-R(r_h)<0$ using \eqref{EqRTheta} and the fact that $\Delta(r_h)=0.$ For the same reason, the homoclinic orbit in the $E^2>1$ case intersects the event horizon,
so in the following, we do not consider these homoclinic orbit.




\subsubsection{The homoclinic orbit to the photon shell}
We next solve for the homoclinic orbit \eqref{EqHomoK} explicitly. Similar researches can be found in literature \cite{LP1,LP2}. For this purpose, we use the equation $\lambda=\int \frac{dr}{\sqrt R}$. Using the factorization \eqref{EqRFactor}, we make a change of variable $x=(r-r_c)^{-1}$ and obtain
\begin{equation}\label{EqKerrr}\lambda-\nu=\int_0^r \frac{dr}{\sqrt R}=\frac1E\int_0^x \frac{dx}{\sqrt{F}}=\pm \frac{1}{\gamma}\ln(2\sqrt{\gamma F}+2\gamma x+\beta),\quad \gamma>0,\end{equation}
where $\nu$ is the integral constant, $F=\al+\beta x+\gamma x^2$ and
$\al=1-E^{-2}, \ \beta= 4 r_c(1-E^{-2})+2ME^{-2},\ \gamma=3r_c^2(1-E^{-2})-a^2q_cr_c^{-2}+2Mr_c E^{-2}.$ Along the homoclinic orbit $r(\lambda)\to r_c$ exponentially as $\lambda\to\pm\infty$, which corresponds $x(\lambda)\to +\infty$ exponentially. If we have in addition $D=-4\left(\left(r_c(1-E^{-2})+ME^{-2}\right)^2+\al^2q_cr_c^2(1-E^{-2})\right)>0$, then the above integral can be written more explicitly as $\pm\frac{1}{E\sqrt\gamma}\sinh^{-1}\frac{2\gamma x+\beta}{\sqrt D}$.
Solving the inverse function, we find $r$ as a function of $\lambda$.


Note that the $\theta$-equation does not depend on $r$ so the solution to the $(\theta,p_\theta)$ equation in Mino time and with initial condition $(\theta(0), p_\theta(0))$ can be expressed as $(\theta,p_\theta)(\lambda,(\theta(0), p_\theta(0)),L_z).$ However, the $\varphi$-equation depends on both $\theta$ and $r$, so to solve the $\varphi$-equation, we have to substitute a solution $r(\nu+\lambda),\theta(\lambda,(\theta(0),p_\theta(0)),L_z)$. Therefore the solution of the $\varphi$-equation with initial condition $r(\nu),\theta(0),\varphi(0)$ has the form $\varphi(\lambda,\nu,\theta(0),\varphi(0),L_z)$.

\begin{Not}
We denote $\Gamma(\lambda+\nu)=(r(\lambda+\nu), p_r(\lambda+\nu))$ the homoclinic orbit on the $(r,p_r)$-plane with initial condition $\Gamma(\nu)$ solved from \eqref{EqKerrr}, $\Gamma_c=(r_c,0)$ and by $\Xi=(\theta, p_\theta, \varphi,p_\varphi)$. We denote $\Xi=(\theta,p_\theta,\varphi,p_\varphi)$ and write the solution with initial value $r(\nu)$ and $\Xi_0$ as $\Xi(\nu,\Xi_0,\lambda)$. Here the dependence on $\nu$ enters through the $\varphi$-equation and does not enter $\theta,p_\theta,L_z$. Such a dependence can be solved from differentiating the $\varphi$-equation $\frac{d\partial_\nu \varphi(\lambda)}{d\lambda}=\left(\left(\frac{aP}{\Delta}\right)'(r(\nu+\lambda))\right)r'(\nu+\lambda).$

\end{Not}

\subsubsection{The scattering map}
Suppose the Kerr metric $g_{\mu\nu}dx^\mu dx^\nu$ is perturbed to $\tilde g_{\mu\nu}dx^\mu dx^\nu=(g_{\mu\nu}+\eps h_{\mu\nu})dx^\mu dx^\nu$, then the corresponding Hamiltonian is $$H=\frac12\tilde g^{\mu\nu}p_\mu p_\nu=\frac{1}{2}(g^{\mu\nu}p_\mu p_\nu-\eps  g^{\al\mu}h_{\mu\nu}g^{\beta\nu}p_\al p_\beta)+O(\eps^2).$$
We denote by $H_1=-\frac12 g^{\al\mu}h_{\mu\nu}g^{\beta\nu}p_\al p_\beta+O(\eps)$ the perturbation and proceed.


The set $N$ in \eqref{EqNHIMK} is a NHIM for the Kerr Hamiltonian. When the perturbation is added, the theorem of NHIM implies that $N$ persists and gets only slightly deformed into a nearby submanifold denoted by $N_\eps$ that is a NHIM for the perturbed system. The Hamiltonian restricted to $N_\eps$ is still Hamiltonian denoted by $H_{N_\eps}$, and when pulled back to a function on $N$, it is a perturbation of $H_N$. The perturbation may create transversal intersections between the stable $W^s(N_\eps)$ and unstable $W^u(N_\eps)$ manifolds of $N_\eps$, which enables us to define the scattering map.

The following theorem is proved completely analogously to Theorem \ref{ThmScatteringS1} (see Section \ref{SSScattering} and Remark \ref{RkK}).
\begin{Thm}\label{ThmScatteringK}
The scattering map $\mathcal S:\ N_\eps\to N_\eps$ is a well-defined symplectic map preserving the level set $\{H_{N_\eps}=-1/2\}$, and $O(\eps)$-close to identity in the $C^1$ norm. Suppose that the function $\nu\mapsto J(\nu,\Xi_0)$ has nondegenerate critical point denoted $\nu^*=\nu^*(\Xi_0)$ for all $\Xi_0\in U\subset N_\eps\cap H_{N_\eps}^{-1}(-1/2)$ where $U$ is an open set and
\begin{equation}
\begin{aligned}
J(\nu,\Xi_0)&=\int_\R  \Sigma(r(\nu+\lambda),\theta(\lambda,\theta(0)))H_1(\Gamma(\nu+\lambda),\Xi(\nu,\Xi_0,\lambda))\\
&-  \Sigma(r_c,\theta(\lambda,\theta(0))) H_1(\Gamma_c,\Xi(\nu,\Xi_0,\lambda))\,d\lambda.
\end{aligned}
\end{equation}
 Then explicitly up to an error $o_{C^1}(1)$ we have for $\star=\varphi$
\begin{equation}\label{EqScatteringK}
\begin{aligned}
&\frac{1}{\eps}(p_\star(z_+)-p_\star(z_-))=\\
&-\int_\R \Sigma(r(\nu^*+\lambda),\theta(\lambda,\theta(0)))\partial_\star H_1(\Gamma(\nu^*+\lambda),\Xi(\nu^*,\Xi_0,\lambda))\\
&-\Sigma(r_c,\theta(\lambda,\theta(0)))\partial_\star H_1(\Gamma_c,\Xi(\nu^*,\Xi_0,\lambda))\,d\lambda.\\
\end{aligned}
\end{equation}
\end{Thm}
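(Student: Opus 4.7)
The plan is to mirror the strategy sketched for Theorem \ref{ThmScatteringS1}, with the main twist being the rescaling to Mino time that is already singled out in the Kerr discussion. First, I would fix the energy level $H_\eps=-\frac12$ and replace the Hamiltonian $H_\eps$ by $\mathcal H_\eps := H_r + H_\theta + \Sigma + 2\eps \Sigma H_1$ (so that on $\{H_\eps = -1/2\}$ the orbits of $H_\eps$ and $\mathcal H_\eps|_{\mathcal H_\eps=0}$ agree after the reparametrization $d\lambda = \Sigma^{-1} dt$). This is the Kerr analogue of the separation trick used in the proof of Theorem \ref{ThmGLS}. In these variables the unperturbed system is literally separated between the radial block $(r,p_r)$, which carries the hyperbolic fixed point $\Gamma_c = (r_c,0)$ and the homoclinic orbit $\Gamma(\lambda)$ solved from \eqref{EqKerrr}, and the $\Xi=(\theta,p_\theta,\varphi,p_\varphi)$ block, for which the theorem of NHIM (Appendix \ref{SNHIM}) applies. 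Concretely I would cut off $H_1$ outside a large ball as in Section \ref{SSADStat}, invoke Theorem \ref{ThmNHIM} on $\mathcal H_\eps$ to produce the perturbed NHIM $N_\eps$ (graph over $N$) together with its $C^{r-1}$ stable and unstable foliations $W^{s,u}(N_\eps)$, and record that $H_{N_\eps}$ is an $O(\eps)$ perturbation of $H_N$.

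Next, I would follow the Delshams--Gidea--de la Llave--Seara recipe to define the scattering map and compute it to first order. Let $z_\pm \in N_\eps$ and let $z \in W^u(N_\eps) \cap W^s(N_\eps)$ with the backward (resp.\ forward) orbit asymptotic to the orbit of $z_-$ (resp.\ $z_+$). A nondegenerate critical point $\nu^*$ of $\nu \mapsto J(\nu,\Xi_0)$ corresponds, by the standard implicit function/Melnikov argument of Appendix \ref{AppPoincare}, to a transverse homoclinic intersection of $W^{u}(N_\eps)$ with $W^{s}(N_\eps)$; applying the implicit function theorem in $\nu$ yields a smooth homoclinic manifold parametrized by $(\Xi_0,\nu^*(\Xi_0))$ and hence a well-defined map $\mathcal S: U \to N_\eps$. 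Symplecticity of $\mathcal S$ and its $C^1$-closeness to identity then follow from general arguments of Delshams--de la Llave--Seara, because the unperturbed manifolds coincide (every point of $N$ is its own image under the identity scattering map) and because the Hamiltonian structure is preserved by $\mathcal H_\eps$ and restricts to one on $N_\eps$; the change of time by $\Sigma$ is a conformal symplectic rescaling on the energy level and does not destroy these properties.

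The formula \eqref{EqScatteringK} then comes out of writing the variation of $p_\star$ along the homoclinic excursion as
\begin{equation*}
p_\star(z_+) - p_\star(z_-) = \int_{-\infty}^{+\infty} \frac{d}{d\lambda} p_\star(\phi^\lambda_{\mathcal H_\eps}(z))\,d\lambda - \bigl(\text{contribution along }N_\eps\bigr),
\end{equation*}
and replacing the perturbed integrand by the unperturbed one evaluated along $(\Gamma(\nu^*+\lambda),\Xi(\nu^*,\Xi_0,\lambda))$ at the cost of an $o(1)$ error. The subtraction of the $\Sigma(r_c,\theta)H_1(\Gamma_c,\Xi)$ term is exactly the mechanism that makes the improper integral convergent: it cancels the part of the time derivative of $p_\star$ that comes from the limiting motion on the NHIM, which is what the Melnikov potential $J(\nu,\Xi_0)$ encodes. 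The factor $\Sigma(r,\theta)$ in front of $H_1$ is the Jacobian of the passage from proper time to Mino time, and is the only visible difference from the Schwarzschild derivation.

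The main obstacle will be the careful handling of the $\varphi$-equation, because unlike the Schwarzschild case the unperturbed flow on $N$ already couples $\varphi$ to $\theta$ and its solution $\varphi(\lambda,\nu,\theta(0),\varphi(0),L_z)$ depends on the ``phase'' $\nu$ through the radial factor $aP(r(\nu+\lambda))/\Delta$. One must check that, along the unperturbed homoclinic trajectory, this $\nu$-dependence decays fast enough at $\lambda \to \pm\infty$ for the Melnikov integrand and its derivatives in the angle variables to be integrable, and that differentiation in $\nu$ commutes with the improper integral. Once this integrability and the nondegeneracy of $\nu^*$ are in place, the remainder of the proof is a transcription of the Schwarzschild argument outlined in Section \ref{SSScattering}, as noted in Remark \ref{RkK}.
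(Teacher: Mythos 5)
Your proposal is correct and follows essentially the same route as the paper: the paper's Remark \ref{RkK} reduces the Kerr case to the Schwarzschild argument of Appendix \ref{SSScattering} precisely by introducing $\mathcal H_\eps=H_r+H_\theta+\Sigma+2\eps\Sigma H_1$ with $h_0=\Delta p_r^2-\frac{R}{\Delta}+Q$ and $\mathcal H_1=2\Sigma H_1$, and then runs the same Newton--Leibniz/Gronwall comparison between the perturbed homoclinic excursion and its asymptotic orbits on $N_\eps$ that you describe. Your flagged concern about the $\nu$-dependence entering through the $\varphi$-equation is a legitimate point the paper only treats implicitly, but it does not change the structure of the argument.
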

\subsubsection{Arnold diffusion and its physical meaning}\label{SSSADK}
We also have the following theorem on Arnold diffusion in perturbed Kerr spacetime that is proved completely analogous to Theorem \ref{ThmGLS} (by applying Theorem 4.1 of \cite{GLS} to the Hamiltonian \eqref{EqHamKR}).
\begin{Thm}\label{ThmKerrAD1}
Let $\eps h_{\mu\nu}dx^\mu dx^\nu$  be a stationary perturbation to the Kerr metric. Suppose
\begin{enumerate}
\item the assumption of Theorem \ref{ThmScatteringK} holds;
\item the parameters $M,a,E$ are as in Definition \ref{DefAdmE}, and $Q,L_z$ are solved from  \eqref{Eqellq} with $r\in \mathcal R$;
\item the RHS of \eqref{EqScatteringK} with $\star=\varphi$ is nonvanishing at some point $\Xi_0\in U$. 
\end{enumerate}
Then  there exist $\rho>0$ and $\eps_0>0$, such that for all $|\eps|<\eps_0$, there exists  $T>0$ and an orbit of the perturbed system such that
$$|L_z(T)-L_z(0)|> \rho. $$
\end{Thm}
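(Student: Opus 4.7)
The plan is to mirror the argument used for Theorem \ref{ThmGLS} in the Schwarzschild setting, with the role of the multiplier $r^{2}/L_{z}$ played by $\Sigma$ in the Kerr case. First I would pass from the original Hamiltonian $H$ to the separable auxiliary Hamiltonian
\[
\mathcal{H}_{\eps}=H_{r}+H_{\theta}+\Sigma+\eps\,\Sigma H_{1},
\]
already introduced just after \eqref{EqHamK}. On the level set $\{\mathcal{H}_{\eps}=0\}$, its orbits coincide (up to a Mino time reparametrization $d\lambda=\Sigma^{-1}dt$) with orbits of the perturbed Kerr Hamiltonian on $\{H_{\eps}=-\tfrac{1}{2}\}$, and the radial and latitudinal parts of the unperturbed piece decouple. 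The advantage is that, at $\eps=0$, the set $N=\{r=r_{c},\,p_{r}=0\}$ from \eqref{EqNHIMK}, with $r_{c}\in\mathcal{R}$, is a normally hyperbolic invariant manifold (NHIM) for $\mathcal{H}_{0}$; the normal hyperbolicity follows from the factorization \eqref{EqRFactor} together with condition (3) of Definition \ref{DefAdmE}, guaranteeing $R''(r_{c})>0$. As in Section \ref{SSADStat} I first cut off $H_{1}$ outside a large compact set to apply Theorem \ref{ThmNHIM} cleanly; this yields the perturbed NHIM $N_{\eps}$, a graph over $N$ carrying the restricted Hamiltonian $H_{N_{\eps}}$, which is a small perturbation of the integrable two--degree--of--freedom system analyzed in \cite{X}.

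The second step is to turn assumption (1), together with the nondegeneracy of the critical point of $J(\nu,\Xi_{0})$ from Theorem \ref{ThmScatteringK}, into a transverse homoclinic intersection $W^{u}(N_{\eps})\pitchfork W^{s}(N_{\eps})$, and thereby a well--defined scattering map $\mathcal{S}:N_{\eps}\to N_{\eps}$ that is $O(\eps)$--close to the identity in $C^{1}$ on $\{H_{N_{\eps}}=-1/2\}$. The first--order expansion \eqref{EqScatteringK}, applied with $\star=\varphi$, then shows that at some point of $U$ the scattering map produces a change in $p_{\varphi}=L_{z}$ of order $\eps$ that is not purely tangent to the level sets of the integrals on $N_{\eps}$; assumption (3) guarantees this exactly. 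By continuity, the same transversality and the same sign of the $L_{z}$--increment of $\mathcal{S}$ persist on an open subset $U'\subset U$.

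The third step is shadowing. Because the autonomous system $\mathcal{H}_{\eps}$ has three effective degrees of freedom after fixing $E$ and $\mathcal{H}_{\eps}=0$, I would carry out energetic reduction on $N_{\eps}$ as in the proof of Theorem \ref{ThmGLS}: introduce action--angle coordinates $(\alpha_{\theta},J_{\theta},\alpha_{\varphi},J_{\varphi})$ for $H_{N}$ (available by \cite{X} since $H_{N}$ is Liouville integrable), fix a value of $H_{N_{\eps}}$, solve for $J_{\theta}$ as a new Hamiltonian with $\alpha_{\theta}$ as new time, and thus convert the inner dynamics into a $1\tfrac{1}{2}$ degree--of--freedom system to which Theorem 4.1 of \cite{GLS} applies verbatim. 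The scattering map constructed in step two provides the jumps in $L_{z}$ along a chain of tori, while Poincar\'e recurrence together with the inclination--lemma shadowing from \cite{GLS} produces an actual orbit that tracks this chain. Iterating the scattering map $O(1/\eps)$ times produces an orbit whose $L_{z}$ has moved by a definite amount $\rho>0$ depending on the Melnikov data but independent of $\eps$, which is the desired conclusion.

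The main obstacle, as already flagged in Remark \ref{RkADLong}, is that the inner dynamics on $N$ is not known to be twist, so diffusion machineries that require twist (\cite{CY1,CY2,DLS06,Tr1,Tr2}) do not directly apply. The reason the \cite{GLS} scheme can still be invoked is that it only uses recurrence of the inner dynamics plus transversality of the scattering map in the action direction, both of which are furnished by steps one and two; verifying that the hypotheses of Theorem 4.1 of \cite{GLS} are genuinely met by the reduced system coming from $\mathcal{H}_{\eps}$ is therefore the delicate point, and would require a careful check that the cut--off introduced in step one does not interfere with the scattering map computed along the homoclinic orbit \eqref{EqHomoK}, which stays in the chosen compact region by construction.
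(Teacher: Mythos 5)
Your proposal is correct and follows essentially the same route as the paper: the paper proves Theorem \ref{ThmKerrAD1} by declaring it "completely analogous to Theorem \ref{ThmGLS}," i.e., by applying Theorem 4.1 of \cite{GLS} to the separable auxiliary Hamiltonian \eqref{EqHamKR}, with the scattering map of Theorem \ref{ThmScatteringK} supplying the transversality and the energetic reduction on $N_\eps$ supplying the nonautonomous setting required by \cite{GLS}. Your write-up simply makes explicit the steps the paper leaves implicit (the only trivial discrepancy being the factor $2\eps\Sigma H_1$ versus $\eps\Sigma H_1$ in \eqref{EqHamKR}).
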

The dynamics on the NHIM N is studied in \cite{X}, which has the twist property for some parameter range, thus many existing techniques in literature (\cite{CY1,CY2,Tr1,Tr2, DLS06,GLS} etc) apply to verify that a given perturbation can give orbits whose $L_z$-variable varies in the entire interval $\ell(\mathcal R)$ up to a small $\dt$-error near the boundary. We do not pursue that generality for the sake of simplicity. Similarly, if the perturbation is not stationary but 1-periodic in $\tau$, we have the analogue of Theorem \ref{ThmDiffusionNonStatS}. We skip the statement here.

Let us now explain the physical picture of the diffusing orbit given by the last stated theorem. The theorem implies that there exists zoom-whirl orbit in perturbed Kerr spacetime to have large oscillation of $L_z$. A stationary perturbation does not change the value of $E$. From \eqref{EqCrit} and \eqref{Eqellq}, we see that the change of $L_z=\ell E$ will lead to a change of the solution $r_c$, which in turn leads to a change in $Q=qE$. Thus during the zoom-whirling process, the whirling orbit visits photon spheres with different radii $r_c$. Moreover, we can find orbit such that $|L_z(\infty)-L_z(-\infty)|>\rho$ thus when $t\to\pm\infty$ the orbit approaches photon spheres with different radii. This phenomenon does not even occur in Schwarzschild case.

Solving the linearized vacuum Einstein equation at Kerr metric is significantly harder than at Schwarzschild, so we do not provide a concrete perturbation to verify the assumption of
Theorem \ref{ThmKerrAD1} as we did in Section \ref{SReggeWheeler}.

\subsubsection{Arnold diffusion and Penrose process}\label{SSSPenrose}
We remark that Arnold diffusion in this setting is related to the Penrose process. As explained in Section \ref{SSK} (c.f. Section 65 of \cite{C} and Section 12.4 of \cite{W} etc), the Penrose process is a process to extract energy and angular momentum from the blackhole. Consider a particle with energy $E=1$ entering the ergosphere but not yet the event horizon and we can arrange the particle to break into two pieces, one with negative energy $E_1<0$, so the other has energy $E_2>1$. The negative energy fragment falls into the blackhole while the other piece escapes to infinity, thus the mass of the blackhole is reduced to $M+E_1$. Simultaneously, the angular momentum $L_z$ of the negative energy fragment particle should also be negative (equation (12.4.7) of \cite{W}), i.e. opposite to that of the black hole, so that the angular momentum of the blackhole is also reduced.

We substitute $r=r_e=2M$ for the equatorial orbit into the $q$-equation in \eqref{Eqellq}, we find that $q(r_e)\simeq \frac{(28-8\sqrt 2)a^2-16}{a^2}$ for the parameters $M=1,E=1$. This number is positive for $a$ close to $1$, in which case  the inner boundary of the photon shell  lies inside the ergosphere. This means that the photon shell intersects the ergosphere.

We then consider a particle whose energy $E<1$ and angular momentum $L_z$ are such that  the photon shell intersects the ergosphere $2M\in \mathcal R$. We may use the mechanism of Arnold diffusion to arrange the particle to move on zoom-whirl orbit to reduce its angular momentum and simultaneously radius of the photon shell orbit will increase. We then arrange that the particle breaks into two pieces with $E_1+E_2=E,\ L_1+L_2=L_z$ and $E_1<0, L_1<0$ when $r=r_c$. The one with negative energy and angular momentum falls into the event horizon while the other one has both energy $E_2$ and angular momentum $L_2$ increased compared to $E$ and $L_z$ respectively. If the values of $E_2$ and $L_2$ are arranged precisely such that the second particle still shadows a photon shell orbit (we need $E_2<1$, \eqref{Eqellq} holds). We may apply the mechanism of Arnold diffusion again to reduce the value of $L_z$ before the next time when we apply Penrose mechanism to break the particle into two. This process, if arranged precisely, may maintain the zoom-whirl particle's angular momentum in a fixed range but keep reducing the energy and angular momentum of the blackhole, until the zoom-whirl particle's energy exceeds 1 then it escapes.

This process always increases the energy of the zoom-whirl particle, since we consider stationary perturbations which does not change the energy. In the following, we expect that when a $\tau$-periodic perturbation is considered, we can reduce both the values of energy and the angular momentum of the zoom-whirl particle so that we can keep reducing the energy and angular momentum of the blackhole (c.f. Conjecture \ref{ConK1}(2)). Moreover, the above diffusion mechanism using zoom-whirl orbit requires a delicate control of $E_2,L_2$ such that the particle after the Penrose process also moves on zoom-whirl orbit. With the following following Conjecture \ref{ConK2}(2), we expect that the above process of combining Arnold diffusion and Penrose process can work without delicate control of the values of $E_2,L_2$.


\subsubsection{Conjectures}


As in the Schwarzschild case, we formulate the following conjecture in the Kerr setting.
 \begin{Con}\label{ConK1}
 \begin{enumerate}
\item Let $\eps h_{\mu\nu}dx^\mu dx^\nu$  be a generic stationary perturbation to the Kerr metric, suppose the tuple $(a,M,E,Q,L_z)$ of parameters are as in assumption (2) of Theorem \ref{ThmKerrAD1}, then for any $\dt>0$,
 there exists $\eps_0$ such that for all $|\eps|<\eps_0$,
 there exists an orbit on $N_\eps$ and times $T,T'$ such that
$$|L_z(T)-E\ell_-(E)|<\dt,\quad |L_z(T')-E\ell_+(E)|<\dt,$$
where $\ell_-(E)<\ell_+(E)$ are two endpoints of the interval $\ell(\mathcal R)$.
\item Let $\eps h_{\mu\nu}dx^\mu dx^\nu$  be a generic perturbation $($1-periodic in $\tau)$ to the Kerr metric, then for any $\dt>0$, there is an orbit that visits any $\dt$-ball centered at  $\{H_{\hat N_\eps}=-\frac12\}$ on $\hat N_\eps$, provided $\eps$ is sufficiently small.
 \end{enumerate}
 \end{Con}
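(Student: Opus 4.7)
The plan is to combine the outer scattering map of Theorem \ref{ThmScatteringK} with the inner dynamics on $N_\eps$, which in the companion paper \cite{X} is shown to satisfy a twist condition on a substantial parameter range, and to invoke the geometric Arnold diffusion machinery of Delshams--Gidea--de la Llave--Seara \cite{DLS06, GLS} adapted to an autonomous Hamiltonian. As in the proof of Theorem \ref{ThmGLS}, I would first perform energetic reduction on the 4-dimensional level $\{H_{N_\eps}=-1/2\}\subset N_\eps$, converting it to a 1.5-degree-of-freedom inner system whose nontrivial action is essentially $L_z$. The genericity hypothesis on $h_{\mu\nu}$ would then be formulated as the requirement that the Melnikov function $J(\nu,\Xi_0)$ admit a nondegenerate critical point $\nu^*(\Xi_0)$ for $\Xi_0$ in an open dense subset of this reduced level, so that the scattering map $\mathcal{S}$ is globally well-defined and produces an $O(\eps)$ displacement in $L_z$ whose sign can be prescribed.

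With these ingredients in hand, the core construction for part (1) is an Arnold transition chain whose $L_z$-projection covers the interval $\ell(\mathcal R)$ up to $\dt$. The twist of the inner dynamics \cite{X} furnishes a Cantor family of primary KAM tori parametrized by $L_z$; near each low-order resonance, a Birkhoff normal form yields secondary tori filling the gaps, producing a fine partition of $\ell(\mathcal R)$. Transverse heteroclinic connections between consecutive primary tori and across secondary--primary pairs would be verified via formula \eqref{EqScatteringK} combined with Poincar\'e--Birkhoff arguments. Standard shadowing (either the topological one of \cite{GLS} or a variational one \`a la \cite{CY1,CY2}) then produces a true orbit with $|L_z(T)-E\ell_-(E)|<\dt$ and $|L_z(T')-E\ell_+(E)|<\dt$. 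Part (2) runs along the same lines on the 6-dimensional $\hat N_\eps$ parametrized by $(\tau,E,\theta,p_\theta,\varphi,L_z)$; since both $E$ and $L_z$ are now genuine action variables, a two-dimensional version of the transition chain (spanning both $E$ and $L_z$) must be constructed, and a richer genericity on the $\tau$-periodic part of $h_{\mu\nu}$ ensures the scattering map moves both actions independently, yielding $\dt$-density on the 5-dimensional energy level.

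The main obstacle is genericity within the physically admissible class. A $C^r$-generic perturbation of the Kerr metric is easy to handle, but physically meaningful $h_{\mu\nu}$ must satisfy the linearized vacuum Einstein equation at Kerr, whose solution space is rigidly governed by the Teukolsky formalism. Establishing that the nondegeneracy conditions---Melnikov critical point nondegeneracy uniformly in $\Xi_0$, persistence of twist across $\ell(\mathcal R)$, and transversality of all the requisite heteroclinic connections---hold on an open dense subset of this constrained space is the central technical difficulty, and is the reason the author states the result as a conjecture rather than as a theorem. A secondary obstacle specific to part (2) is the non-convexity of $H_{\hat N_\eps}$ in the actions, which precludes direct use of the Bernard--Cheng--Xue variational theory \cite{CX} and leaves the more delicate geometric methods as the only available tool for genuinely two-dimensional diffusion across the strong resonance web.
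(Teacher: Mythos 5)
The statement you are addressing is Conjecture \ref{ConK1}: the paper deliberately offers no proof, only the observation (after the conjecture, citing Theorem 1.2 of \cite{X}) that the inner dynamics on $N$ has the twist property so that part (1) ``should be easier than its Schwarzschild counterpart,'' together with the caveats of Remark \ref{RkConS}. Your proposed route --- energetic reduction, the scattering map of Theorem \ref{ThmScatteringK}, twist of the restricted dynamics from \cite{X}, and the transition-chain/shadowing machinery of \cite{DLS06,GLS} or \cite{CY1,CY2} --- is exactly the program the author has in mind, and you correctly identify the two obstructions the paper itself names: genericity within the Teukolsky-constrained class of physical perturbations, and non-convexity for part (2). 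To that extent your proposal is faithful to the paper's intent.

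It is not, however, a proof, and the gaps are precisely the ones that keep the statement a conjecture. Concretely: (i) Theorem \ref{ThmKerrAD1} only yields an $L_z$-oscillation of size $\rho$ depending on $h_{\mu\nu}$; upgrading this to a chain covering all of $\ell(\mathcal R)$ requires verifying the Melnikov nondegeneracy and the heteroclinic transversality \emph{uniformly} along the chain, and no mechanism is given for doing so within the rigid solution space of the linearized Einstein equation at Kerr --- ``open dense subset of this constrained space'' is not even a formulated statement, since that space is not a Banach manifold on which the standard genericity arguments of \cite{DLS06,GLS} operate. (ii) Near the endpoints of $\mathcal R$ the admissibility inequality (3) of Definition \ref{DefAdmE} degenerates, the normal Lyapunov exponents of $N$ tend to zero, and both the persistence of $N_\eps$ and the convergence of the Melnikov integral deteriorate; reaching $E\ell_\pm(E)$ to within $\dt$ therefore needs a quantitative control (uniform hyperbolicity on a $\dt$-shrunk subinterval, with $\eps_0$ depending on $\dt$) that your sketch does not supply. (iii) For part (2), a genuinely two-dimensional transition chain in the actions $(E,L_z)$ of a non-convex restricted Hamiltonian, crossing the full resonance web of $\hat N_\eps$, has no off-the-shelf theorem behind it --- this is essentially the content of Arnold's Conjecture \ref{ConArnold} in the a priori unstable, non-convex setting. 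None of these are defects of your outline relative to the paper; they are the reasons the paper states the result as a conjecture, and your proposal should be read as a restatement of the intended strategy rather than as closing it.
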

 We remark that the dynamics on $N$ in general has the twist property (see Theorem 1.2 of \cite{X}) so that the first item should be easier than its Schwarzschild counterpart.

To state the next conjecture, we restrict our attention to the part of the phase space of bounded motions. Since $\varphi$ is defined on $\T$ and the domain for $\theta$ is also bounded for bounded $Q$ (see Section \ref{SSSVertical}), we consider only the radial component. We are interested in the right potential well of $-R(r)$. Recall that the function $R$ depends on the constants of motion $E,L_z,Q$. For $E^2<1$ we denote by $r^*$ the local max of $-R$ i.e.  $-R'(r^*)=0$ and $-R''(r^*)<0$ and by $r_*(>r^*)$ the local min of $-R$ to the right of $r^*$, i.e.  $R'(r_*)=0$ and $-R''(r_*)>0$. For a particle moving in the right potential well of $-R$ satisfying $(\frac{dr}{d\lambda})^2-R(r)=0$, we necessarily have $-R(r^*)>0$ and $-R(r_*)<0$. The figure of $-R$ can be obtain by lifting the graph of Figure \ref{FigR} upward slightly so that $-R$ has four zeros and we are interested in the right lobe below the $r$-axis.

Let $C$ be a large number. We define the following part of the phase space with bounded motions hence Liouville-Arnold theorem applies
\begin{equation}\label{EqBK}\mathcal B(C):=\left\{0< E^2<1,\  |L_z|<C,\ 0<Q<C\ |\ -R(r^*)>0,\ -R(r_*)<0\right\}\cap H^{-1}(-\frac12). \end{equation}
Similar to \eqref{EqBS}, in the next Conjecture \ref{ConK2}(1) we treat $\mathcal B(C)$ as a 5 dimensional set and in Conjecture \ref{ConK2}(2) we treat $\mathcal B(C)$ as a 7 dimensional set (see the paragraph before Conjecture \ref{ConS2})
 \begin{Con}\label{ConK2}
 \begin{enumerate}
\item Let $\eps h_{\mu\nu}dx^\mu dx^\nu$  be a generic stationary perturbation to the Kerr metric, then for any $\dt>0$ and $C>0$,
there exists an orbit that is $\dt$-dense on $\mathcal B(C)$ provided $\eps$ is small.
\item Let $\eps h_{\mu\nu}dx^\mu dx^\nu$  be a generic perturbation $($1-periodic in $\tau)$ to the Kerr metric, then for any $\dt>0$ and $C>0$,
 there exists an orbit that is $\dt$-dense on $\mathcal B(C)$, provided $\eps$ is small.
 \end{enumerate}
 \end{Con}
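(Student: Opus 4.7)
The plan is to reduce Conjecture \ref{ConK2} to a pure Arnold diffusion statement on $\mathcal B(C)$ and then combine the two sources of hyperbolicity available in Kerr: the photon shell NHIM $N_\eps$ (and its $\tau$-periodic extension $\hat N_\eps$), providing the separatrix chain near the boundary of $\mathcal B(C)$, and the Cantor family of KAM tori supplied by Theorem \ref{ThmKAM} in the interior. First I would introduce action-angle coordinates $(\al,J)\in \T^n\times D$ via Liouville-Arnold (with $n=3$ in case (1), after fixing $H=-\tfrac12$ and using conservation of $E$; with $n=4$ in case (2)) on the part of $\mathcal B(C)$ separated from the singular leaves of the integrable foliation, and compute the unperturbed Hamiltonian $h(J)$ from the standard Mino-time frequency integrals \eqref{EqMino}, which are elliptic.

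Second, I would verify Kolmogorov nondegeneracy $\det D^2 h(J)\neq 0$ on a dense open subset of $\mathcal B(C)$ (the expected genericity reduces to a nondegenerate elliptic-integral identity), so that Theorem \ref{ThmKAM} yields, on each energy level, a Cantor family of $n$-tori with Diophantine frequencies, whose complement (the Arnold web) has measure $O(\sqrt\eps)$. Diffusion then must occur within this web. Third, in a neighborhood of each simple resonance $\langle k,\om(J)\rangle=0$, averaging produces a slow-fast Hamiltonian with an invariant NHIM of codimension two; generically in $h_{\mu\nu}$ its stable and unstable manifolds intersect transversely with a nondegenerate Melnikov zero (computed exactly as in Theorem \ref{ThmScatteringK}), yielding a scattering map whose forward orbits sweep a large portion of that resonance. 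Near the photon shell one inputs Theorem \ref{ThmKerrAD1} directly; away from it one uses the single-resonance normal form as the basic diffusion block.

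Fourth, I would chain these blocks to achieve density. Following the geometric strategy of Delshams-Gidea-de la Llave-Seara, at simple resonances one crosses using iterated scattering maps and an inclination/shadowing lemma; at double resonances one needs the variational approach of \cite{CX} (or Treschev's method \cite{Tr1,Tr2}) to jump between resonance arcs. Finally, fix a $\dt$-net $\{z_i\}_{i=1}^K$ of $\mathcal B(C)$ with $K\lesssim \dt^{-\dim \mathcal B(C)}$; applying the shadowing lemma to the ordered sequence $z_1,z_2,\ldots,z_K$ along the diffusion web produces a single orbit meeting every $B(z_i,\dt)$.

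The main obstacle, as flagged in Remark \ref{RkConS}, is that $h(J)$ is not convex on $\mathcal B(C)$, so Mather-theoretic machinery from \cite{CX} (which is the sharpest available tool at double resonances) cannot be invoked directly; one must either extend that theory to the nonconvex Kerr setting or bypass double resonances by an explicit normal-form computation at the price of much stronger assumptions on $h_{\mu\nu}$. A secondary but genuine difficulty is that the perturbation is constrained to solve the linearized vacuum Einstein equation, so the genericity of all Melnikov nondegeneracies must be verified \emph{within} the solution space of the Regge-Wheeler/Teukolsky system rather than within the space of all symmetric 2-tensors, which requires a density argument inside that infinite-dimensional linear space. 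Everything else (NHIM persistence, Melnikov computation, KAM, shadowing) is essentially automatic once these two points are settled.
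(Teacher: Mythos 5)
The statement you are trying to prove is Conjecture \ref{ConK2}: the paper does not prove it, and explicitly explains (via the pointer back to Remark \ref{RkConS}) why it is out of reach with current technology. Your proposal is therefore not a proof but a restatement of the standard program for Arnold's conjecture adapted to the Kerr action space, and the two points you yourself flag at the end as ``obstacles'' are precisely the places where the argument has no content yet. First, the passage through double resonances: the only general mechanisms for crossing them ($\cite{CX}$ and its relatives) require convexity (or at least a variational/Mather-theoretic structure) of $h(J)$, which fails here; ``extend that theory to the nonconvex Kerr setting'' is not a step of a proof, it is the open problem itself. Saying that one could alternatively ``bypass double resonances by an explicit normal-form computation'' does not work either: on a $\dt$-dense orbit in a set of dimension $5$ or $7$ one cannot avoid double (and higher) resonances, and no normal form at a double resonance by itself produces connecting orbits without a shadowing mechanism that is currently unavailable in the nonconvex case.

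Second, genericity. All the Melnikov nondegeneracies and the transversality of scattering maps at every simple resonance must hold for a perturbation constrained to the (linearized) Einstein equations, i.e.\ inside the Regge--Wheeler/Teukolsky solution space rather than the space of all symmetric $2$-tensors. You correctly note this requires ``a density argument inside that infinite-dimensional linear space,'' but you give no such argument, and none is known; the paper only verifies a single Melnikov nondegeneracy for one explicit Schwarzschild perturbation near the photon sphere, not the uncountably many nondegeneracy conditions (one per resonance, throughout $\mathcal B(C)$) that your scheme needs. There are also unaddressed technical gaps even in the parts you call ``essentially automatic'': the Kolmogorov nondegeneracy $\det D^2h(J)\neq 0$ for the Kerr frequency map on $\mathcal B(C)$ is asserted, not checked (the paper's companion \cite{X} shows the twist can degenerate), and the shadowing of a $\dt$-net by a single orbit across a chain of distinct resonance blocks is exactly the global step that no existing theorem supplies outside the convex setting. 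In short: the statement remains a conjecture, and your proposal identifies the right difficulties without resolving any of them.
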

 We have similar remarks as Remark \ref{RkConS} in the Schwarzschild case.
\subsection{Oscillatory orbits in perturbed Kerr spacetime}
In this section, we study the oscillatory orbits in perturbed Kerr spacetime generalizing the analysis in Section \ref{SSOsc}.

 For simplicity, we consider only the case of $Q=0$, which implies $\theta=\pi/2$ and that the orbits all lie on the equatorial plane. Then $\Sigma=r^2$ and the radial equation \eqref{EqofMotion} $$\left(\frac{ dr}{dt}\right)^2=\frac{R(r)}{\Sigma(r)^2}=\frac{R(r)}{r^4}.$$
We next choose $E^2=1$ then from \eqref{EqRTheta} we get that the degree of $-R$ is less than $4$ and it has the form $-R(r)=-2Mr^3+(L_z^2-2a^2)r^2+O(r)$, so that as $r\to\infty$ we have $$\left(\frac{ dr}{dt}\right)^2-\frac{2M}{r}=o(r^{-1}),$$
and moreover, the largest root of $R$ is estimated as $\frac{L_z^2}{2M}+O(1)$ for $L_z^2$ large.

Let $r_*$ be the largest root of $R(r)=0$ so $R(r)>0$ for $r>r_*$. Then
$$\mathrm{Graph}\left\{\frac{dr}{dt}=\pm\frac{1}{r^2}\sqrt{R(r)} \ \Big|\ r\geq r_*\right\}$$
in the $(r,\frac{dr}{dt})$-plane gives rise to the phase portrait of the homoclinic orbit to infinity. The homoclinic orbit can be solved explicit using the method of Section \ref{SSSOscHomo}.
In Mino time, the $\varphi$-equation is estimated as $\frac{d\varphi}{d\lambda}=L_z+O(r^{-1})=L_z+O(\frac{1}{L_z^2})$ in the large $L_z$ limit.

The situation is now similar to the Schwarzschild case in Section \ref{SSOsc}, we can perform a similar coordinate change to reveal the degenerate hyperbolic fixed point at infinity. We then define the Melnikov function $J(\nu,\varphi_0)$ by adapting that in Section \ref{SSOsc}. Then we have the following result by adapting Theorem \ref{ThmOsc}.

\begin{Thm}\label{ThmOscK}
Let $\eps h_{\mu\nu}dx^\mu dx^\nu$ be a stationary perturbation of the Kerr metric satisfying
\begin{enumerate}
\item The perturbation $H_1=- \frac12  g^{\al\mu}h_{\mu\nu}g^{\beta\nu}p_\al p_\beta+O(\eps)$  converges to 0 asymptotically like $O(\frac{1}{r^2})$ as $r\to\infty$ and $\lim_{r\to\infty}r^2H_1$ is a constant;
\item The subspace $\{\theta=\pi/2,\ p_\theta=0\}$ is invariant under the perturbed Hamiltonian system;
\item $|L_z|$ is sufficiently large and $\partial_\nu J(\nu,\varphi_0)$, has a nondegenerate zero for $\varphi_0=0$,
\end{enumerate}
 then there exists an orbit of oscillatory type for $\eps>0$ sufficiently small, i.e. there exists a constant $C$ such that along the orbit we have
$$\liminf_{t\to\infty} r(t)<C,\quad \limsup_{t\to\infty} r(t)=\infty.$$
\end{Thm}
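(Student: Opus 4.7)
The plan is to reduce the theorem to the Schwarzschild argument of Theorem~\ref{ThmOsc} by treating the Kerr corrections as a lower-order perturbation of the post-Newtonian picture. First I would invoke assumption~(2) to restrict the flow to the invariant four-dimensional submanifold $\{\theta = \pi/2,\ p_\theta = 0\}$, on which $Q = 0$ automatically. Since the perturbation is stationary the energy $E$ is still conserved, and I fix $E = 1$, so the unperturbed radial equation admits the homoclinic-to-infinity orbit $\frac{dr}{dt} = \pm r^{-2}\sqrt{R(r)}$ written just above the theorem statement, with $r \in [r_*, \infty)$ and $r_* \sim L_z^2/(2M)$ in the large-$|L_z|$ limit. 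The resulting system has two degrees of freedom in $(r, p_r, \varphi, L_z)$ whose azimuthal component rotates at approximate rate $L_z$ along the loop, providing the periodic forcing needed near infinity.

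Next I would desingularize infinity by the McGehee substitution $r = v^{-2}$ combined with the Mino-time rescaling $d\lambda = dt/\Sigma$ used in Section~\ref{SSSOscHomo}, and verify that the asymptotics $-R(r) = -2Mr^3 + (L_z^2 - 2a^2)r^2 + O(r)$ stated before the theorem reduce the radial equation, to leading order in $v$, to the same degenerate form $\frac{dv}{d\tilde\lambda} = -\tfrac12 p_r,\ \frac{dp_r}{d\tilde\lambda} = -Mv + O(v^3)$ treated in the Schwarzschild case. Thus $(v, p_r) = (0,0)$ is a degenerate hyperbolic fixed point whose stable and unstable manifolds exist by McGehee's theorem~\cite{Mc}. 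Assumption~(1) gives $H_1 = O(r^{-2})$ with $\lim r^2 H_1$ constant; together with the $|t|^{-1/3}$ decay of the parabolic asymptotics, this ensures absolute convergence of the Melnikov potential $J(\nu, \varphi_0)$ obtained by adapting~\eqref{EqMelnikovOsc}. Assumption~(3) and the standard Poincar\'e--Melnikov argument of Appendix~\ref{AppPoincare} then upgrade the nondegenerate zero of $\partial_\nu J$ into a transverse intersection of $W^s$ and $W^u$ at infinity.

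Finally I would invoke the symbolic-dynamics construction of Moser~\cite{M} and Llibre--Sim\'o~\cite{LS}: a transverse homoclinic to a degenerate hyperbolic fixed point at infinity, together with fast rotation of $\varphi$ (the role of $|L_z|$ large), produces a horseshoe-like invariant set in a cross section near infinity whose itineraries parameterize sequences of \emph{far excursions versus returns}. Any itinerary with infinitely many of each type realizes the oscillatory condition $\liminf r(t) < C$, $\limsup r(t) = \infty$. The main obstacle I expect is checking that the genuinely Kerr contributions --- the frame-dragging term $-aE$ in the $\dot\varphi$-equation of~\eqref{EqMino}, the cross-term $g_{\tau\varphi}$ in the Hamiltonian~\eqref{EqHamK}, and the $a$-dependence of $\Sigma$ evaluated at $\theta = \pi/2$ --- enter only through higher-order corrections in $v = r^{-1/2}$, so that the normal form required by~\cite{Mc,M,LS} is preserved and McGehee's invariant-manifold theorem still applies unchanged. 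The assumption that $|L_z|$ be sufficiently large is precisely what pushes the apastron $r_* \sim L_z^2/(2M)$ far enough out to keep these Kerr corrections subdominant to the Newtonian/Schwarzschild skeleton, which is why I expect the reduction to go through without essential new analytic input.
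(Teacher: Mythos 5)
Your proposal follows essentially the same route as the paper: the paper proves Theorem \ref{ThmOscK} simply by "adapting Theorem \ref{ThmOsc}", whose proof is exactly the chain you describe — reduction to the equatorial plane with $E=1$, the McGehee change of variables exhibiting the degenerate hyperbolic fixed point at infinity with invariant manifolds from \cite{Mc}, the Melnikov/Poincar\'e argument of Appendix \ref{AppPoincare} for transversality under assumptions (1) and (3), and the symbolic dynamics of \cite{M,LS} for the oscillatory orbit, with large $|L_z|$ ensuring fast $\varphi$-rotation and a distant apastron. Your explicit check that the genuinely Kerr terms enter only at higher order in $v$ is a detail the paper leaves implicit, but it does not change the argument.
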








\section{Chaotic carrousel motions around the event horizon}\label{SCarrousel}
In this section, we show that there exists a remarkable chaotic behavior near the event horizon of the perturbed Kerr spacetime.  The analysis in this section applies also to the Reissner-Nordstr$\phi$m metric.

\subsection{The dynamics crossing the event horizon}
We consider only the special case of $Q=0$ (so the orbit lies on the equatorial plane). For simplicity, we start by considering $L_z=aE$. This case is a  representative of the way how the orbit crosses the event horizon.

From \eqref{EqofMotion}, we have
$$\label{Eqdrdphi}(\frac{dr}{d\lambda})^2-R(r)=0,
$$
where $-R(r)=-r^2((E^2-1) r^2+2Mr-a^2)$ and $r^2dt=d\lambda$ (see Figure \ref{FigCarrouselR}). We view the radial dynamics as a particle moving in the potential well $-R$ on the zero energy level. Note that $-R(r_\pm)<0$ for $r_\pm$ solving $\Delta(r)=0$.  Let $0<r_I<r_O$ be the two positive roots of $R$, then we have $r_I<r_-<r_+<r_O$. Then we see that the radial motion is first to decrease from $r_O$, crossing the outer event horizon $r_+$, then crossing the inner event horizon $r_-$ then reach $r_I$ without hitting the singularity $r=0$. Then the radial motion will increase and cross the inner then the outer event horizons until reaching $r_O$. If we consider only the dynamics of the radial component, this motion is periodic. However, in general people treat the returning piece of orbit as entering a different domain in the Penrose diagram hence unfolds the periodic orbit. Such a behavior also exists in the Reissner-Nordstr$\phi$m metric, differring drastically from the radial dynamics of Schwarzschild metric.

Note that in the above discussion, we did not include the $\varphi$ and $\tau$-components. We have $\frac{d\varphi}{dt}=aE/\Delta$ and $\frac{d\tau}{dt}=E(r^2+a^2)/\Delta$ both become singular at the event horizons. In particular, we see that the angular variable $\varphi$ is fast rotating when approaching the event horizon. But these singular behaviors are just coordinate singularities, and can be resolved by a coordinate change. The way we resolve the singularity is to perform a time change, that is, to treat $\varphi$ as the new time in place of the proper time $t$. 
\begin{figure}
\centering
\includegraphics[width=0.3\textwidth]{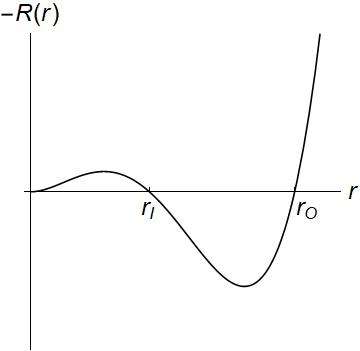}
\caption{Graph of $-R$}
\label{FigCarrouselR}
\end{figure}

From equation \eqref{EqofMotion}, we obtain
\begin{equation}\label{Eqdrdphi}(\frac{dr}{d\varphi})^2=\frac{\Delta^2 ((E^2-1) r^2+2Mr-a^2)}{a^2E^2r^2}:=-V(r).\end{equation}

 Again we visualize this system as a particle moving in a potential well of the potential $V$ on the zero energy level (see Figure \ref{FigCarrouselV}). Now the event horizon $r=r_\pm$ becomes a local max of $V$, thus the point $(r=r_\pm,\frac{dr}{d\varphi}=0)$ is a hyperbolic fixed point of the ODE in the phase plane $(r,\frac{dr}{d\varphi})$.
 \begin{figure}
\centering
\includegraphics[width=0.3\textwidth]{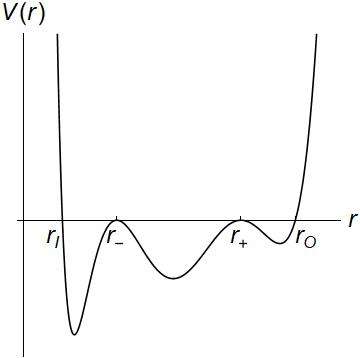}
\caption{Graph of $V$}
\label{FigCarrouselV}
\end{figure}

 This gives two homoclinic orbits approaching $r_\pm$ respectively
 \begin{equation}
 \begin{aligned}
\Gamma_1:=&\mathrm{Graph}\left\{\frac{dr}{d\varphi}=\pm\sqrt{V(r)},\ r\in(r_+,r_O]\right\},\\
\Gamma_2:=&\mathrm{Graph}\left\{\frac{dr}{d\varphi}=\pm\sqrt{V(r)},\ r\in[r_I,r_-)\right\},\\
 \end{aligned}
 \end{equation}
 and two heteroclinic orbits connecting $r_+$ and $r_-$
 \begin{equation}
 \begin{aligned}
\Gamma_3:=&\mathrm{Graph}\left\{\frac{dr}{d\varphi}=+\sqrt{V(r)},\ r\in(r_-,r_+)\right\},\\
\Gamma_4:=&\mathrm{Graph}\left\{\frac{dr}{d\varphi}=-\sqrt{V(r)},\ r\in(r_-,r_+)\right\}.\\
 \end{aligned}
 \end{equation}
These homo- or hetero- clinic orbits can be solved as function of $\varphi$ (c.f. Section 61 of \cite{C}).   We denote by $\Gamma_k(\cdot+\psi)=(r(\cdot+\psi),\frac{dr}{d\varphi}(\cdot+\psi))$ an orbit $(r,\frac{dr}{d\varphi})$ with initial condition $(r(\psi),\frac{dr}{d\varphi}(\psi))$.

\subsection{Chaotic dynamics around the event horizon under perturbation}
So the general idea is that a generic perturbation may create separatrix splitting as in the Theorem of Poincar\'e hence lead to chaotic dynamics. We next outline the procedure of resolving the singularity and generating the separatrix splitting. After that we formulate precise statement and detailed proofs.


Consider next a perturbation $\eps H_1=\eps \hat h^{\mu\nu}p_{x^\mu} p_{x^\nu}$ to the Kerr Hamiltonian that is a stationary and nonaxisymmetric perturbation. We fix a value $0<E<1$ and consider equatorial motion with $\theta=\pi/2,p_\theta=0$, then the system $H_0+\eps H_1$, where $H_0$ is the Kerr Hamiltonian, is a system of two degrees of freedom with coordinates $(r,p_r,\varphi, L_z).$ We next perform an energetic reduction, that is to solve $L_z$ from the equation $H_0+\eps H_1=-\frac12$ as a function of $r,p_r, \varphi$ and treat $\Lambda:=aE-L_z$ as the new Hamiltonian and $\varphi$ as the new time. Then the Hamiltonian system $\Lambda$ is a system of one and half degrees of freedom. However, the set of coordinates $(r,p_r)$ becomes singular near the event horizon, so we use the nonsingular coordinate
$\rho:=\frac{\Delta^2}{aE r^2} p_r=\frac{dr}{d\varphi}$ which vanishes at the horizon in place of $p_r$. Then the symplectic form becomes
$$\omega=dr\wedge dp_r=dr\wedge d(\frac{aEr^2}{\Delta^2}\rho)=\frac{aEr^2}{\Delta^2(r)}dr\wedge d\rho. $$
The equations of motion in the new coordinate system $(r,\rho)$ is nonsingular then, which agrees with \eqref{Eqdrdphi} in the case of $\eps=0$ (we denote by $\Lambda_0=\Lambda|_{\eps=0}$). The perturbation $H_1=H_1(r,p_r,L_z,\varphi)$ is a function of $r,p_r, L_z,\varphi$. In the new coordinates $r,\rho, L_z,\varphi$, we denote $\tilde H_1(r,\rho,L_z,\varphi)=H_1(r,\frac{aEr^2}{\Delta^2}\rho,L_z,\varphi)$.  We shall require that $\tilde H_1$ satisfy certain boundary conditions at $r=r_+$.  We next turn on the perturbation by letting $\eps>0$ but sufficiently small. Then the system \eqref{Eqdrdphi} undergoes a time($\varphi$)-periodic perturbation, so we are in the setting of Appendix \ref{SSPoincare}, so separatrix splitting will be created if the Melnikov function $$\eps\mathcal M(\psi):=\int_\R \omega(X_\Lambda,X_{\Lambda_0})(\Gamma_1(\varphi+\psi), \varphi)\, d\varphi$$
has a nondegenerate zero by Theorem \ref{ThmPoincare}, where we denote $X_\Lambda$ the Hamiltonian vector field associated to the Hamiltonian $\Lambda$ via $\omega( X_\Lambda,\cdot)=-d\Lambda$, explicitly, $X_\Lambda=\frac{\Delta^2}{aEr^2}(\partial_\rho\Lambda, -\partial_
r\Lambda)$, similarly for $X_{\Lambda_0}$.



We formulate the following theorem summarizing the outcome of the above procedure.

\begin{Thm} \label{ThmHorizon}Let $\eps H_1=\eps \hat h^{\mu\nu}p_{x^\mu} p_{x^\nu}$ be a stationary and nonaxisymmetric perturbation of the Kerr Hamiltonian with $L_z=aE$ and satisfies the following
\begin{enumerate}
\item The perturbed system preserves the equator plane $\{\theta=\pi/2,p_\theta=0\}$;
\item As $r\to r_+$ and $\rho\to 0$, we have that $\tilde H_1 \Delta^3 , D_{r,\rho}(\tilde H_1 \Delta^3)$, $(D_{r,\rho})^2(\tilde H_1 \Delta^3)$ and $\partial_{L_z}\tilde H_1 \Delta,(D_{r,\rho})(\partial_{L_z}\tilde H_1 \Delta)$ are all bounded; 
\item The Melnikov function $\mathcal M(\psi)$ has a nondegenerate zero.
\end{enumerate}
Then for $\eps$ sufficiently small, there exists a Smale horseshoe in neighborhoods of $r_+$ and $r_O$. In particular, there exists an orbit which visits the two neighborhoods  repeatedly as $t\to\pm\infty$.
\end{Thm}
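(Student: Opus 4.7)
The plan is to convert the problem to a standard Poincar\'e--Melnikov setup where Smale's horseshoe theorem applies, with the main technical work being the desingularization at the event horizon. First, I fix $E \in (0,1)$ with $L_z = aE$ and restrict to the invariant equatorial plane $\{\theta = \pi/2,\ p_\theta = 0\}$ (justified by assumption (1)); this reduces to a two-degree-of-freedom system in $(r,p_r,\varphi,p_\varphi)$. Because the perturbed Hamiltonian is still stationary in $\tau$, $E$ is conserved, so I can fix the energy level $H_\eps = -\tfrac12$ and perform the energetic reduction used throughout the paper: solve $L_z = L_z(r,p_r,\varphi;\eps)$ and treat $\Lambda(r,p_r,\varphi;\eps) := aE - L_z$ as a new Hamiltonian with $\varphi$ as time. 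The result is a one-and-a-half degrees of freedom system, periodic in $\varphi$ through $\eps \tilde H_1$.

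Second, I carry out the coordinate change $(r,p_r) \mapsto (r,\rho)$ with $\rho = \Delta^2 p_r / (aEr^2)$, under which the symplectic form becomes $\omega = (aEr^2/\Delta^2)\,dr\wedge d\rho$. In these coordinates the unperturbed equations coincide with \eqref{Eqdrdphi}, so $\Lambda_0$ is regular at $r = r_+$ and $(r,\rho) = (r_+,0)$ is a hyperbolic fixed point of the $\varphi$-flow, with the branch $\Gamma_1$ of its stable/unstable manifolds coinciding along a homoclinic loop around $r_O$. This is the analogue of the pendulum picture reviewed in Section \ref{SSSPendulum}. The boundary-condition assumption (2) is precisely what is needed to guarantee that $\eps\tilde H_1$ (after division by the appropriate power of $\Delta$ coming from the reduction and coordinate change) extends to a $C^2$ function in a neighborhood of $(r_+,0)$, so the perturbed vector field $X_\Lambda$ is a smooth $\eps$-perturbation of $X_{\Lambda_0}$ on a fixed neighborhood of the closure $\Gamma_1 \cup \{(r_+,0)\}$. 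Granting this regularity, the hyperbolic fixed point persists to $(r_+(\varphi;\eps),0) = (r_+,0) + O(\eps)$ as a periodic orbit of the extended phase space $\R^2 \times \T$, and its local stable and unstable manifolds $W^{s,u}_\eps$ persist as well by standard invariant-manifold theory.

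Third, I apply the Poincar\'e--Melnikov machinery of Appendix \ref{AppPoincare}: the first-order splitting distance between $W^u_\eps$ and $W^s_\eps$ along $\Gamma_1$ is measured by
\begin{equation*}
\eps \mathcal M(\psi) = \eps \int_\R \omega\bigl(X_\Lambda, X_{\Lambda_0}\bigr)\bigl(\Gamma_1(\varphi+\psi),\varphi\bigr)\,d\varphi + O(\eps^2).
\end{equation*}
By assumption (3), $\mathcal M$ has a nondegenerate zero $\psi^*$, so $W^u_\eps \pitchfork W^s_\eps$ at a transverse homoclinic point of the time-$2\pi$ map $P_\eps$ in $\varphi$. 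The Smale--Birkhoff homoclinic theorem (see \cite{KH} and Appendix \ref{AppPoincare}) then yields an iterate $P_\eps^N$ possessing a compact invariant hyperbolic set $\Lambda_N$ on which $P_\eps^N|_{\Lambda_N}$ is topologically conjugate to a full shift on finitely many symbols. Translating the symbolic trajectories back to the original proper-time parametrization, each symbol corresponds to one loop of $\Gamma_1$, i.e.\ to one excursion from a neighborhood of $r_+$ out to a neighborhood of $r_O$ and back. Any non-eventually-constant sequence therefore gives an orbit visiting both neighborhoods infinitely often in both time directions, yielding the final conclusion.

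The main obstacle is the regularity step at the event horizon: in $(r,p_r,L_z)$ coordinates the perturbation is generally not smooth across $r = r_+$, and the energetic reduction involves dividing by factors that degenerate at $\Delta = 0$. The role of assumption (2) is to control exactly these factors, and the proof requires checking that, after substituting $p_r = aEr^2\rho/\Delta^2$ and expanding $L_z$ from $H_\eps = -\tfrac12$, every $\Delta^{-k}$ that appears is absorbed by a compensating power of $\Delta$ from the decay of $\tilde H_1$ and its derivatives, so that $\Lambda - \Lambda_0$ and its first two $(r,\rho)$-derivatives remain bounded as $r\to r_+$, $\rho\to 0$. Once this is established the rest of the argument is a direct application of the Melnikov--Smale--Birkhoff package.
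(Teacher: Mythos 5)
Your proposal is correct and follows essentially the same route as the paper: energetic reduction with $\varphi$ as time, the desingularizing change $\rho=\Delta^2 p_r/(aEr^2)$, assumption (2) to keep the perturbed vector field an $O_{C^1}(\eps)$ perturbation near the horizon and to ensure convergence of the Melnikov integral, and then the Poincar\'e--Melnikov--Smale package. The only substantive step you defer rather than carry out is the explicit bookkeeping of powers of $\Delta$ (which the paper does via the auxiliary Hamiltonian $\mathcal H=\mathcal H_0+\eps\mathcal H_1$ with $\mathcal H_1$ proportional to $\tilde H_1 r^{-2}\Delta^3$), but you correctly identify this as the role of assumption (2).
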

\begin{proof}


 We write the Hamiltonian $H$  as follows  (setting $aE-L_z=\Lambda$)
 $$ H=\frac12 r^{-2}(\Delta p_r^2-\dfrac{1}{\Delta}(r^2E+a\Lambda)^2+\Lambda^2)+\eps H_1$$
 where the unperturbed part follows from  \eqref{EqHamK}.
We also introduce an auxiliary function
$$\mathcal H=\mathcal H_0+\eps \mathcal H_1:=(\rho^2+\tilde V(\Lambda, r))+ \frac{2\eps}{a^2E^2}\tilde H_1 r^{-2}\Delta^3,$$
where $\tilde V(\Lambda, r)=-\frac{\Delta^2}{a^2E^2r^4}((r^2E+a\Lambda)^2-\Delta (\Lambda^2+r^2))$, which reduces to $V(r)$ if we set $\Lambda=0$.
We thus have $H=-\frac12$ if and only if $\mathcal H=0.$ We also have $$\frac{\partial \mathcal H_0}{\partial \Lambda}= \frac{\Delta^2}{a^2E^2r^4}2(aEr^2+a^2\Lambda-\Delta \Lambda)$$
which is reduced to $\frac{2\Delta^2}{aEr^2}$ when $\Lambda=0$.
In terms of the variables $(r,\rho)$, we get the Hamiltonian equations
$\begin{cases}\frac{dr}{dt}&= \frac{\Delta^2}{aEr^2}\partial_\rho H\\
\frac{d\rho}{dt}&= -\frac{\Delta^2}{aEr^2}\partial_r H
\end{cases}$.

Dividing the equations by $\frac{d\varphi}{dt}=\frac{\partial H}{\partial L_z}=-\frac{\partial H}{\partial \Lambda}$, we get
\begin{equation}\label{EqHamHorizon}
\begin{aligned}
\frac{d r}{d\varphi}&= -\frac{\Delta^2}{aEr^2}\frac{\partial H}{\partial \rho}\big/ \frac{\partial H}{\partial \Lambda}=\frac{\Delta^2}{aEr^2} \frac{\partial \Lambda}{\partial \rho}=-\frac{\Delta^2}{aEr^2} \frac{\partial \mathcal H}{\partial \rho}\big/ \frac{\partial \mathcal H}{\partial \Lambda}\\
&=-aEr^2\frac{ (\rho+\frac{\eps}{a^2E^2}\partial_\rho \tilde H_1 r^{-2}\Delta^3)}{-aEr^2-a^2\Lambda+\Delta \Lambda-\eps r^{2}\Delta\frac{\partial H_1}{\partial L_z}},\\
\frac{d\rho}{d\varphi}&=\frac{\Delta^2}{aEr^2} \frac{\partial H}{\partial r}\big/ \frac{\partial H}{\partial \Lambda}=-\frac{\Delta^2}{aEr^2} \frac{\partial \Lambda}{\partial r}=\frac{\Delta^2}{aEr^2} \frac{\partial \mathcal H}{\partial r}\big/ \frac{\partial \mathcal H}{\partial \Lambda}\\
&=\frac{aEr^2}{2}\frac{ \partial_r\tilde V(\Lambda,r)-\frac{2\eps}{a^2E^2}\partial_r(\tilde H_1 r^{-2}\Delta^3)}{-aEr^2-a^2\Lambda+\Delta \Lambda-\eps r^{2}\Delta\frac{\partial  H_1}{\partial L_z}}.
\end{aligned}
\end{equation}
When $\eps=0$ and $\Lambda=0$, we see that $(r=r_+,\rho=0)$ is a hyperbolic fixed point. Note also that when $\eps=0$, the equations of motion is reduced to the Hamiltonian equations with Hamiltonian $\mathcal H$ endowed with the symplectic form $dr\wedge d\rho$. The assumption (2) in the statement guarantees that the ODE is perturbed by an $O_{C^1}(\eps)$ perturbation.

The perturbed system has one and a half degrees of freedom. We think it as a $\varphi$-periodic perturbation of the homoclinic orbit. Then the stable and unstable manifolds of the hyperbolic fixed point are still defined for the time-1 map (here the time is $\varphi$). The stable and unstable manifolds generically splits under the perturbation, which is measured by the Melnikov function by repeating the argument in Appendix \ref{AppPoincare}. 

We thus evaluate the Melnikov function (setting $\Lambda=0$ in the integrand)
\begin{equation}
\begin{aligned}
\eps \mathcal M(\psi)&=\int_\R \omega(X_\Lambda,X_{\Lambda_0})(\Gamma_1(\varphi+\psi),\varphi)\,d\varphi\\
&=\int_\R \frac{\Delta^2}{aEr^2}\left(\frac{\partial \Lambda}{\partial r}\frac{\partial \Lambda_0}{\partial \rho}-\frac{\partial \Lambda_0}{\partial r}\frac{\partial \Lambda}{\partial \rho}\right)\Big|_{(\Gamma_1(\varphi+\psi),\varphi)}\,d\varphi\\
&=-\int_\R \frac{\Delta^2/(aEr^2)}{\frac{\partial \mathcal H}{\partial \Lambda}\frac{\partial \mathcal H_0}{\partial \Lambda}}\left(\frac{\partial\mathcal  H}{\partial r}\frac{\partial  \mathcal H_0}{\partial \rho}-\frac{\partial  \mathcal H_0}{\partial r}\frac{\partial  \mathcal  H}{\partial \rho}\right)\Big|_{(\Gamma_1(\varphi+\psi),\varphi)}\,d\varphi\\
&=-\eps\int_\R \frac{\Delta^2/(aEr^2)}{\left(\frac{\partial \mathcal H_0}{\partial \Lambda}\right)^2}\left(\frac{\partial \mathcal H_1}{\partial r}\frac{\partial \mathcal  H_0}{\partial \rho}-\frac{\partial  \mathcal  H_0}{\partial r}\frac{\partial  \mathcal  H_1}{\partial \rho}\right)\Big|_{(\Gamma_1(\varphi+\psi),\varphi)}\,d\varphi+O(\eps^2)\\
&=-\eps\int_\R \frac{4aEr^2}{\Delta^2}\left(\frac{\partial \mathcal H_1}{\partial r}\frac{\partial \mathcal  H_0}{\partial \rho}-\frac{\partial  \mathcal  H_0}{\partial r}\frac{\partial  \mathcal  H_1}{\partial \rho}\right)\Big|_{(\Gamma_1(\varphi+\psi),\varphi)}\,d\varphi+O(\eps^2)\\
&=4\eps\int_\R \omega(D\mathcal H_0, D\mathcal H_1))\Big|_{(\Gamma_1(\varphi+\psi),\varphi)}\,d\varphi+O(\eps^2).
\end{aligned}
\end{equation}
To make sure the convergence of the last integral, we expand
 \begin{equation*}
 \begin{aligned}
 \omega(D\mathcal H_0, D\mathcal H_1)&
 =\frac{\partial\mathcal H_0}{\partial r}\frac{\partial\mathcal H_1}{\partial p_r}-\frac{\partial\mathcal H_0}{\partial p_r}\frac{\partial\mathcal H_1}{\partial r}\\
 &=\frac{2}{a^2E^2}\left(\left(2\rho\frac{\partial \rho}{\partial r}+ V'(r)\right) \frac{\partial H_1}{\partial p_r} r^{-2}\Delta^3-(2\rho \frac{\partial \rho}{\partial p_r})\partial_r(H_1 r^{-2}\Delta^3)\right)\\
 &=\frac{2}{a^2E^2}\left(\left(2\rho(\frac{\Delta^2}{aE r^2})'p_r+ V'(r)\right) \frac{\partial H_1}{\partial p_r} r^{-2}\Delta^3-(2\rho \frac{\Delta^2}{aE r^2})\partial_r(H_1 r^{-2}\Delta^3)\right)
 \end{aligned}
 \end{equation*}
As $\varphi\to\pm\infty$, we have $\rho\to 0$ and $r\to r_+$ both exponentially, so do $\Delta(r)\to 0$ and $V'(r)\to 0$. So we have $(\frac{\Delta^2}{aE r^2})'p_r$ bounded and $\left(2\rho(\frac{\Delta^2}{aE r^2})'p_r+ V'(r)\right)\to 0$ exponentially.  To make sure the above integral converges, it suffices to require that $\frac{\partial H_1}{\partial p_r} \Delta^3$, $H_1\Delta^4$ and $(\partial_rH_1)\Delta^5$ are all bounded as $r\to r_+$ and $p_r=\Delta^{-1}\frac{dr}{d\lambda}\sim \Delta^{-1} $. These bounds are implied by assumption (2).

As in the proof of Poincar\'e's theorem, that the Melnikov function having nondegenerate zero implies that the transversal intersection of the stable and unstable manifold, which in turn implies the existence of a horseshoe (c.f. Appendix \ref{SSPoincare} and \cite{KH} Chapter 6.5).
The statement then follows.
\end{proof}


\begin{Rk}
The theorem can also be proved in the massless case. Indeed, we consider $-R$ with parameters $Q=0$ and $L_z/E$ such that $-R$ has a nondegenerate zero outside the event horizon (perturb Figure 6A of \cite{X} slightly). Then  we get similar phase portrait for the equation $\frac{dr}{d\varphi}$. Then we have a statement analogous to Theorem \ref{ThmHorizon} in this setting.
\end{Rk}

\subsection{Outlooks}
\subsubsection{Perturbation of Kerr and assumption (2)}
We next remark  on the technical assumption (2). These are supposed to be quite easy to satisfy. For instance, for the the first three items on the bounds of the $C^2$ norm $\tilde H_1\Delta^3$, a necessary condition is that the $g^{rr}p_rp_r$ term in the Hamiltonian satisfies $g^{rr}/\Delta$ is $C^2$ bounded, which is indeed the case for the unperturbed Kerr Hamiltonian. Similarly, we get bounds on other metric coefficients. The assumptions can be checked directly if a perturbation of Kerr is given. Due to the lack of a natural perturbation and the complexity of the problem, here we only outline how to verify it in general and show its plausibility.

Let us briefly recall the perturbation theory for Kerr. To look for a solution solving the linearized Einstein equation at Kerr, the method of \cite{Te} is to solve first two linear equations called Teukolsky equations for a radial function $\hat R(r)$ and an angular equation $\hat S(\theta)$ (c.f. equation (4.9) and  (4.10) of \cite{Te}). Then form the function $\psi=e^{-\sqrt{-1}\omega \tau}e^{\sqrt{-1} m\varphi}\hat S(\theta)\hat R(r)$, from which, one can recover the metric coefficients(c.f. \cite{C} Section 82).  There are two types of boundary conditions: $\hat R\sim \Delta^{2}e^{-\sqrt{-1} k r_*} $ or $\hat R\sim \Delta^{-2}e^{-\sqrt{-1} k r_*} $ as $r\to r_+$, where $k=\omega-\frac{ma}{2Mr_+}$ and $\frac{dr_*}{dr}=\frac{r^2+a^2}{\Delta}$ (c.f. equation (5.6) of \cite{Te}). To find perturbations satisfying assumption (2), we may look for it from solutions to the Teukolsky equations satisfying  the first type of boundary condition for $R$. 

\subsubsection{Chaotic dynamics across the horizon}
The general principle applies also to the orbits $\Gamma_i,\ i=2,3,4$. Thus we expect that there exist chaotic orbits visiting the event horizons $r_\pm$ repeatedly in the region $r_-<r<r_+$, and chaotic orbits oscillating between $r_-$ and $r_I$. There may even be chaotic orbits visiting all the four radii $r_I,r_\pm,r_O$ repeatedly. Similar statements to Theorem \ref{ThmHorizon} can be formulated, which we skip.
In addition to the Melnikov condition, we also need to verify the boundary condition at the inner event horizon.
\subsubsection{Arnold diffusion near the horizon}
If we introduce the dependence on $\tau$ or also consider nonequatorial orbits in the perturbed Kerr spacetime, the resulting Hamiltonian system will have high enough dimensions to admit Arnold diffusion utilizing the hyperbolic fixed point and its associated homoclinic orbit at the event horizon in the $(r,\rho)$ coordinates. The analysis is completely analogous to Arnold diffusion utilizing the photon shell. We skip the statement since it is similar to Theorem \ref{ThmGLS}.  The diffusion orbit will perform zoom-whirl motion around the event horizon and will undergo a big oscillation in the energy $E$ or angular momentum $L_z$.

\appendix
\section{Poincar\'e's theorem, Smale Horseshoe and the Scattering Map}\label{AppPoincare}

In this section, we give an introduction to Poincar\'e's theorem. Historically, Poincar\'e discovered it when studying the nonintegrability of the Newtonian three-body problem. Later Smale discovered his horseshoe from Poincar\'e's picture and introduced the symbolic dynamics. We present Smale's result in Section \ref{SHorseShoe}. After that we generalize Poincar\'e's proof to the setting of normally hyperbolic invariant manifolds to give the proof of Theorem \ref{ThmScatteringS1} in Section \ref{SSScattering}.
\subsection{Poincar\'e's theorem}\label{SSPoincare}



We consider a Hamiltonian $H_0:\ \R^2\to \R$ satisfying the following assumptions:
\begin{enumerate}
\item there exists a critical point $O$ of $H_0$ at which the eigenvalues of the linearized Hamiltonian equations are real and nonzero. This means that the eigenvalues have opposite sign and same modulus since the flow preserves volume.
\item There is a homoclinic solution $\gamma:\ \R\to \R^2$ such that $\gamma(t)\to O,\ t\to\pm\infty$. Points in the graph of $\gamma$ is simultaneously on the stable and unstable manifolds of $O$. 
\end{enumerate}


We next add a small time dependent perturbation to $H_0$ to get the Hamiltonian $$H_\eps:\ \R^2\times \T^1\to \R,\ H_\eps(q,p)=H_0(q,p)+\eps H_1(q,p,t).$$ We denote by $\phi^t$ the flow generated by the Hamiltonians $H_0$. Since the Hamiltonian $H_0$ is autonomous, the flow $\phi^t,\ t\in \R,$ is a one-parameter diffeomorphism group, i.e. $\phi^{t+s}=\phi^t\phi^s,\ s,t\in \R$. However, the Hamiltonian $H_\eps$ is nonautonomous and the time dependence is 1-periodic. Instead, denoting by $\phi_\eps^1$ the time-1 map, we have $\{\phi_{\eps}^n,\ n\in \Z\}$ is a group, i.e. $\phi^{m+n}_\eps=\phi^{m}_\eps\phi^{n}_\eps,\ m,n\in \Z$.   

In general, for $\eps>0$ the fixed point $O$ is perturbed to $O_\eps$ that is $O(\eps)$-close to $O$. Moreover the stable and unstable manifolds $W^s_\eps(O_\eps)$ and $W^u_\eps(O_\eps)$ (c.f. equation \eqref{EqWSU}) do not coincide and complicated dynamics is created. 

\begin{Thm}[Poincar\'e]\label{ThmPoincare} Suppose Melnikov function
$$\mathcal M(\nu):=\int_\R\{H_0,H_1\}(q(t+\nu),p(t+\nu),t)\,dt$$
has a nondegenerate zero.
Then for $\eps>0$ sufficiently small, the perturbed stable and unstable manifolds $W_\eps^u(O_\eps)$ and $W_\eps^s(O_\eps)$  intersect transversally.
\end{Thm}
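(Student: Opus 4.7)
The plan is the classical Melnikov argument: measure the signed gap between $W^u_\eps(O_\eps)$ and $W^s_\eps(O_\eps)$ along transversals to the unperturbed homoclinic orbit $\gamma(t)=(q(t),p(t))$, show that this gap equals $-\eps\mathcal M(\nu)+O(\eps^2)$, and conclude via the implicit function theorem.

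First I would apply the $C^r$ stable/unstable manifold theorem to the hyperbolic fixed point $O_\eps$ of the time-$1$ map $\phi_\eps^1$ to realize $W^{u,s}_\eps(O_\eps)$ as $C^r$ graphs depending smoothly on $\eps$ over any compact piece of the unperturbed separatrix. Choose a smooth family of sections $\Sigma_\nu$ transverse to $\gamma$ at $\gamma(\nu)$, parametrized by $\nu$ in a bounded interval; for $\eps$ small each of $W^{u,s}_\eps$ meets $\Sigma_\nu$ in a unique point $z^{u,s}_\eps(\nu)$, smooth in $(\nu,\eps)$ and equal to $\gamma(\nu)$ at $\eps=0$. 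Since $dH_0(\gamma(\nu))\ne 0$ and is transverse to $\Sigma_\nu$ along $\gamma$, the scalar
\begin{equation*}
D(\nu,\eps):=H_0(z^u_\eps(\nu))-H_0(z^s_\eps(\nu))
\end{equation*}
is a valid normal-direction separation function: a zero with $\partial_\nu D\ne 0$ certifies a transversal intersection of $W^u_\eps$ and $W^s_\eps$ inside $\Sigma_\nu$.

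Next I would compute $D$ to first order in $\eps$. Letting $z^u_\eps(\nu,t)$ be the trajectory of $H_\eps$ starting from $z^u_\eps(\nu)$ at time $0$, the Hamiltonian identity $\tfrac{d}{dt}H_0=\{H_0,H_\eps\}=\eps\{H_0,H_1\}$ combined with $z^u_\eps(\nu,t)\to O_\eps$ as $t\to-\infty$ and $H_0(O_\eps)=H_0(O)+O(\eps^2)$ (since $O$ is a critical point of $H_0$) yields
\begin{equation*}
H_0(z^u_\eps(\nu))-H_0(O)=\eps\int_{-\infty}^0\{H_0,H_1\}(z^u_\eps(\nu,t),t)\,dt+O(\eps^2),
\end{equation*}
and the symmetric formula for $z^s_\eps$ on $[0,\infty)$. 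Replacing $z^{u,s}_\eps(\nu,t)$ by the unperturbed homoclinic $\gamma(\nu+t)$ in the integrand costs only $O(\eps^2)$, provided one has uniform-in-$t$ control $z^{u,s}_\eps(\nu,t)-\gamma(\nu+t)=O(\eps)$; subtracting produces $D(\nu,\eps)=-\eps\mathcal M(\nu)+O(\eps^2)$.

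Finally, by hypothesis there exists $\nu_0$ with $\mathcal M(\nu_0)=0$ and $\mathcal M'(\nu_0)\ne 0$, so the renormalized separation $D(\nu,\eps)/\eps$ extends $C^1$ to $\eps=0$ with value $-\mathcal M(\nu)$; the implicit function theorem produces a smooth branch $\nu_\eps\to\nu_0$ of zeros with $\partial_\nu D(\nu_\eps,\eps)=-\eps\mathcal M'(\nu_0)+O(\eps^2)\ne 0$, which provides the transversal homoclinic intersection. The main technical obstacle is justifying the $O(\eps^2)$ error in the Melnikov identity: because the integration is over all of $\R$, one must combine the exponential contraction along $W^{u,s}_\eps$ near $O_\eps$ with the exponential decay of $\{H_0,H_1\}$ along $\gamma$ to control the tails of the integral uniformly in $\eps$, while the comparison between true and unperturbed trajectories on compact time intervals is a standard Gronwall estimate.
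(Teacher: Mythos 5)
Your proposal is correct and follows essentially the same route as the paper: both measure the splitting by the energy difference $H_0(z^u_\eps)-H_0(z^s_\eps)$ on a transversal to the unperturbed separatrix, compute it by integrating $\tfrac{d}{dt}H_0=\eps\{H_0,H_1\}$ along the invariant manifolds, replace the perturbed trajectory by $\gamma(\nu+t)$ with Gronwall plus exponential-tail estimates, and conclude from the nondegenerate zero of $\mathcal M$. The only cosmetic difference is that the paper truncates at $T=c\log\eps^{-1}$ and settles for an $o(\eps)$ error rather than your claimed $O(\eps^2)$, which changes nothing in the conclusion.
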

\begin{proof}
Denote by $\Lambda=\{(q_0(\nu),p_0(\nu),s),\ \nu\in \R, s\in \T\}\subset \R^2\times \T$ the two dimensional unperturbed stable and unstable manifolds in the extended phase space. We pick a line $\ell_{\tau,s}$ perpendicular to $\Lambda$ at the point $(q_0(\nu),p_0(\nu),s)$ and denote by $z^{u,s}_\eps$ the intersection point of $W^{u/s}_\eps(O_\eps)$ with $\ell_{\tau,s}$. We have $\phi_\eps^t(z^u_\eps)\to O_\eps$ as $t\to-\infty$ and $\phi_\eps^t(z^s_\eps)\to O_\eps$ as $t\to+\infty$. To see that $W^{u/s}_\eps(O_\eps)$ do not coincide, we shall compare the difference $H_0(z^u_\eps)-H_0(z^s_\eps)$. Since $DH_0$ is nondegenerate near $\gamma(\nu)$, and $\ell_{\tau,s}$ is in the direction of $DH_0(\gamma(\nu))$, we see that when the difference $H_0(z^u_\eps)-H_0(z^s_\eps)$ is nonvanishing, then we have the stable and unstable manifolds $W^{u/s}_\eps(O_\eps)$ split.

We first  have
$$H_0(z^u_\eps)-H_0(\phi_\eps^{-T}z^u_\eps)=\eps \int_{-T}^0\{H_0,H_1\}(\phi_\eps^tz^u_\eps,s+t)\,dt.$$

Next, by the invariant manifold theory, the unstable manifold $W^u_\eps(O_\eps)$ of the perturbed system remains within $O(\eps)$ distance of the unperturbed one $W^u(O)$. Then we have that for each $T$, there is $T'$ such that $|\phi^{-T}_\eps z^u-(q_0(\nu-T'),q_0(\nu-T'))|=O(\eps)$. Moreover, when $T\to\infty$, we can also choose $T'\to\infty$. We next choose $T=c\log\eps^{-1}>0$ with small $c$. Then by Gronwall estimate, we that that $|\phi^{t}_\eps z^u-(q_0(\nu-t),q_0(\nu-t))|=O(\eps^{1-d_1})$ for some $0<d_1<1$ for all $t\in [-T,0]$.
Then  we have $|(q_0(\nu-T),q_0(\nu-T))-O|< O(\eps^{d_2})$ and $|(q_0(\nu-T'),q_0(\nu-T'))-O|< o_\eps(1)$ due to the exponential convergence.

Over the time interval $[-T,0]$, we replace the integral $ \int_{-T}^0\{H_0,H_1\}(\phi_\eps^tz^u_\eps,s+t)\,dt$ by $ \int_{-T}^0\{H_0,H_1\}(\gamma(\nu+t), s+t)\,dt$ creating an error $O(\eps \eps^{1-d_1}\log\eps)=o(\eps)$. Moreover, since $\gamma(t)\to O$ exponentially for both $t\to\pm\infty$, we get $\int_{-\infty}^{-T}\{H_0,H_1\}(\gamma(\nu+t), s+t)\,dt=O(\eps^{d_2})$.
We also replace $H_0(\phi_\eps^{-T}z^u_\eps)$ by $H_0(\gamma(-T'+\nu))$ creating an error estimated as $DH_0(\gamma(-T'+\nu))|\phi_\eps^{-T}z^u_\eps-\gamma(-T'+\nu)|=O(\eps^{1+d_2})$. So we get
$$H_0(z^u_\eps)-H_0(\gamma(-T'+\nu))=\eps \int_{-\infty}^0\{H_0,H_1\}(\gamma(\nu+t), s+t)\,dt+o(\eps). $$
Similarly, we get
$$H_0(\gamma(T''+\nu))-H_0(z^s_\eps)=\eps \int^{\infty}_0\{H_0,H_1\}(\gamma(\nu+t), s+t)\,dt+o(\eps). $$
Adding the two equations using the fact that $H_0$ is a constant along the orbit $\gamma$, we get
$$H_0(z^u_\eps)-H_0(z^s_\eps)=\eps \int_\R\{H_0,H_1\}(\gamma(\nu+t), s+t)\,dt+o(\eps). $$
In the above integral, we can fix $s=0$ section, then the resulting integral is the derivative of
$$\eps J(\nu)=\eps \int_{-\infty}^\infty H_1(\gamma(t+\nu),t)-H_1(O,t)\,dt.$$
Since we have $\gamma(t)\to O$ exponentially as $t\to\pm\infty$, we get that the integrals $J(\nu)$ as well as $J'(\nu)$ converge absolutely. The theorem is then proved by setting $ \mathcal M(\nu)=J'(\nu)$.
\end{proof}
\subsection{Smale Horseshoe}\label{SHorseShoe}

It was discovered by Smale that a separatrix splitting gives rise to a horseshoe. We refer readers to Chapter 5.8 of \cite{BS} and Chapter 6.5 of \cite{KH} for the definition of a horseshoe and the following theorem.
\begin{Thm}Let $p$ be a hyperbolic fixed point of a diffeomorphism $f: \ U(\subset \R^n)\to \R^n$ and $q$ be a transverse homoclinic point of $p$. Then for any $\eps>0$, there exists a horseshoe in the union of the $\eps$-neighborhoods of $p$ and $q$.
\end{Thm}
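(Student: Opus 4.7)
The plan is to execute the classical Smale--Birkhoff construction. First, I would work in local coordinates near the hyperbolic fixed point $p$ furnished by the stable/unstable manifold theorem, so that $W^s_{\mathrm{loc}}(p)$ and $W^u_{\mathrm{loc}}(p)$ become pieces of coordinate subspaces and $f$ behaves like its linear part (contracting with rate $\lambda<1$ in the stable direction, expanding with rate $\mu>1$ in the unstable direction) up to controlled nonlinear corrections. Inside the $\eps$-neighborhood of $p$ I choose a product box $B_p$ aligned with the invariant splitting; inside the $\eps$-neighborhood of $q$ I choose a box $B_q$ containing $q$ together with a small piece of $W^u(p)$ through $q$ meeting $W^s(p)$ transversely at $q$.

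Next, since $q$ is homoclinic, there exist positive integers $N,M$ with $f^{N}(q)\in B_p$ (as $q\in W^s(p)$) and $f^{-M}(q)\in B_p$ (as $q\in W^u(p)$). The essential tool is the inclination ($\lambda$-)lemma: any $C^1$ disk $D$ of the unstable dimension that meets $W^s_{\mathrm{loc}}(p)$ transversely has forward iterates $f^{k}(D)$ converging in the $C^1$ topology, on compact subsets, to $W^u_{\mathrm{loc}}(p)$. Applying this lemma to a small unstable slice through $q$ contained in $B_q$, its iterates under $f^{N+k}$ stretch across $B_p$ horizontally (in the unstable direction). Pushing forward by $f^{M}$ then yields a long, thin strip that follows $W^u(p)$ and returns across $B_q$ transversally to its stable direction. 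Thus the composition $F:=f^{N+M+k}$, for all sufficiently large $k$, maps a thin horizontal sub-rectangle $R_0\subset B_p$ across $B_p$ so that $F(R_0)\cap R_0$ contains at least two disjoint vertical sub-rectangles; by choosing $k$ large the entire orbit segment is kept inside $B_p\cup B_q$.

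Finally, I would verify the Conley--Moser conditions for $F$ on $R_0$: the two vertical strips pull back under $F$ to two disjoint horizontal strips, and suitable expanding/contracting cone fields are preserved by $F$ and $F^{-1}$, the cones being inherited from the hyperbolic splitting at $p$ and propagated across the finite homoclinic transit using the transversality of the intersection at $q$. These conditions yield a maximal invariant hyperbolic set $\Lambda=\bigcap_{n\in\Z}F^{n}(R_0)$ on which $F$ is topologically conjugate to the full shift on two symbols, producing the Smale horseshoe in $B_p\cup B_q$. The main obstacle is the stretching/folding bookkeeping: one must choose $k$ large enough that the linear expansion near $p$ dominates the nonlinear distortion both inside $B_p$ and along the homoclinic transit piece $f^{N+M}$, while keeping the resulting strip inside the prescribed $\eps$-neighborhoods; transversality of the homoclinic intersection at $q$ is precisely what allows the unstable cone field to be transported across this transit without collapse, and without it the Conley--Moser strip condition would fail.
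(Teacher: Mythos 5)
The paper offers no proof of this theorem at all---it only points to \cite{BS} (Ch.~5.8) and \cite{KH} (Ch.~6.5)---and your argument is exactly the standard Smale--Birkhoff construction contained in those references: adapted coordinates near $p$, the inclination ($\lambda$-)lemma to stretch an unstable disk through $q$ across the box at $p$, and the Conley--Moser cone/strip conditions for a high iterate $F=f^{N+M+k}$. Your outline is correct; the one point worth tightening is where the \emph{second} vertical strip of $F(R_0)\cap R_0$ comes from (namely the portion of $R_0$ that shadows $W^u_{\mathrm{loc}}(p)$ and stays near $p$, rather than making the homoclinic excursion), which is implicit in, but not spelled out by, your application of the inclination lemma.
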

The existence of a horseshoe implies the existence of invariant set $\Lambda$ such that $f \Lambda=\Lambda$ and $f$ is conjugate to a two-sided shift in the space of bi-infinite sequences in the alphabet $\{0,1,\ldots,n-1\}$ for some $n\geq 2$.

\subsection{The scattering map, proof of Theorem \ref{ThmScatteringS1}}\label{SSScattering}
 \begin{proof}[Proof of Theorem \ref{ThmScatteringS1}]
 The proof is a generalization of the above proof of Poincar\'e's theorem.

 Let $H_0=H$ be the Schwarzchild Hamiltonian in \eqref{EqHamS} and denote $H_\eps=H_0+\eps H_1$ be the perturbed Hamiltonian. We introduce the auxiliary function $\mathcal H_\eps=\frac{r^2}{L_z}(H_0+\eps H_1+\frac12)$. If we use $\mathcal H_\eps$ as the new Hamiltonian, then its Hamiltonian flow on the level set $\mathcal H_\eps=0$ is a time reparametrization of that of $H_\eps$ on the energy level set $H_\eps=-\frac12$. We denote by $\lambda$ the time for the Hamiltonian $\mathcal H_\eps$, and by $t$ that of $H_\eps$, then we have $\frac{r^2}{L_z}d\lambda= dt$ for orbits with $\mathcal H_\eps=0$ and $H_\eps=-1/2$.

 We next denote $h_0(r,p_r)=\frac12\frac{r^2}{L_z}(-\al E^2+\al p_r^2+r^{-2}L^2+1)=\frac{r^2}{L_z}(H_0+\frac12)$ the Hamiltonian governing the radial motion and $\mathcal H_\eps=h_0+\eps \mathcal H_1$ with $\mathcal H_1=\frac{r^2}{L_z}H_1$.

Denote by $\Lambda$ the 5 dimensional homoclinic manifold  $\{(\Gamma(\nu),\Xi_0),\ \nu\in \R,\ \Xi_0\in T^*\mathbb S^2\}$ to the unperturbed NHIM $N$. Then we pick a line $\ell_{\nu,\Xi_0}$ normal to $\Lambda$ at the point $(\Gamma(\nu),\Xi_0)$ and parallel to $Dh_0(\Gamma(\nu),\Xi_0)$.  Let $z^{u/s}_\eps$ be the intersection of the stable and unstable manifolds $W^{u/s}(N_\eps)$ with $\ell_{\nu,\Xi_0}$. 

We use $\Phi^\lambda_\eps$ the flow for the perturbed system $\mathcal H_\eps=h_0+\eps \mathcal H_1$ with time reparametrized by the Mino time. We use the difference $h_0(z^u_\eps)-h_0(z^s_\eps)$ as a measurement of the splitting of the stable and unstable manifolds. We thus have the following by Newton-Leibniz
$$h_0(z^u_\eps)-h_0(\Phi^{\mu}_\eps(z^u_\eps))=\eps \int_{\mu}^0\{h_0,\mathcal H_1\}(\Phi^\lambda_\eps(z^u))\,d\lambda.$$

By the smooth dependence on the base point of the stable and unstable fiber, we get that $|z^{u/s}_\eps-z_0|=O(\eps)$ and that for all $\lambda<0$, the orbit $\Phi^\lambda(z^{u})$ on the perturbed unstable fiber lies within $O(\eps)$ distance of the unperturbed unstable manifold $\{(\Gamma(\nu+\lambda),\Xi(\Xi_0,\lambda)),\ \lambda<0\}$. In particular, for each $\mu<0$, there exists $\mu'$ such that $|(\Gamma(\nu+\mu'),\Xi(\Xi_0,\mu'))-\Phi^\mu(z^{u})|= O(\eps)$, and we have $\mu'\to -\infty$ when $\mu\to-\infty$.  Next, by $\Gamma(\lambda)$'s exponential convergence to $\Gamma_c$, we get that $|\Gamma(\mu)-\Gamma_c|=O(\eps^{d_1})$ for $\mu=c\log\eps$, for some $0<d_1<1$ by choosing $c>0$ small and $|\Gamma(\mu')-\Gamma_c|=o_\eps(1)$. Also, by Gronwall estimate, we get that over the time interval $\lambda\in [c\log\eps,0]$, we have $|(\Gamma(\nu+\lambda),\Xi(\Xi_0,\lambda))-\Phi^\lambda(z^{u})|= O(\eps^{1-d_2})$ for some $0<d_2<1$ provided $c$ is small.

We next choose $\mu=c\log \eps$ and replace  the integral $\int_{\mu}^0\{h_0,\mathcal H_1\}(\Phi^\lambda_\eps(z^u))\,d\lambda$ by $\int_{\mu}^0\{h_0,\mathcal H_1\}(\Gamma(\nu+\lambda),\Xi(\Xi_0,\lambda))\,d\lambda$ creating an error $O(\eps^{2-d_2}\log\eps)=o(\eps)$. Next, we get $\int^{\mu}_{-\infty}\{h_0,\mathcal H_1\}(\Gamma(\nu+\lambda),\Xi(\Xi_0,\lambda))\,d\lambda=O(\eps^{1+d_1})$ by the exponential convergence of $\Gamma(\lambda)$ to $\Gamma_c$.
Moreover, we replace $h_0(\Phi^{\mu}_\eps(z^u_\eps))$ by $h_0(\Gamma(\nu+\mu'))$ creating an error estimated as
$|Dh_0(\gamma(\mu'+\nu))|(|\Phi^{\mu}_\eps(z^u_\eps))-\gamma(\mu'+\nu)|)=o(\eps). $

We thus obtain
$$h_0(z^u_\eps)-h_0(\gamma(\mu'+\nu))=\eps \int_{-\infty}^0\{h_0,\mathcal H_1\}(\Gamma(\nu+\lambda),\Xi(\Xi_0,\lambda))\,d\lambda+o(\eps).$$
Similarly, we get
$$h_0(\gamma(-\mu''+\nu))-h_0(z^s_\eps)=\eps \int^{\infty}_0\{h_0,\mathcal H_1\}(\Gamma(\nu+\lambda),\Xi(\Xi_0,\lambda))\,d\lambda+o(\eps).$$
Adding the expressions, we get
$$h_0(z^u_\eps)-h_0(z^s_\eps)=\eps \int_\R\{h_0,\mathcal H_1\}(\Gamma(\nu+\lambda),\Xi(\Xi_0,\lambda))\,d\lambda+o(\eps).$$
When the integral on the RHS has a nondegenerate zero, then the stable and unstable manifolds $W^{u/s}(N_\eps)$ intersect transversally. 

We next prove the formula for the scattering map. Let $\tilde w$ be an intersection point of $W^u(N_\eps)$ and $W^s(N_\eps)$ and $\Phi_\eps^\lambda$ the perturbed flow, then by the theorem of NHIM, we find points $w^\pm\in N_\eps$ such that
\begin{equation*}\label{Eqzpm}
|\Phi_\eps^\lambda (w^\pm)-\Phi_\eps^\lambda (\tilde w)|\to 0,\quad  t\to\pm\infty.
\end{equation*} The scattering map thus sends $w^-$ to $w^+$ by definition.

We have the following equality by Newton-Leibniz
$$L_z(\tilde w)-L_z(\Phi^\mu_\eps\tilde w)=\eps\int_{\mu}^0\{L_z,H_1\}\frac{dt}{d\lambda}(\Phi^\lambda_\eps \tilde w)\,d\lambda,$$
$$L_z(w^-)-L_z(\Phi^\mu_\eps w^-)=\eps\int_{\mu}^0\{L_z,H_1\}\frac{dt}{d\lambda}(\Phi^\lambda_\eps w^-)\,d\lambda.$$
 Using the same argument as above, we replace the integrals on the RHS by integrals along the unperturbed flows $(\Gamma(\nu+\lambda), \Xi(\Xi_0,\lambda))$ and $(\Gamma_c(\nu+\lambda), \Xi(\Xi_0,\lambda))$ respectively.  To approximate $\Phi^\lambda_\eps w^-$ by $(\Gamma_c(\nu+\lambda), \Xi(\Xi_0,\lambda))$, we apply Gronwall over the time interval $[\mu,0]$ where $\mu=c\log\eps$ to get the estimate of the error $\eps^{1-d_2}$ for some $0<d_2<1$ if $c>0$ is chosen sufficiently small.

 Since we have $|\Phi^\mu_\eps w^--\Phi^\mu_\eps\tilde w|\to 0$ exponentially, we get by taking difference
$$L_z(\tilde w)-L_z(w^-)=\eps\int_{-\infty}^0\left[\partial_{\varphi}H_1 \frac{dt}{d\lambda}(\Gamma(\nu+\lambda), \Xi(\Xi_0,\lambda))-\partial_{\varphi}H_1 \frac{dt}{d\lambda}(\Gamma_c, \Xi(\Xi_0,\lambda))\right]\,d\lambda+o(\eps),$$
Similarly, we get
$$L_z(w^+)-L_z(\tilde w)=\eps\int^{\infty}_0\left[\partial_{\varphi}H_1 \frac{dt}{d\lambda}(\Gamma(\nu+\lambda), \Xi(\Xi_0,\lambda))-\partial_{\varphi}H_1 \frac{dt}{d\lambda}(\Gamma_c, \Xi(\Xi_0,\lambda))\right]\,d\lambda+o(\eps).$$
Adding up the last two equations, we get
$$L_z(w^+)-L_z( w^-)=\eps\int_\R\left[\partial_{\varphi}H_1 \frac{dt}{d\lambda}(\Gamma(\nu+\lambda), \Xi(\Xi_0,\lambda))-\partial_{\varphi}H_1 \frac{dt}{d\lambda}(\Gamma_c, \Xi(\Xi_0,\lambda))\right]\,d\lambda+o(\eps).$$
This completes the proof.
\end{proof}
\begin{Rk}\label{RkK}
\begin{enumerate}
\item We do not have an expression for the scattering map in the $(\theta,p_\theta)$-components, since $p_\theta$ is not a conserved quantity for the unperturbed system.  If we introduce action-angle coordinates $(\al_\theta, J_\theta)$ in the $(\theta,p_\theta)$ components, then we have similar expression for the scattering map.
\item The proof of Theorem \ref{ThmScatteringK} is completely analogous. If the Kerr Hamiltonian is perturbed to $H=H_0+\eps H_1$. We introduce (see \eqref{EqHamK}) \begin{equation}\label{EqHamKR}\mathcal H_\eps=H_r+H_\theta+\Sigma+2\eps \Sigma H_1=\Delta p_r^2-\frac{R}{\Delta}+Q+2\eps \Sigma H_1.\end{equation} Then we choose $h_0=\Delta p_r^2-\frac{R}{\Delta}+Q$ and $\mathcal H_1=2  \Sigma H_1$ in the above proof, then the proof goes through for the Kerr case.
\end{enumerate}
\end{Rk}
\section{The theorem of normally hyperbolic invariant manifolds}\label{SNHIM}
In this section we give the version of normally hyperbolic invariant manifold theorem that we used in the proof of our main theorem. The standard references are \cite{HPS, F}. 
\begin{Def}[NHIM]\label{DefNHIM} Let $N\subset M$ be a submanifold $($maybe noncompact$)$ invariant under $f$, $f(N) = N$. We say that $N$ is a normally hyperbolic invariant manifold if there exist a constant $C > 0$, rates $0 <\lb <\mu^{-1} < 1$ and an invariant $($under $Df)$ splitting for every $x \in N$
\[T_x M = E^s(x) \oplus E^u(x) \oplus T_xN
\]
in such a way that
\begin{equation}\nonumber
\begin{aligned}
v\in E^s(x)\quad &\Leftrightarrow\quad  |Df^n(x)v| \leq C\lb^n |v|, \quad n \geq 0,\\
v\in E^u(x)\quad &\Leftrightarrow \quad |Df^n(x)v| \leq C\lb^{|n|} |v|, \ \ n \leq 0,\\
v\in T_xN\quad &\Leftrightarrow \quad |Df^n(x)v| \leq C\mu^n |v|, \quad n \in \Z.
\end{aligned}
\end{equation}
Here the Riemannian metric $|\cdot|$ can be any prescribed one, which may change the constant $C$ but not $\lambda,\mu$.
\end{Def}

\begin{Thm}\label{ThmNHIM}Suppose $N$ is a NHIM under the $C^r$, $r>1$, diffeomorphism $f:\ M\to M$. Denote $\ell=\min\{r,\frac{|\ln\lambda|}{|\ln\mu|}\}$.
Then for any $C^r$ $f_\epsilon$ that is sufficiently close to $f$ in the $C^1$ norm,
\begin{enumerate}
\item  there exists a NHIM $N_\epsilon$ that is a $C^\ell$ graph over $N$,
\item$(${\it Invariant splitting}$)$ There exists  a splitting for $x\in N_\epsilon$
\begin{equation*}\label{EqSplitting}T_x M=E^u_{\epsilon}(x)\oplus E^{s}_{\epsilon}(x)\oplus T_xN_\epsilon
\end{equation*}
invariant under the map $f_\epsilon$.  The bundle $E^{u,s}_{\epsilon}(x)$ is $C^{\ell-1}$ in $x$.
\item There exist $C^\ell$ stable and unstable manifolds $W^s(N_\epsilon)$ and $W^u(N_\epsilon)$ that are invariant under $f$ and are tangent to $E_\epsilon^s\oplus TN_\epsilon$ and $E_\epsilon^u\oplus TN_\epsilon$ respectively.
\item  The stable and unstable manifolds $W^{u,s}(N_\epsilon)$ are fibered by the corresponding stable and unstable leaves $W^{u,s}_{x,\epsilon}:=\{y\ | d(f_\eps^n(y),f_\eps^n(x))\to 0, n\to +\infty\ \mathrm{for}\ s, n\to -\infty\ \mathrm{for}\ s\}$: $$W^{u}(N_\epsilon)=\cup_{x\in N_\epsilon} W_{x,\epsilon}^u,\quad W^{s}(N_\epsilon)=\cup_{x\in N_\epsilon} W_{x,\epsilon}^s.$$ 
\item The maps $x\mapsto W_{x,\epsilon}^{u,s}$ are $C^{\ell-j}$ when $W_{x,\epsilon}^{u,s}$ is given $C^j$ topology.
\item \ $($\cite{DLS08}$)$If $f$ and $f_\epsilon$ are Hamiltonian, then $N_\epsilon$ is symplectic and the map $f_\epsilon$ restricted to $N_\epsilon$ is also Hamiltonian.
\end{enumerate}
\end{Thm}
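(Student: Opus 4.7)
The plan is to follow the classical graph transform strategy of Hirsch-Pugh-Shub, adapted to this (possibly non-compact) setting by Fenichel. First I would fix a tubular neighborhood diffeomorphism identifying a neighborhood of $N$ in $M$ with a neighborhood of the zero section of the normal bundle $\pi\colon E \to N$, where $E = E^u \oplus E^s$ via the unperturbed splitting. In these coordinates, $f_\epsilon$ restricts to $N$ as a $C^1$-small perturbation of $f|_N$ and acts on each normal fiber as a $C^1$-small perturbation of $Df|_{E_x}$. Let $\Sigma$ be the complete metric space of $C^0$ sections $\sigma\colon N\to E$ that are $C^0$-small and uniformly Lipschitz with Lipschitz constant inside a fixed small cone. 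Pushing the graph of $\sigma$ forward by $f_\epsilon$ and reprojecting via $\pi$ defines the graph transform $\Gamma_\epsilon\colon \Sigma\to\Sigma$. The key cone estimate, using the rates $\lambda < \mu^{-1} < 1$, shows that $\Gamma_\epsilon$ preserves $\Sigma$ and contracts the $C^0$-norm once $f_\epsilon$ is sufficiently $C^1$-close to $f$; its unique fixed point $\sigma_\epsilon$ produces $N_\epsilon$ as a continuous graph.

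To upgrade $\sigma_\epsilon$ to a $C^\ell$ section with $\ell = \min\{r,|\ln\lambda|/|\ln\mu|\}$ I would invoke the fiber contraction theorem of Hirsch-Pugh-Shub on the formal derivative bundles: at order $k$, the graph transform on candidate $k$-th derivatives carries a contraction ratio controlled by $\mu^k \lambda$, hence genuinely contracts precisely when $k<|\ln\lambda|/|\ln\mu|$. A parallel graph transform on the Grassmannian bundle of subspaces of $TM|_{N_\epsilon}$ transverse to $E^u$ (resp.\ $E^s$) yields the invariant subbundles $E^s_\epsilon$ (resp.\ $E^u_\epsilon$); one derivative is consumed in differentiating the linearization, giving the $C^{\ell-1}$ regularity in (2). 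For (3), the same graph transform, now performed over $N_\epsilon$ with model bundle $E^s_\epsilon \oplus TN_\epsilon$ (resp.\ $E^u_\epsilon \oplus TN_\epsilon$) inside the tubular neighborhood, produces the $C^\ell$ invariant manifolds $W^s(N_\epsilon)$ and $W^u(N_\epsilon)$ with the asserted tangency at $N_\epsilon$.

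For the fibrations in (4) and (5), I would argue fiberwise. For each $x\in N_\epsilon$ define $W^s_{x,\epsilon} = \{y : d(f_\epsilon^n y, f_\epsilon^n x)\to 0\text{ as }n\to+\infty\}$; the strong stable manifold theorem, applied to $f_\epsilon$ along the base orbit $\{f_\epsilon^n x\}$, produces $W^s_{x,\epsilon}$ as a $C^r$ disk tangent to $E^s_\epsilon(x)$. That these leaves foliate $W^s(N_\epsilon)$ and depend smoothly on the basepoint is again a fiber contraction argument, but now on a bundle whose sections are the candidate leaves: each derivative with respect to $x$ costs one factor of the tangential rate $\mu$ against the contraction budget $\lambda$, producing the trade-off $C^{\ell-j}$ in base for $C^j$ in fiber. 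Item (6) I would quote directly from \cite{DLS08}: hyperbolicity combined with symplecticity forces $E^u\oplus E^s$ to be the symplectic complement of $TN$, so $\omega|_N$ is nondegenerate; this persists under $C^1$-small perturbation by continuity of the pullback, and the restricted dynamics is Hamiltonian by a Darboux-type computation.

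The main obstacle is not the existence of the fixed point but the sharpness of the regularity bound $\ell=\min\{r,|\ln\lambda|/|\ln\mu|\}$. This bound is tight precisely because each additional base derivative consumes one factor of the tangential rate $\mu$ against the exponentially-contracting budget, and encoding this trade-off in a single clean fiber contraction statement is the technical heart of Hirsch-Pugh-Shub. Since the full argument is textbook material, the plan is to cite \cite{HPS,F} and \cite{DLS08} and import the conclusion rather than redo the regularity analysis in detail; the appendix exists essentially to record the precise statement and the rates, which are then invoked in the main body when perturbing the photon shell $N$ to $N_\epsilon$.
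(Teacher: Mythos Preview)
The paper does not prove this theorem at all: it is stated in Appendix~\ref{SNHIM} purely as a quoted result, with the references \cite{HPS,F} (and \cite{DLS08} for item~(6)) given in lieu of proof. Your final paragraph already recognizes this --- that the appendix exists to record the statement and rates for use in perturbing the photon shell --- so your plan to cite rather than reprove is exactly what the paper does; the graph-transform sketch you outline is a correct summary of the standard argument, but it goes beyond what the paper itself supplies.
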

\section{Perturbing the Schwarzschild metric}\label{SReggeWheeler}

In this section, we verify the nondegeneracy condition in the above Theorem \ref{ThmGLS} and \ref{ThmOsc} for the perturbed Schwarzschild metric which satisfies the linearized Einstein field equation in the vacuum.
\subsection{Regge-Wheeler equation}
We first summarize the result from \cite{RW}.
Suppose the background metric $g_{\mu\nu}$ is perturbed by $h_{\mu\nu}$. Then the perturbed Ricci curvature has the form $R_{\mu\nu}+\dt R_{\mu\nu}$. The vacuum Einstein equation requires that $\dt R_{\mu\nu}=-\dt \Gamma^\beta_{\mu\nu;\beta}+\dt \Gamma^\beta_{\mu\beta;\nu}=0,$ where ``$;$" means the covariant derivative and the linearized Christoffel symbol has the form
$$\dt \Gamma^\beta_{\mu\nu}=\dfrac12g^{\al\nu}(h_{\beta\nu;\gamma}+h_{\gamma\nu;\beta}-h_{\beta\gamma;\nu}).$$
So the equation $\dt R_{\mu\nu}=0$ is linear in $h$.

We then expand $h$ into Fourier series in terms of spherical harmonic functions $$h_{\mu\nu}=\sum \tilde h_{\mu\nu,\ell}^{m}(T, r)Y_{\ell}^m(\theta,\phi).$$
Here $Y_{\ell}^m$ is the spherical harmonic function defined as
$$Y_{\ell}^m(\theta,\phi)=\sqrt{\frac{2\ell+1}{4\pi} \frac{(\ell-m)!}{(\ell+m)!}} P_\ell^m(\cos\theta) e^{im\phi} $$
and $P_\ell^m$ is the associated Legendre polynomial
$$P_\ell^m(x)=(-1)^m(1-x^2)^{m/2}\frac{d^m}{dx^m}P_\ell(x)=\frac{(-1)^m}{2^\ell \ell!}(1-x^2)^{m/2}\frac{d^{\ell+m}}{dx^{\ell+m}}(x^2-1)^\ell$$
 where $P_\ell$ is the unassociated Legendre polynomial. We list a few to be used later
 \begin{equation}
 \begin{aligned}
 P_0&=1,\quad P_1=x,\quad P_2=\frac12(3x^2-1),\\
 P_n^0&=P_n,\quad
 P_1^1=(1-x^2)^{1/2},\\
 P_2^1&=-3x(1-x^2)^{1/2},\quad P_2^2=3(1-x^2).
 \end{aligned}
 \end{equation}
 The linearity of the equation allows us to consider only each individual Fourier mode $k,\ell$. Moreover, we also assume that the $t$-dependence is quantized, i.e. $h$ is periodic with respect to $t$ with frequency $\omega=k c$, $\omega t=kT$. Then we further define $$\tilde h_{\mu\nu,\ell}^{m}(T, r)=\sum_{k}\hat h_{\mu\nu,\ell}^{m}(r) e^{ikT}. $$
There are two cases: the odd parity and even parity cases. 

 After modulo a gauge transformation, the  case of odd parity has the general form
$$h_{\mu\nu}=\left[\begin{array}{cccc}
0&0&-h_0\frac{1}{\sin\theta}\partial_\varphi Y_\ell^m&h_0\sin\theta\partial_\theta Y_\ell^m\\
0&0&-h_1\frac{1}{\sin\theta}\partial_\varphi Y_\ell^m&h_1\sin\theta\partial_\theta Y_\ell^m\\
*&*&0&0\\
*&*&0&0
\end{array}\right] ,$$
where $h_0=h_0(\tau,r)$ and $h_1=h_1(\tau,r)$ are periodic in $\tau$. In both the odd and even parity cases, the radial functions are independent of $m=-\ell,-(\ell-1),\ldots, \ell-1,\ell$, for each given $\omega$ and $\ell=0,1\ldots$. 
Moreover, there are equations that should be satisfied by the radial functions. 
$$Q(r)=\frac{1}{r}\al h_1(r), \quad \al^{-1} k h_0+\frac{d}{dr} (rQ)=0.$$
Then $Q$ satisfies the equation called Regge-Wheeler equation
$\frac{d^2Q}{(dr^*)^2}+k_{\mathrm{eff}}^2(r)Q=0, $
where $r^*=r+2M(\ln(\frac{r}{2M}-1))$ and
$$k_{\mathrm{eff}}^2=k^2-L(L+1)\frac{\al}{r^2}+ 6M \al\frac{1}{r^3}=k^2+\frac{1}{r^2}\al \left(-L(L+1)+ \frac{6M }{r}\right). $$

We next consider the case of even parity. The general even parity perturbation has the form after the gauge transformation
$$h_{\mu\nu}=\left[\begin{array}{cccc}
\al HY_\ell^m&H_1 Y_\ell^m&0 &0\\
H_1 Y_\ell^m&\al^{-1} HY_\ell^m& 0&0\\
0&0 & r^2KY_\ell^m& 0\\
0&0&0& r^2K\sin^2\theta Y_\ell^m
\end{array}\right]$$
where $H,H_1,  K $ are functions of $\tau$ and $r$ and the dependence on $\tau$ is $1$-periodic. Moreover, if we consider only the stationary case $k=0$, then $H_1=0$.


The functions $K$ and $H$ solve  the equations   \begin{equation*}
 \begin{cases}
&(K+H)'+\frac{2M}{r^2}\al^{-1} H=0,\\
&H'+\frac{2M}{r^2}\al^{-1} H+\frac{2H}{r}+\frac{1}{2M}(L-1)(L+2)(K+H)=0.
 \end{cases}
 \end{equation*}

 The solution for $\ell=0$ is trivial. For $\ell=1$, the solution is given explicitly as
\begin{equation}\label{Eql=1}H=-\frac{2M}{r^2\al},\quad K=\frac{1}{M}\left(\frac{2M}{r}+\ln (1-\frac{2M}{r})\right). \end{equation}
 For $\ell>1$, we eliminate $K+H$ to obtain an equation for $A=r(r-2M) H$
 $$A''-\left(\frac1r+\frac{1}{r-2M}\right)A'-(\ell-1)(\ell+2)\frac{A}{r(r-2M)}=0. $$

The general solution can be written as
$$A(r)=a P_\ell^{(2)}(1-\frac{r}{M})+b Q_\ell^{(2)}(1-\frac{r}{M}),\quad a,b\in \R,$$
where $P_\ell^{(2)},\ Q_\ell^{(2)}$ are associated Legendre polynomials of the first and second kinds respectively.

 We take $a=M/3,b=0,$ and $\ell=2$ to yield the solution
 \begin{equation}\label{Eql>1}H(r)=\frac{1-(1-r/M)^2}{r(r-2M)}M=-\frac{1}{M},\quad K(r)=\frac1M+\frac{1}{2r}+\frac{1}{2(r-2M)}\end{equation}
 with  $Y_{2}^2=\frac34\sqrt{\frac{5}{6\pi}}\sin^2\theta e^{2i\varphi}$.

 \subsection{Checking the Melnikov condition for Arnold diffusion}\label{AppMelnikovAD}
  In this section, we verify the nonvanishing condition of the Melnikov function for the above constructed perturbations. 
 Suppose the perturbed metric is
 $$\tilde g_{\mu\nu}=g_{\mu\nu}+\eps h_{\mu\nu}=\al(-1+\eps \tilde H)d\tau^2+\al^{-1}(1+\eps \tilde H)dr^2+r^2(1+\eps \tilde K)(d\theta^2+\sin^2\theta d\varphi^2). $$

 We treat $\frac{1}{2}\tilde g_{\mu\nu}$ as the perturbed Lagrangian $L_\eps$. Performing a formal Legendre transform
We obtain the perturbed Hamiltonian
\begin{equation*}
\begin{aligned}
2H_\eps&=\al^{-1}(-1+\eps \tilde H)^{-1}p_\tau^2+\al(1+\eps \tilde H)^{-1}p_r^2+r^{-2}(1+\eps \tilde K)^{-1}(p_\theta^2+\sin^{-2}\theta p_\varphi^2)\\
&=2H_0-\eps\left(\al^{-1} \tilde Hp_\tau^2+\al \tilde Hp_r^2+r^{-2}\tilde K(p_\theta^2+\sin^{-2}\theta p_\varphi^2)\right)+O(\eps^2)\\
&:=2H_0+2\eps H_1+O(\eps^2) .
\end{aligned}
\end{equation*}

Next, we show that an explicit perturbation satisfies the assumption of Theorem \ref{ThmGLS}.
\begin{proof}[Checking the assumption of Theorem \ref{ThmGLS}]
We choose $H,K$ as in \eqref{Eql>1} and denote $\tilde H=HY_2^2,\ \tilde K=K Y_2^2.$
 We have
 $$H_1=e^{2i\varphi }\sin^2\theta\left(-\al^{-1} \frac{E^2}{M}-\al \frac{p_r^2}{M}+r^{-2} KL^2\right).$$
 By the total energy conservation along the homoclinic orbit in the unperturbed system, we have
 $$-1=2H=-\al^{-1} E^2+\al p_r^2+r^{-2} L^2,\quad -1=-\al^{-1}(r_-) E^2+r_-^{-2} L^2.$$

Thus we simplify the expression of $H_1$ as follows by eliminating $p_r$
$$H_1=e^{2i\varphi }\sin^2\theta\left(\frac1M-2 \frac{E^2}{\al M}+r^{-2} (K+\frac1M)L^2\right).$$
 We get the Melnikov potential from \eqref{EqMelnikov}
 \begin{equation}\label{EqJS}
 \begin{aligned}
 J(\nu)&=\frac{1}{L} \int_\R \sin^2\theta(\lambda-\nu) e^{2i(\varphi(0)+\frac{\lambda-\nu}{L^2} )}\left(-\frac{2E^2}{M}\left( \frac{r(\lambda)^2}{\al(r(\lambda)) }- \frac{r_-^2}{\al(r_-) }\right)\right.\\
 &\left. +\frac1M\left(r(\lambda)^2-r_-^2\right)+L^2 \left(\frac{1}{r(\lambda)}+\frac{1}{r(\lambda)-2M}-\frac{1}{r_-}-\frac{1}{r_--2M}\right)\right)\,d\lambda.
 \end{aligned}
 \end{equation}
 where $\sin^2\theta$ can be expressed using \eqref{EqVertical} and $r=\frac{1}{u}$ using \eqref{EqRadial}. We rewrite the above $J(\nu)$ as
 $$J(\nu)=\frac{1}{L} \int_\R  \left(1-\left(\cos\theta(0)+\frac{\sqrt{b^2-1}}{b}\sin(b(\lambda-\nu))\right)^2\right)e^{2i(\varphi(0)+\frac{\lambda-\nu}{L^2} )}Q(r(\lambda))\,d\lambda,$$
 where we use $Q(r(\lambda))$ to denote the $r$-dependent term in the integrand of \eqref{EqJS}.  When we complexify $\lambda$, the only poles of the integrand are poles of $r(\lambda), r(\lambda)^{-1}$ and $(r(\lambda)-2M)^{-1}$ where $$r(\lambda)=\left(u_1-a^2\mathrm{sech}^2(\frac{a\lambda}{2})\right)^{-1},\ \lambda\in \C.$$

    We introduce our initial contour of integration as $C=C_-\cup C_+$ where $C_+=\{\mathrm{Im}z=\frac{i\pi}{a}\}$ oriented from left to right and $C_-=\{\mathrm{Im}z=-\frac{i\pi}{a}\}$ oriented from right to left.  To simplify the calculation, we consider first  the  extremal case $b^2\to 1$ so that $\sin^2\theta(\lambda-\nu)\to \sin^2\theta(0)$ and $\theta(0)\to \pi/2$. Then, we obtain the following contour integral
 $$J(\nu)=\frac{1}{L}e^{2i(\varphi(0)-\frac{\nu}{L^2}) } \frac{-2}{\sinh \frac{2\pi}{aL^2}}\oint_{C}e^{2i\frac{z}{L^2} }Q(r(z))\,dz.$$
We note that the general $b^2\neq 1$ case can be done similarly as follows with a more complicated factor in front of the contour integral.

We next evaluate the contour integral. To simplify the discussion, we normalize $M=1$ and also use an observation of \cite{Mo} by examining $a\to 0$ (we remark that the general $a\neq 0$ can be done directly with some more labors). This means the roots $u_2<u_1$ of the RHS of the $\frac{du}{d\lambda}$-equation in \eqref{EqHomo} nearly coincide (we have $a^2=u_1-u_2$). This means that $L^2$ is close to $L^2_{isco}=12$, $E^2$ is close to $E_{isco}^2=\frac{8}{9}$ and $u_1$ is close to $1/3$. Setting $z=\lambda-\frac{i\pi}{a}$, then we get
 $$u(z)=u_1+a^2\frac{1}{\sinh^2(a z/2)}\to \frac{1}{3}+\frac{4}{z^2}=\frac{z^2+12}{3z^2},\quad \mathrm{as}\ a\to 0.$$

 $Q(r)$ is a linear combination of terms $1,r^2, \frac{r^3}{r-2},r^{-1}$ and $\frac{1}{r-2}$. Note that $1, r^2, r^{-1}$ and $r^3$ do not contribute to the residue and we have in the limit $a\to 0$ that $\frac{1}{r-2}=\frac{z^2+12}{z^2-24}$ has residue $\pm \frac{9}{\sqrt 6}$ at $z=\pm \sqrt{24}$ and $\frac{r^3}{r-2}=\frac{27 z^6}{(z^2+12)^2}\frac{1}{z^2-24}$ has residue $\pm 9\sqrt 6$ at $z=\pm \sqrt{24}$.

  We then deform the contour $C$ to another bounded closed curve $\tilde C$ enclosing the poles. Since $u$ and the poles depend on $a$ continuously, for small $a\ne0$, the poles are still enclosed by $\tilde C$ and the residues also depends on $a$ continuously. Our goal is then to show that the contour integral $\frac{1}{2\pi i} \oint_{\tilde C}e^{2i\frac{z}{L^2} }Q(r(z))\,dz$ is nonvanishing for $a=0$. Indeed after  some calculations (substituting $Q(r)$ with $M=1,E^2=\frac{8}{9},L^2=12$), we have that sum of the residues is
  $$ e^{4i\frac{\sqrt{6}}{12} }(2 \sqrt6)- e^{-4i\frac{\sqrt{6}}{12} }(2 \sqrt6)\neq 0.$$
This completes the proof.
\end{proof}

\subsection{Checking the Melnikov condition for oscillatory motion}\label{AppOsc}

In this section, we show that an explicit perturbation satisfies the assumption of Theorem \ref{ThmOsc}.
\begin{proof}[Checking the assumption of Theorem \ref{ThmOsc}]
We choose $H,K$ as in \eqref{Eql=1} and denote $\tilde H=HY_1^1,\ \tilde K=K Y_1^1$ where $Y^1_1=\sin\theta e^{i\varphi}$ up to a scalar multiple.  We next show that the perturbation chosen in this way satisfies the assumptions of Theorem \ref{ThmOsc}.

We first verify assumption (1). This follows immediately from the expressions of $\tilde H$ and $\tilde K$. We next verify assumption (2). From the equations $$\dot\theta=\frac{\partial H_\eps}{\partial p_\theta}=r^{-2}(1+\eps \tilde K)^{-1}p_\theta,\quad \dot p_\theta=-\frac{\partial H_\eps}{\partial \theta},$$
we see that $\theta=\pi/2,p_\theta=0$ is a solution. Indeed, we have $\dot p_\theta=0$ if $\theta=\pi/2$ since $H_\eps$ depends on $\theta$ through $\sin\theta$ only so the $\frac{\partial H_\eps}{\partial \theta}$ has a factor $\cos\theta$ which vanishes when $\theta=\pi/2$. On the other hand when $p_\theta=0$ we also have $\dot \theta=0$. We next work on assumption (3). For this purpose we write down the expression for the perturbing Hamiltonian function $H_1$ first (using energy conservation as along the unperturbed homoclinic orbit to infinity)
$$H_1=e^{i\varphi}(-H+2\al^{-1} HE^2+r^{-2}(K-H)L^2).$$
Since we have the expression of the homoclinic orbit parametrized by $\lambda$ and $d\lambda=\frac{L}{r^2}dt$. We also have $r(t)\sim |t|^{2/3}$, which shows that $\lambda$ is actually defined on a bounded time interval $(a,b)$ to be determined more precicely below.  Then recalling that $\lim_{r\to\infty} r^2(-H+2\al^{-1} HE^2)=-2M(1-2E^2)$, we get the Melnikov potential
$$J(\nu)=\frac{1}{L} \int_a^b e^{i(\varphi_0+\frac{\lambda-\nu}{L^2} )}\left(r^2(-H(r)+2\al^{-1} H(r)E^2)+2M(1-2E^2)+(K(r)-H(r))L^2)\right)\,d\lambda$$
where $r$ is a function of $\lambda$.

To evaluate the function, we next work out $r(\lambda)$ in the range of large $L^2$. Then using the definitions in Section \ref{SSSOscHomo}, we find that $u_1u_2=\frac{2}{\ell}(\frac{1}{2M}-\frac{2}{\ell})=\frac{1}{L^2}$ is small. Then $\mu=\frac{M}{\ell}$ and $k=\frac{4\mu}{1-4\mu}$ are also small. For sufficiently small $k$, the elliptic integrals can be approximated as follows
\begin{equation*}
\begin{aligned}
F(\varphi,k)&=\int_0^\varphi\frac{d\theta}{\sqrt{1-k^2\sin^2\theta}}=\int_0^\varphi(1+\frac{k^2}{2}\sin^2\theta+O(k^4))\,d\theta\\
&=(1+\frac{k^2}{2})\varphi-\frac{k^2}{4}\sin 2\varphi+O(k^4),\\
K(k)&=(1+\frac{k^2}{2})\frac\pi2+O(k^4).\\
\lambda(\chi)&=\frac{2}{1-4\mu}\left[\frac12(1+\frac{k^2}{2})\chi-\frac{k^2}{4}\sin \chi+O(k^4)\right].
\end{aligned}
\end{equation*}
We can then find $\chi$ as a function of $\lambda$ as
$$\chi=c\lambda+\frac{k^2}{2+k^2}\sin (c\lambda)+O(k^4),$$
where $c=\frac{1-4\mu}{1+\frac{k^2}{2}}=1-4\mu+O(\mu^2)$ and the domain for $\lambda$ is $[a,b]=c^{-1}[-\pi+O(k^4),\pi+O(k^4)]$ whose endpoints correspond to $\chi=\pm\pi$ where $u=0$.
Then we get $$r(\lambda)=\frac{1}{u}=\frac{\ell}{1+\cos\chi}=\frac{\ell}{1+\cos\chi}=\frac{\ell}{1+\cos(c\lambda)-\frac{k^2}{2+k^2}\sin^2 (c\lambda)+O(k^4)}.$$

In \eqref{Eql=1}, we have the expansion for large $\ell$ and large $r$
\begin{equation*}
\begin{aligned}
K&=-\frac{8M}{r^2}+O(r^{-3})\\
 r^2(-H(r)+2\al^{-1} H(r)E^2)&=-2M(1-2E^2)-4M^2(1-2E^2)^2/r+O(r^{-2}).
\end{aligned}
\end{equation*}

Now the integral becomes (using $\ell^{-1}=O(L^{-2})=O(k)$)
\begin{equation*}
\begin{aligned}
J(\nu)&= \frac{-4M^2(1-2E^2)^2}{L} e^{i(\varphi_0+\frac{-\nu}{L^2})}\int_a^b e^{i\frac{\lambda}{L^2}} \left(\frac{1}{r}+O(r^{-2})\right)\,d\lambda\\
&=\frac{-4M^2(1-2E^2)^2}{L} e^{i(\varphi_0+\frac{-\nu}{L^2})}\frac{1}{\ell}\int_a^b e^{i\frac{\lambda}{L^2}} \left(1+\cos(c\lambda)\right)\,d\lambda+O(L^{-5}).\\
\end{aligned}
\end{equation*}
Letting $L\to \infty$, the integral becomes $\int_{-\pi}^{\pi}(1+\cos\lambda)\,d\lambda=2\pi$, so for large $L$, it is clear that the integral does not vanish.
\end{proof}

\section*{Acknowledgment}
The author would like to express his deep gratitudes to Professor S.-T. Yau for suggesting the topic. He also would like to thank Mr. Yifan Guo for drawing all the figures. The author is supported by NSFC (Significant project No.11790273) in China and Beijing Natural Science Foundation (Z180003).

\end{document}